\newif\ifSODA
\SODAfalse

\ifSODA
\documentclass[twoside,leqno,twocolumn]{article}  
\usepackage{ltexpprt} 
\else
\documentclass[11pt]{article}  
\usepackage{fullpage,amsthm}
\fi

\usepackage
{
        amssymb,
        amsmath,
        tikz,
        pgfplots,
}

\usepackage{graphicx}


\ifSODA
\makeatletter\def\@endproof{\hfill\vbox{\hrule height.2pt\hbox{\vrule width.2pt height5pt \kern5pt \vrule width.2pt} \hrule height.2pt}\outerparskip 0pt\endtrivlist}
\def\paragraph{\@startsection  
 {paragraph}{4}{\parindent}{5pt}{-5pt}{\normalsize\bf}}
\makeatother
\newcommand {\texorpdfstring} [2] {{#1}}
\else
\pdfoptionpdfminorversion=5
\usepackage[unicode,colorlinks=true,citecolor=black,filecolor=black,linkcolor=black,urlcolor=black,pdfpagelabels,plainpages=false]{hyperref}
\usepackage{bookmark}
\fi

\def\clap#1{\hbox to 0pt{\hss#1\hss}}

\def\mathrlap{\mathpalette\mathrlapinternal}

\def\mathrlapinternal#1#2{%
\rlap{$\mathsurround=0pt#1{#2}$}}




\newcommand {\set}   [1] {\left\{ #1 \right\}}
\newcommand {\brc}   [1] {\left(#1\right)}

\newcommand {\Exp}       {\mathbb{E}}
\newcommand {\Prob}  [1] {\Pr \brc{#1 }}

\newcommand {\E}     [1] {\Exp\left[#1\right]}





\ifSODA\else
\newtheorem{theorem}{Theorem}[section]
\newtheorem{lemma}[theorem]{Lemma}

\newtheorem{corollary}[theorem]{Corollary}
\newtheorem{Definition}[theorem]{Definition}

\newtheorem{fact}[theorem]{Fact}

\newtheorem*{question*}{Question}
\newtheorem*{definition*}{Definition}

\fi

\newtheorem{claim}{Claim}[section]
\newtheorem{remark}{Remark}[section]


\newcommand{\maxcut}{\textsc{Max Cut}~}
\newcommand{\N}{N}


\title{Bilu--Linial Stable Instances of Max Cut and Minimum Multiway Cut}
\author{Konstantin Makarychev\\Microsoft Research
\and Yury Makarychev\thanks{Supported by NSF CAREER award CCF-1150062 and NSF grant IIS-1302662. 
Work done in part while visiting Microsoft Research.}\\TTIC
\and Aravindan Vijayaraghavan\thanks{Supported by the Simons Postdoctoral Fellowship. Work done in part while visiting Microsoft Research.}\\CMU}
\date{}
\begin{document}
\maketitle
\begin{abstract}
We investigate the notion of stability proposed by Bilu and Linial. We obtain an \textit{exact} polynomial-time  algorithm for
$\gamma$-stable Max Cut instances with $\gamma \geq c\sqrt{\log n}\log\log n$ for some absolute constant $c > 0$. 
Our algorithm is robust: it never returns an incorrect answer;
if the instance is $\gamma$-stable, it finds the maximum cut, otherwise, it either finds the maximum cut or certifies that the instance is not $\gamma$-stable. We prove that there is no \textit{robust} polynomial-time algorithm for $\gamma$-stable instances of Max Cut
when $\gamma < \alpha_{SC}(n/2)$, where $\alpha_{SC}$ is the best approximation factor for Sparsest Cut with non-uniform demands. That suggests that solving $\gamma$-stable instances with $\gamma =o(\sqrt{\log n})$ might be difficult or even impossible.

Our algorithm is based on semidefinite programming. We show that the standard SDP relaxation for Max Cut 
(with $\ell_2^2$ triangle inequalities) is integral if 
$\gamma \geq D_{\ell_2^2\to \ell_1}(n)$, where $D_{\ell_2^2\to \ell_1}(n)$ is the least distortion with which every $n$ point metric space of negative type embeds into $\ell_1$. On the negative side, we show that the SDP relaxation is not integral when $\gamma < D_{\ell_2^2\to \ell_1}(n/2)$.
Moreover, there is no tractable convex relaxation for $\gamma$-stable instances of Max Cut
when $\gamma < \alpha_{SC}(n/2)$. 

Our results significantly improve previously known results. The best previously known algorithm for $\gamma$-stable instances of Max Cut 
required that $\gamma \geq  c\sqrt{n}$ (for some $c > 0$)  [Bilu, Daniely, Linial, and Saks]. No hardness results were known for the problem.

Additionally, we present an exact robust polynomial-time algorithm for $4$-stable instances of Minimum Multiway Cut.

We also study a relaxed notion of \textit{weak stability} and present algorithms for weakly stable instances of Max Cut and Minimum Multiway Cut.
\end{abstract}

\ifSODA\else
\setcounter{page}{0}
\thispagestyle{empty} 
\pagebreak
\fi

\section{Introduction} \label{sec:intro}
Empirical evidence suggests that many discrete optimization problems like clustering and partitioning
are much easier in practice than in the worst case. Even though these problems are usually provably 
hard in the worst case, we can still try to design algorithms that 
work well on instances that we encounter in practice. To do so, we need a good mathematical model 
for such instances. 

There are several approaches to modeling real-life instances. Perhaps, a more classical approach dating back
to early 1980's is to assume that real-life instances come from a random or ``semi-random'' distribution~\cite{BS,FKra,FK,KMM,MMV1,MMV2}. To learn more about this approach,
we refer the interested reader to our previous work~\cite{MMV1} on semi-random instances of graph partitioning and references therein.
An alternative approach, which we study here, is to identify certain structural properties that
``interesting'' instances (or ``practically interesting instances''~\cite{BDLS}) must satisfy, and then assume that instances arising in practice 
satisfy them. One such
property was proposed by Bilu and Linial~\cite{BL}.

Bilu and Linial~\cite{BL} introduced a notion of {\em stability of instances} for discrete optimization problems. They argue that interesting instances have stable solutions: the optimal solution does not change upon small perturbations. For example, a clustering instance that is meaningful should have a solution that stands out. This solution should remain optimal even if the edge weights are slightly inaccurate or noisy.
As Balcan, Blum and Gupta~\cite{BBG} argue, the real goal of solving a clustering problem is often to obtain the correct ``target''
clustering, and so the objective function serves only as a proxy. In this case, if the edge weights, which may be rough estimates of how similar or dissimilar the endpoints are, are imprecise, then the solution is meaningful only when the instance is stable. 
Here is a formal definition of $\gamma$-stability~\cite{BL}.

\begin{Definition}[$\gamma$-Stability~\cite{BL}]
Consider an instance of a graph optimization problem on $n$ vertices, defined by the matrix of non-negative edge weights $w$. We say that the instance is $\gamma$-stable if there is an optimal solution which remains optimal, even when any subset of the edge weights are increased by a factor of at most $\gamma$. 
\end{Definition} 

We note that prior to work of Bilu and Linial~\cite{BL}, Balcan, Blum and Gupta~\cite{BBG} introduced and studied a somewhat 
similar but different notion of \textit{approximation--stability} for clustering problems like $k$-means and $k$-median;
later Awasthi, Blum and Sheffet~\cite{Awasthietal}, Balcan and Liang~\cite{BalcanL}, and Reyzin~\cite{Rey} studied a related notion of \textit{perturbation resilience} for these center--based clustering problems.

We study stable instances of Max Cut and Minimum Multiway Cut, and propose a general technique for solving 
stable instances of graph partitioning problems (see Section~\ref{sec:Discussion}). However, to be more specific,
we focus  our exposition on Max Cut --- the problem that was previously studied by Bilu and Linial~\cite{BL} and Bilu, Daniely, Linial, and Saks~\cite{BDLS}.
In Max Cut, we are given a weighted graph $G(V,E,w)$ on $n$ vertices with an adjacency matrix $w$. Our goal is to find a cut $(S, V \setminus S)$ in the graph with the maximum weight of edges crossing it
$$\maxcut(G)= \max_{S \subseteq V(G)} \sum_{e \in E(S, V\setminus S)} w_e.$$
Max Cut is one of the classic NP-hard problems~\cite{GJ}. It is NP-hard to approximate within a factor of $17/16$ \cite{Hastad,TSSW}.
Goemans and Williamson~\cite{GW} gave a semidefinite programming based algorithm that achieves a $0.878$ approximation ratio for Max Cut. Khot, Kindler, Mossel and O'Donnell~\cite{KKMO} showed that this is the best possible approximation ratio in the worst-case, assuming the Unique Games Conjecture \cite{Khot}.

In the work that introduced $\gamma$-stability, Bilu and Linial~\cite{BL} designed an algorithm for $\gamma$-stable instances of Max Cut with $\gamma \geq cn$ (for some absolute constant $c$). The aim of 
the algorithm is to find the exact optimal solution. This solution is unique (because of stability) and corresponds to the ``true'' partitioning we want to find. Finding just a good approximation for $\gamma$-stable instances of Max Cut is easy, 
since $\gamma$-stable instances of Max Cut are almost bipartite, and for almost bipartite graphs, the algorithm of Goemans and Williamson~\cite{GW} 
returns a solution in which almost all edges are cut (see~\cite{BL} for more details). Bilu, Daniely, Linial, and Saks~\cite{BDLS} gave an algorithm for $\gamma$-stable Max Cut instances with $\gamma  \geq c \sqrt{n}$ (for some absolute constant $c$). Both papers also gave better algorithms for
stable instances that satisfy some extra conditions (see~\cite{BL} and~\cite{BDLS}). 
 
\paragraph{Our Results.} In this work, we give an algorithm that solves $\gamma$-stable instances of Max Cut with $\gamma \geq c\sqrt{\log n} \log\log n$ (for some absolute constant $c$). We also study the classic partitioning problem of finding the Minimum Multiway Cut (see Section~\ref{sec:muliway} for the definition) and give an algorithm for $\gamma$-stable instances of it, with $\gamma\geq 4$. 
Our result for Max Cut is an
exponential improvement over previous results. Our algorithms are {\em robust} (in the notion of 
Raghavan and Spinrad~\cite{RS}):
\begin{enumerate}
\item If the instance is $\gamma$-stable, the algorithm finds the unique optimal solution.
\item If the instance is not $\gamma$-stable, the algorithm either finds an optimal solution, or a (polynomial-time verifiable) certificate that proves that the instance is not $\gamma$-stable. 
\end{enumerate}   
In other words, our algorithm is always correct: when we claim to output the maximum cut (minimum multiway cut), 
we can guarantee its optimality; else we identify that the given graph is not sufficiently stable. This is a very desirable property for algorithms we want to use in practice,
since we may only assume that real-life or important instances are stable (or satisfy other properties), but we cannot be 
completely certain that they indeed are. When we use robust algorithms, we cannot get a suboptimal solution even if our assumptions are not quite correct.
Note that previous algorithms for stable instances of Max Cut~\cite{BL,BDLS}, and Clustering~\cite{BBG,BalcanL,Awasthietal} are not robust. That is, if the instance is not $\gamma$-stable, the previous algorithms can output a suboptimal solution without notifying us that the 
solution is suboptimal.

Our algorithms use that our SDP and LP relaxations for Max Cut and Minimum Multiway Cut, respectively, are integral
when $\gamma$ is sufficiently large. 
For Max Cut, we prove that the standard SDP relaxation with triangle inequalities is integral for $(c\sqrt{\log n}\log\log n)$-stable instances.
We remark that we are unaware of other natural settings when the semidefinite program becomes integral and the corresponding linear program does not! 

We also present algorithms that work for the same values of $\gamma$ with a more relaxed notion of stability 
which we call {\em weak stability}.
The optimal solution of every perturbed instance of a weakly stable instance, is close to the optimal solution
of the original instance, but may not be exactly the same (see Section~\ref{sec:weakstability} for details). We believe that $\gamma$-weak stability may be a more realistic assumption than $\gamma$-stability in practice. Bilu and Linial~\cite{BL} mentioned weakly stable instances in the introduction to their paper (without formally defining them), and proposed to study them in the future.
Our algorithms for $\gamma$-weakly stable instances are not robust.

Our result for weakly stable instances of Max Cut uses an approximation algorithm for Sparsest Cut with non-uniform demands
as a black box. In particular, the result implies that if there is an $\alpha(n)$-approximation algorithm for Sparsest Cut (which finds approximately not only the value but also a solution to Sparsest Cut) 
then there is an exact algorithm for $(1+\varepsilon) \alpha(n)$-stable instances of Max Cut. (For simplicity of exposition,
$\varepsilon$ is fixed in our proof; in general, $\varepsilon$ can be sub-constant, the running time
is proportional to $1/\varepsilon$.)

Finally, we present a general approach to solving stable instances of graph partitioning problems.

\paragraph{Negative Results.} We supplement our algorithmic results, by showing that any robust algorithm for Max Cut that works for better values of $\gamma$ would result in a similar improvement in the worst-case approximation for non-uniform Sparsest Cut. Moreover, we also show that the SDP is not integral if $\gamma < D_{\ell_2^2\to \ell_1}(n/2)$ (where $D_{\ell_2^2\to \ell_1}(n)$ is the least distortion with which every $n$-point $\ell_2^2$ space can be embedded into $\ell_1$). While our algorithmic results give algorithms for sufficiently stable instances, our reduction from non-uniform Sparsest Cut suggests that it may be hard to obtain robust algorithms for Max Cut that work for $O(1)$-stable instances. 
Finally, we describe a very strong negative result for the Max $k$-Cut problem when $k \geq  3$.
We show that for every function $\gamma(n)$ there is no exact polynomial-time algorithm for $\gamma(n)$-stable instances of the problem
unless $NP=RP$.

We note that our positive results for Max Cut also apply to the problem of clustering points into two clusters (or, equivalently, to Max Cut
with positive and negative weights). Our negative result for Max $k$-Cut, on the other hand, shows that there is no exact algorithm for
the problem of clustering points into $k$ clusters when $k\geq 3$ unless $RP=NP$ (see Appendix~\ref{sec:corrclust} for details). 

\subsection{Overview of Techniques}

The main technical component of our work is showing that the standard SDP relaxation with triangle inequalities
is {\em integral} for $\gamma$-stable instances of Max Cut, when $\gamma\geq D_{\ell_2^2\to \ell_1}(n)$, where
$D_{\ell_2^2\to \ell_1}(n) = O(\sqrt{\log n} \log\log n)$ is the least distortion with which every $n$-point $\ell_2^2$ space can be embedded into $\ell_1$.
Hence, when the instance is $\gamma$-stable (for sufficiently large $\gamma$), we can read off the optimal solution from the SDP; otherwise, the non-integrality of the SDP certifies that the instance is not $\gamma$-stable. This gives a {\em robust} algorithm for instances with $\gamma \geq c
\sqrt{\log n}\log \log n$.

Loosely speaking, we prove that if the SDP solution is not integral, then the integral optimal solution can be improved,
possibly after changing the weights of some \ifSODA\linebreak \newpage\noindent\fi edges.
Given the SDP solution $\{\bar{u}\}$ (a vector $\bar{u}$ for each vertex $u$) and the optimal integral solution $(S, V\setminus S)$,
we define a new configuration of vectors $\{\hat{u}\}$: we keep $\hat{u} = \bar{u}$ if $u\in S$; and replace $\bar{u}$
with $\hat{u} = -\bar{u}$ if $u\in V\setminus S$. If $\{\bar{u}\}$ is the integral solution corresponding to $(S, V\setminus S)$,
then all vectors $\hat{u}$ are equal. Otherwise, if the SDP is not integral, \textit{not all vectors $\hat{u}$ are equal}. Then we embed 
these vectors into $\ell_1$, and obtain a distribution
of cuts $(A', V\setminus A')$. It turns, out that if the distortion of the embedding was 1, then we would be able to 
improve the quality of the optimal solution by picking a random cut $(A', S\setminus A')$ and moving vertices in $S\cap A'$ to $V\setminus S$ and vertices in $(V\setminus S)\cap A'$ to $S$
(see Figure~\ref{fig:cuts}). Of course, the distortion may be much larger than 1. Then, we show how to compensate
for this distortion using the $\gamma$-stability.
We can first change the weights of some edges by a factor at most $D_{\ell_2^2\to \ell_1}(n)$, and only then 
move the vertices to improve the quality of the solution. Hence, we get a contradiction if $\gamma \geq D_{\ell_2^2\to \ell_1}(n)$.

Our algorithm for weakly stable instances starts with an approximate solution and then iteratively improves the quality of the solution 
using the algorithm for non-uniform Sparsest Cut by Arora, Lee, and Naor~\cite{ALN} as a subroutine.

\subsection{Outline}
In Section~\ref{sec:prelims}, we introduce the formal definitions of stability and some preliminaries including the semidefinite program (SDP) we use in our algorithm. Then, in Section~\ref{sec:robustalgo}, we describe our robust algorithm for $\gamma$-stable instances of Max Cut. 
In Section~\ref{sec:negative}, we present evidence suggesting that obtaining algorithms for better values of $\gamma$ may not be easy. We first give a reduction from non-uniform Sparsest Cut (in Section~\ref{sec:hardness}), which shows that any robust algorithm with better guarantees would lead to a similar improvement for non-uniform Sparsest Cut. Then we show in Section~\ref{sec:intgap} that the SDP is not integral for smaller values of $\gamma$. 
In Section~\ref{sec:muliway}, we present our algorithm for $4$-stable instances of Minimum Multiway Cut.
In Section~\ref{sec:weakstability}, we introduce a more general notion of weak stability and obtain similar guarantees in
this setting.

We describe our results for Max $k$-Cut and Correlation Clustering
in Sections~\ref{sec:max-k-cut} and Appendix~\ref{sec:corrclust}, respectively.
Finally, we outline a general approach for solving stable instances of graph partitioning problems in Section~\ref{sec:Discussion}. 


\section{Preliminaries}\label{sec:prelims}
We start with formally defining the notion of Bilu--Linial stability for Max Cut instances. Following~\cite{BL}, we give two equivalent
definitions (see Proposition 2.1 in~\cite{BL}).

\ifSODA
\pagebreak
\fi

\begin{Definition}[Bilu and Linial~\cite{BL}]\label{def:stability1}
\ifSODA \ \\[0.2em] \fi
Let $G = (V, E, w)$ be a weighted graph with edge weights $w(e)$ and let $\gamma > 1$. A weighted graph $G' = (V, E, w')$ is a $\gamma$-perturbation of $G$ 
if for every $(u,v) \in V$,
$$w(u,v) \leq w'(u,v) \leq \gamma \cdot w(u,v).$$

We say that $G$ is a $\gamma$-stable instance of Max Cut if there is a unique cut which 
forms a maximal cut for every $\gamma$-perturbation $G'$ of $G$.
\end{Definition}
\begin{Definition}[Bilu and Linial~\cite{BL}]\label{def:stability2}
Let $\gamma \geq 1$. A weighted graph $G$ graph with maximal cut $(S, \bar S)$ is $\gamma$-stable instance of Max Cut
if for every vertex set $T \neq S$ and $T\neq \bar S$:
$$w(E(S, \bar S)\setminus E(T, \bar T)) > \gamma\cdot w(E(T, \bar T )\setminus E(S, \bar S)).$$
\end{Definition}

We use the Goemans--Williamson SDP relaxation for Max Cut with additional $\ell_2^2$-triangle inequalities~\cite{GW}. 
In the SDP relaxation, we have an SDP vector variable $\bar u$ for every vertex $u\in V$. The intended SDP solution
corresponding to an integral solution $(S, \bar S)$ assigns a fixed unit vector $\bar e$ to variable $\bar u$ if $u\in S$, and
assigns $-\bar e$ to $\bar u$, otherwise.
\begin{align*}
\text{maximize }& \frac{1}{4} \sum_{(u,v)\in E} w_{uv} \|\bar u - \bar v\|^2\\
\text{subject to:}& \ifSODA\text{ for every } u,v,w\in V\fi\\
& \|\bar u\|^2 = 1 \ifSODA\else\quad \text{for every } u\in V,\fi\\
&\|\bar u-\bar v\|^2 + \|\bar v-\bar w\|^2 \geq \|\bar u-\bar w\|^2 \ifSODA\else\quad
  \text{for every } u,v,w\in V,\fi\\
&\|\bar u-\bar v\|^2 + \|\bar v+\bar w\|^2 \geq \|\bar u+\bar w\|^2 \ifSODA\else\quad
  \text{for every } u,v,w\in V,\fi\\
&\|\bar u+\bar v\|^2 + \|\bar v+\bar w\|^2 \geq \|\bar u-\bar w\|^2 \ifSODA\else\quad
  \text{for every } u,v,w\in V\fi.
\end{align*}
The last three constraints are $\ell_2^2$-triangle inequalities for the set of vectors $\{\pm\bar u:u\in V\}$.
Note that the intended solution satisfies all constraints, and its value equals the cost of the cut~$(S,\bar S)$.

We show that $\gamma$-stable instances of Max Cut are integral. The formal definition of an integral SDP is as follows.

\begin{Definition} 
Let $G$ be an instance of Max Cut.
We say that an SDP solution $\{\bar u\}$ 
is integral if  there exists a vector $\bar e$ such that $\bar u = \bar e$ or 
$\bar u = -\bar e$ for every $u\in V$.
We say that the SDP relaxation for $G$ is integral if every optimal SDP solution for $G$ is integral.
\end{Definition}
 
Our algorithm for $\gamma$-stable instances is robust in the sense of Raghavan and Spinrad~\cite{RS}:
it always returns a correct output regardless of whether the input is $\gamma$-stable or not.
\begin{Definition} An algorithm for $\gamma$-stable instances of Max Cut is robust if the following conditions hold.
\begin{itemize}
\item If the input instance is $\gamma$-stable, the algorithm must output a maximum cut.
\item If the input instance is not $\gamma$-stable, the algorithm must either output a maximum cut or a special symbol $\perp$
(which certifies that the instance is not $\gamma$-stable).
\end{itemize}
\end{Definition}

\paragraph{Metrics of Negative Type and Sparsest Cut with Non-Uniform Demands.} In the proof, we use some standard definitions from metric geometry.
\begin{Definition}
The Lipschitz constant $\|\varphi\|_{Lip}$ of a map $\varphi$ between two metric spaces $(X,d_X)$ and $(Y, d_Y)$ 
equals 
$$\|\varphi\|_{Lip} = \sup_{\substack{x, y \in X \\ x\neq y}} \frac{d_Y(\varphi(x),\varphi(y))}{d_X(x,y)}.$$
The distortion of an embedding $\varphi:X \hookrightarrow Y$ equals $\|\varphi\|_{Lip} \cdot \|\varphi^{-1}\|_{Lip}$
(the distortion is infinite if $\varphi$ is not injective).
\end{Definition}

\begin{Definition} A set $X \subset \ell_2$ is an $\ell_2^2$ space, if the  distance function $d(u,v) = \|u-v\|^2$ 
satisfies triangle inequalities:
$$d(u,v) + d(v,w) \geq d(u,w) \quad\text{for every } u,v,w \in X.$$
A metric space is of negative type if it is isometric to an $\ell_2^2$ space.
We denote the least distortion with which every $n$ point metric space of negative type embeds into $\ell_1$ by $D_{\ell_2^2\to \ell_1} (n)$ or just $D_{\ell_2^2\to \ell_1}$.
\end{Definition}

Arora, Lee and Naor~\cite{ALN} proved that $D_{\ell_2^2\to \ell_1}(n) = O(\sqrt{\log n}\log\log n)$; on the other hand,
Cheeger, Kleiner and Naor~\cite{CKN} showed that $D_{\ell_2^2\to \ell_1}(n) = \Omega(\log^c n)$ for some constant $c>0$, and it is 
widely believed that $D_{\ell_2^2\to \ell_1}(n) = \Omega(\sqrt{\log n})$ (that is, that the upper bound of Arora, Lee and Naor is
almost optimal).

Note that since every $\ell_1$ metric can be represented as a linear combination of cut metrics with non-negative coefficients, we have the following fact.
\ifSODA
\begin{fact}{\sc (Linial, London and Rabinovich~\cite{LLR}, Aumann and Rabani~\cite{AR})}
\else
\begin{fact}[Linial, London and Rabinovich~\cite{LLR}, Aumann and Rabani~\cite{AR}]
\fi
\label{fact:cut-embedding}
Let $X \subset \ell_2$ be an $n$-point $\ell_2^2$ space ($n>1$). 
Then there exists a distribution of random cuts $(A, \bar A = X\setminus A)$
and a scale parameter $\sigma > 0$ such that for every pair $(x,y)\in X \times X$
\ifSODA
\begin{align*}
\sigma \cdot{} & \Prob{(x,y) \text{ is separated by } (A,\bar A)} \leq 
\|x-y\|^2 \leq \\
&\sigma \cdot D_{\ell_2^2\to \ell_1}(n) \cdot \Prob{(x,y) \text{ is separated by } (A,\bar A)}.
\end{align*}
\else
$$\sigma \cdot \Prob{(x,y) \text{ is separated by } (A,\bar A)} \leq \|x-y\|^2 \leq \sigma \cdot D_{\ell_2^2\to \ell_1}(n) \cdot \Prob{(x,y) \text{ is separated by } (A,\bar A)}.$$
\fi
\end{fact}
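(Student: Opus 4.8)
The plan is to compose two classical ingredients: the embedding theorem of Arora, Lee and Naor (to map the negative-type metric into $\ell_1$) and the fact -- already recalled just above the statement -- that every $\ell_1$ metric is a nonnegative combination of cut metrics (to turn the $\ell_1$ image into a distribution over cuts). All of the distortion will be paid in the first step; the second is lossless.

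First I would apply the definition of $D_{\ell_2^2\to\ell_1}(n)$ to the $n$-point negative-type metric space $(X,d)$ with $d(x,y)=\|x-y\|^2$: there is an embedding $\varphi:X\hookrightarrow\ell_1$ of distortion at most $D:=D_{\ell_2^2\to\ell_1}(n)$. After multiplying $\varphi$ by a suitable positive scalar I may normalize the non-contracting side to have constant $1$, so that
$$\|x-y\|^2\ \le\ \|\varphi(x)-\varphi(y)\|_1\ \le\ D\cdot\|x-y\|^2\qquad\text{for all }x,y\in X.$$
Since $X$ is finite I may, after the standard reduction, take $\varphi$ to land in $(\mathbb{R}^m,\|\cdot\|_1)$ for some finite $m$ (an $n$-point subset of $\ell_1$ embeds isometrically into $\ell_1^m$).

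Next I would realize the $\ell_1$ metric on the image as a mixture of cut metrics of $X$ via threshold cuts: for a coordinate $i\in\{1,\dots,m\}$ and a level $t\in\mathbb{R}$ put $A_{i,t}:=\{x\in X:\varphi(x)_i>t\}$. For a single coordinate $|\varphi(x)_i-\varphi(y)_i|=\int_{\mathbb{R}}\indicator[x,y\text{ separated by }A_{i,t}]\,dt$, and the integrand vanishes outside an interval of length $L_i:=\max_x\varphi(x)_i-\min_x\varphi(x)_i$. Summing over $i$,
$$\|\varphi(x)-\varphi(y)\|_1=\sum_{i=1}^m\int_{\mathbb{R}}\indicator[x,y\text{ separated by }A_{i,t}]\,dt.$$
Let $M:=\sum_{i=1}^m L_i<\infty$. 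Define the random cut $(A,\bar A)$ by picking a coordinate $i$ with probability $L_i/M$, then $t$ uniformly in $[\min_x\varphi(x)_i,\max_x\varphi(x)_i]$, and setting $A:=A_{i,t}$. Then for every pair $(x,y)\in X\times X$,
$$\Prob{(x,y)\text{ is separated by }(A,\bar A)}=\frac{1}{M}\,\|\varphi(x)-\varphi(y)\|_1.$$

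Finally I would substitute this identity into the two distortion inequalities. The upper bound $\|\varphi(x)-\varphi(y)\|_1\le D\|x-y\|^2$ gives $\|x-y\|^2\ge(M/D)\,\Prob{(x,y)\text{ separated}}$, and the lower bound $\|\varphi(x)-\varphi(y)\|_1\ge\|x-y\|^2$ gives $\|x-y\|^2\le M\cdot\Prob{(x,y)\text{ separated}}$. Taking $\sigma:=M/D$ then yields
$$\sigma\cdot\Prob{(x,y)\text{ separated by }(A,\bar A)}\ \le\ \|x-y\|^2\ \le\ \sigma\cdot D_{\ell_2^2\to\ell_1}(n)\cdot\Prob{(x,y)\text{ separated by }(A,\bar A)},$$
which is the assertion. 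I do not expect any genuine obstacle here: all of the content sits in the Arora--Lee--Naor bound $D_{\ell_2^2\to\ell_1}(n)=O(\sqrt{\log n}\log\log n)$, which is quoted, while the only points needing a word of care are the reduction to finitely many coordinates (so that the total cut-mass $M$ is finite and the distribution is well defined) and the normalization of $\varphi$ that fixes the contraction constant at $1$.
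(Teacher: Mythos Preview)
Your argument is correct and is exactly the justification the paper has in mind: the sentence immediately preceding the Fact says that it follows because every $\ell_1$ metric is a nonnegative combination of cut metrics, and the paper gives no further proof. Your write-up simply makes this one-line remark explicit by composing the defining $\ell_2^2\to\ell_1$ embedding with the standard threshold-cut decomposition of $\ell_1$; the only cosmetic point is that the existence of an embedding with distortion $D_{\ell_2^2\to\ell_1}(n)$ is by definition rather than by Arora--Lee--Naor (their theorem is only needed to upper bound this quantity).
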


Now recall the definition the Sparsest Cut problem with non-uniform demands.
\ifSODA
\begin{Definition} {\sc (Sparsest Cut problem with Non-uniform Demands)}\label{def:nuSparseCut}
\else
\begin{Definition}[Sparsest Cut problem with Non-uniform Demands]\label{def:nuSparseCut}
\fi
We are given a connected graph $G=(V, E_c)$, 
with positive edge capacities $\mathrm{cap}_{uv}$, a set of demands pairs $E_d\subset V \times V$,
and positive demands $\mathrm{dem}_{uv}$. The sparsity of a cut $(A, \bar A)$ is
$$\phi(A) = \frac{\mathrm{cap}(E_c(A,\bar A))}{\mathrm{dem}(E_d(A,\bar A))},
$$
where $E_c(A,\bar A)$ is the set of capacity edges between $A$ and $\bar A$, $\mathrm{cap}_{uv}(E_c(A,\bar A))$ is their total capacity,
$E_d(A,\bar A)$ is the set of demand pairs separated by $(A,\bar A)$, and $\mathrm{dem}(E_d(A,\bar A))$ is their total demand.
Our goal is to find a cut $(A,\bar A)$ with minimum sparsity $\phi(A)$.

We denote the best possible approximation factor for the problem by $\alpha_{SC}(n)$. 
Strictly speaking, $\alpha_{SC}(n)$ is not well-defined. 
Formally, we consider the decision version of the problem: the approximation 
algorithm has to output only the approximate value of the problem. 
We write $\alpha_{SC}(n) \leq f(n)$ if there is an algorithm with approximation guarantee $f(n)$; we write $\alpha_{SC}(n) > f(n)$ if there is no algorithm with approximation guarantee $f(n)$. However, the algorithm of Arora, Lee, and Naor~\cite{ALN} that we use in this paper not only finds the approximate value but also finds the corresponding solution.
\end{Definition}

Arora, Lee, and Naor showed that $\alpha_{SC}(n) \leq D_{\ell_2^2\to \ell_1}(n) = O(\sqrt{\log n} \log\log n)$.
Chawla, Krauthgamer, Kumar, Rabani, and Sivakumar~\cite{CKKRS} and independently Khot and Vishnoi~\cite{KV}
proved that $\alpha_{SC}(n)\to\infty$ as $n\to \infty$ assuming the Unique Games Conjecture (UGC); moreover, they showed that $\alpha_{SC}(n) = \Omega((\log\log n)^{1/2})$ assuming some strong version of UGC.
Chuzhoy and Khanna~\cite{CK06} proved that there is no polynomial-time approximation scheme for Sparsest Cut; that is, that $\alpha_{SC} > 1 + \varepsilon$
for some absolute constant $\varepsilon > 0$.

\section{Algorithm for Max Cut} \label{sec:robustalgo}
In this section, we prove that the SDP relaxation for every $\gamma$-stable Max Cut instance with $\gamma \geq D_{\ell_2^2\to \ell_1}$
is integral and as a corollary obtain a polynomial-time algorithm for $\gamma$-stable instances.
  
\begin{theorem}\label{thm:sdp-integral}
The SDP relaxation for every $\gamma$-stable Max Cut instance with $\gamma \geq D_{\ell_2^2\to \ell_1}$
is integral.
\end{theorem}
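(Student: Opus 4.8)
The plan is a proof by contradiction. Suppose the instance is $\gamma$-stable with $\gamma \ge D_{\ell_2^2\to\ell_1}$, yet some optimal SDP solution $\{\bar u\}$ is \emph{not} integral. Since $\gamma>1$ the maximum cut $(S,\bar S)$ is unique; write $W=w(E(S,\bar S))$ for its weight. The first step is to ``fold'' the solution across the optimal cut: set $\hat u = \bar u$ for $u\in S$ and $\hat u = -\bar u$ for $u\in\bar S$. Since each $\bar u$ is a unit vector, $\|\bar u-\bar v\|^2 = \|\hat u-\hat v\|^2$ on every edge with both endpoints on the same side of $(S,\bar S)$, while $\|\bar u-\bar v\|^2 = 4-\|\hat u-\hat v\|^2$ on every edge of $E(S,\bar S)$. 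Plugging this into the objective, the SDP value of $\{\bar u\}$ equals $$W + \frac14\sum_{(u,v)\in E\setminus E(S,\bar S)} w_{uv}\|\hat u-\hat v\|^2 \;-\; \frac14\sum_{(u,v)\in E(S,\bar S)} w_{uv}\|\hat u-\hat v\|^2 .$$ As the SDP is a relaxation and $\{\bar u\}$ is optimal, this value is at least $W$, which gives $$\sum_{(u,v)\in E\setminus E(S,\bar S)} w_{uv}\|\hat u-\hat v\|^2 \;\ge\; \sum_{(u,v)\in E(S,\bar S)} w_{uv}\|\hat u-\hat v\|^2. \qquad(\star)$$

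The second step is geometric. The points $\{\hat u:u\in V\}$ form a subset of $\{\pm\bar u:u\in V\}$, which by the $\ell_2^2$-triangle constraints of the SDP is an $\ell_2^2$ space; hence $\{\hat u\}$ is an $\ell_2^2$ space, and it has more than one point, for otherwise every $\hat u$ would equal some unit vector $\bar e$ and then $\bar u=\pm\bar e$ for all $u$, making $\{\bar u\}$ integral. Apply Fact~\ref{fact:cut-embedding} (using that $D_{\ell_2^2\to\ell_1}(\cdot)$ is nondecreasing, so the distortion for $\{\hat u\}$ is at most $D_{\ell_2^2\to\ell_1}=D_{\ell_2^2\to\ell_1}(n)$): there are a scale $\sigma>0$ and a distribution over cuts $(A,\bar A)$ of $\{\hat u\}$ with $\sigma\,p_{uv}\le\|\hat u-\hat v\|^2\le\sigma\,D_{\ell_2^2\to\ell_1}\,p_{uv}$ for all $u,v$, where $p_{uv}$ denotes the probability that $(u,v)$ is separated by $(A,\bar A)$. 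Each such cut induces a cut $(S',\bar S')$ of $V$: starting from $(S,\bar S)$, move $u$ to the opposite side precisely when $\hat u\in A$. A short case check shows that the edges cut by $S$ but not by $S'$ are exactly the $E(S,\bar S)$-edges separated by $(A,\bar A)$, and the edges cut by $S'$ but not by $S$ are exactly the $(E\setminus E(S,\bar S))$-edges separated by $(A,\bar A)$.

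The third step invokes stability. By Definition~\ref{def:stability2} with $T=S'$: for every $(A,\bar A)$ with $S'\notin\{S,\bar S\}$, the weight of $E(S,\bar S)$-edges separated by $(A,\bar A)$ strictly exceeds $\gamma$ times the weight of the other separated edges; for $(A,\bar A)$ with $S'\in\{S,\bar S\}$ nothing is separated, so both quantities are $0$. Since the pair with $\hat u\ne\hat v$ is separated with positive probability (the lower estimate in Fact~\ref{fact:cut-embedding} forces $p_{uv}>0$ there), a cut with $S'\notin\{S,\bar S\}$ occurs with positive probability, so taking expectations yields the \emph{strict} inequality $$\sum_{(u,v)\in E(S,\bar S)} w_{uv}\,p_{uv} \;>\; \gamma\sum_{(u,v)\in E\setminus E(S,\bar S)} w_{uv}\,p_{uv}.$$ Using $\|\hat u-\hat v\|^2\ge\sigma\,p_{uv}$ on the left-hand sum and $p_{uv}\ge\|\hat u-\hat v\|^2/(\sigma D_{\ell_2^2\to\ell_1})$ on the right-hand sum, and cancelling $\sigma$, we obtain $$\sum_{(u,v)\in E(S,\bar S)} w_{uv}\|\hat u-\hat v\|^2 \;>\; \frac{\gamma}{D_{\ell_2^2\to\ell_1}}\sum_{(u,v)\in E\setminus E(S,\bar S)} w_{uv}\|\hat u-\hat v\|^2 \;\ge\; \sum_{(u,v)\in E\setminus E(S,\bar S)} w_{uv}\|\hat u-\hat v\|^2,$$ using $\gamma\ge D_{\ell_2^2\to\ell_1}$. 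This contradicts $(\star)$, so every optimal SDP solution is integral.

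I expect the main obstacle to be that the $\ell_1$-embedding cannot be used to improve the cut directly: its distortion enters on the ``wrong'' side, giving only $\sum_{E(S,\bar S)}w_{uv}\|\hat u-\hat v\|^2\ge\frac{1}{D_{\ell_2^2\to\ell_1}}\sum_{E\setminus E(S,\bar S)}w_{uv}\|\hat u-\hat v\|^2$, which is consistent with, not contradictory to, $(\star)$. The key idea is to ``spend'' the factor $\gamma$ supplied by stability to absorb exactly this distortion --- equivalently, first inflate the weights of the $E(S,\bar S)$-edges by $\gamma\ge D_{\ell_2^2\to\ell_1}$ (a legal $\gamma$-perturbation) and only then perform the random move $(S,\bar S)\mapsto(S',\bar S')$ --- which is why the argument needs $\gamma\ge D_{\ell_2^2\to\ell_1}$ and nothing smaller. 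The remaining care is in the case analysis above, in handling the degenerate cuts $A=\emptyset$ and $A=\{\hat u:u\in V\}$ (where Definition~\ref{def:stability2} does not apply but both sides vanish), and in extracting a strict rather than weak inequality upon taking the expectation.
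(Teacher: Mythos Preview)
Your proof is correct and follows essentially the same approach as the paper: the folding $\bar u\mapsto\hat u$, the inequality $(\star)$ from optimality of the SDP, the $\ell_2^2\to\ell_1$ embedding of $\{\hat u\}$ into a distribution of cuts, and the use of $\gamma$-stability to absorb the distortion are all exactly the paper's ingredients. The only cosmetic difference is that you apply stability to every cut in the support and then take expectations, whereas the paper first takes expectations and then extracts a single cut $A''$ witnessing the violation of Definition~\ref{def:stability2}; these are equivalent rearrangements of the same argument.
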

\begin{proof}
Let $G$ be a $\gamma$-stable instance with $\gamma \geq D_{\ell_2^2\to \ell_1}$ and $\{\bar u\}$ be an optimal SDP solution 
for $G$. Assume to the contrary that $\{\bar u\}$ is not integral. 

Let $(S,\bar S)$ be the maximum cut in $G$. Since $\{\bar u\}$ is an optimal SDP solution, we have that its SDP value 
is at least the cost of the maximum cut:
\begin{equation}\label{ineq:sdp-is-relaxation}
\frac{1}{4} \sum_{(u,v)\in E} w_{uv} \|\bar u - \bar v\|^2 \geq w(E(S,\bar S)).
\end{equation}
Define
$$\hat u=
\begin{cases}
\bar u, &\text{if } u\in S,\\
-\bar u, &\text{if } u\notin S.
\end{cases}
$$
Note that not all vectors $\hat u$ are equal since the SDP solution is not integral.
If $(u,v) \in E(S,\bar S)$ then 
$$\|\bar u - \bar v\|^2 = 2 - 2 \langle \bar u , \bar v\rangle = 2 + 2 \langle \hat u , \hat v \rangle = 4 - \|\hat u - \hat v\|^2;$$
if $(u,v) \in E \setminus E(S,\bar S)$ then $\|\bar u - \bar v\|^2 = \|\hat u - \hat v\|^2$.
Therefore,
\ifSODA
\begin{align*}
\frac{1}{4} & \sum_{(u,v)\in E} w_{uv} \|\bar u - \bar v\|^2 = 
\frac{1}{4} \sum_{(u,v)\in E(S,\bar S)} w_{uv} \|\bar u - \bar v\|^2 \\
&\qquad{}+ \frac{1}{4} \sum_{(u,v)\in E\setminus E(S,\bar S)} w_{uv} \|\bar u - \bar v\|^2\\
&= \frac{1}{4} \sum_{(u,v)\in E(S,\bar S)} w_{uv} (4 - \|\hat u - \hat v\|^2) \\
&\qquad{}+ \frac{1}{4} \sum_{(u,v)\in E\setminus E(S,\bar S)} w_{uv} \|\hat u - \hat v\|^2\\
&= w(E(S,\bar S)) + \frac{1}{4}  \sum_{(u,v)\in E \setminus E(S,\bar S)} w_{uv} \|\hat u - \hat v\|^2\\
&\qquad   {}- \frac{1}{4}  \sum_{(u,v)\in E(S,\bar S)} w_{uv} \|\hat u - \hat v\|^2.
\end{align*}
\else
\begin{align*}
\frac{1}{4} \sum_{(u,v)\in E} w_{uv} \|\bar u - \bar v\|^2 &= 
\frac{1}{4} \sum_{(u,v)\in E(S,\bar S)} w_{uv} \|\bar u - \bar v\|^2 + \frac{1}{4} \sum_{(u,v)\in E\setminus E(S,\bar S)} w_{uv} \|\bar u - \bar v\|^2\\
&= \frac{1}{4} \sum_{(u,v)\in E(S,\bar S)} w_{uv} (4 - \|\hat u - \hat v\|^2) + \frac{1}{4} \sum_{(u,v)\in E\setminus E(S,\bar S)} w_{uv} \|\hat u - \hat v\|^2\\
&= w(E(S,\bar S)) + \frac{1}{4}  \sum_{(u,v)\in E \setminus E(S,\bar S)} w_{uv} \|\hat u - \hat v\|^2
   - \frac{1}{4}  \sum_{(u,v)\in E(S,\bar S)} w_{uv} \|\hat u - \hat v\|^2.
\end{align*}
\fi
From (\ref{ineq:sdp-is-relaxation}), we get
$$
\sum_{(u,v)\in E(S,\bar S)} w_{uv} \|\hat u - \hat v\|^2
\leq
\sum_{(u,v)\in E \setminus E(S,\bar S)} w_{uv} \|\hat u - \hat v\|^2
.$$
The set $X= \{\hat u: u\in V\}$ is an $\ell_2^2$ space since the SDP solution ensures that vectors in
$\set{\pm\bar u: u\in V}$ satisfy $\ell_2^2$-triangle inequalities and $X \subset \set{\pm\bar u: u\in V}$.
This space embeds into $\ell_1$ with distortion $D_{\ell_2^2\to \ell_1}(n)$, and, hence (see Fact~\ref{fact:cut-embedding})
there is a distribution of random cuts $(A,\bar A)$ of $X$ and a parameter $\sigma > 0$ such that
\ifSODA
\begin{align*}
\sigma & \Prob{\text{pair }(\hat u,\hat v) \text{ is separated by } (A,\bar A)} \leq \|\hat u -\hat v\|^2 \\
&\leq
\sigma  D_{\ell_2^2\to \ell_1}(n)  \Prob{\text{pair }(\hat u,\hat v) \text{ is separated by } (A,\bar A)}.
\end{align*}
\else
\begin{multline*}
\sigma \cdot \Prob{\text{pair }(\hat u,\hat v) \text{ is separated by } (A,\bar A)} \leq \|\hat u -\hat v\|^2 \\
\leq
\sigma \cdot D_{\ell_2^2\to \ell_1}(n) \cdot \Prob{\text{pair }(\hat u,\hat v) \text{ is separated by } (A,\bar A)}.
\end{multline*}
\fi
Here, we use that not all vectors in $X$ are equal and therefore cuts in the distribution are not trivial.
Let $A' = \{u:\hat u \in A\}$ and $\bar A' = V\setminus A' = \{u:\hat u \notin A\}$. We get, 
\ifSODA
\begin{align*}
&\sum_{(u,v)\in E(S,\bar S)} w_{uv} \sigma \Prob{(u, v) \in E(A',\bar A')} \\
&{}\leq
\sum_{(u,v)\in E(S,\bar S)} w_{uv} \|\hat u - \hat v\|^2
\leq
\sum_{(u,v)\in E\setminus E(S,\bar S)} w_{uv} \|\hat u - \hat v\|^2 \\
&{}\leq \sum_{(u,v)\in E\setminus E(S,\bar S)} w_{uv} \sigma D_{\ell_2^2\to \ell_1}(n) \Prob{(u, v) \in (A',\bar A')}.
\end{align*}
\else
\begin{align*}
\sum_{(u,v)\in E(S,\bar S)} w_{uv} & \sigma \Prob{(u, v) \in E(A',\bar A')} 
\leq
\sum_{(u,v)\in E(S,\bar S)} w_{uv} \|\hat u - \hat v\|^2\\
&\leq
\sum_{(u,v)\in E\setminus E(S,\bar S)} w_{uv} \|\hat u - \hat v\|^2 \\
&\leq \sum_{(u,v)\in E\setminus E(S,\bar S)} w_{uv} \sigma D_{\ell_2^2\to \ell_1}(n) \Prob{(u, v) \in (A',\bar A')}.
\end{align*}
\fi
Therefore,
\ifSODA
\begin{multline*}
\E{w(E(S,\bar S) \cap E(A',\bar A'))} \\ \leq D_{\ell_2^2\to \ell_1}(n) \cdot \E{w((E\setminus E(S,\bar S)) \cap E(A',\bar A'))}.
\end{multline*}
\else
$$\E{w(E(S,\bar S) \cap E(A',\bar A'))} \leq D_{\ell_2^2\to \ell_1}(n) \cdot \E{w((E\setminus E(S,\bar S)) \cap E(A',\bar A'))}.$$
\fi
In particular, for some cut $A''$, we have 
\ifSODA
\begin{align*}
w&(E(S,\bar S) \cap E(A'',\bar A'')) \\
&\leq D_{\ell_2^2\to \ell_1}(n) \cdot w((E\setminus E(S,\bar S)) \cap E(A'',\bar A''))\\
&\leq \gamma \cdot w((E\setminus E(S,\bar S)) \cap E(A'',\bar A'')).
\end{align*}
\else
\begin{align*}
w(E(S,\bar S) \cap E(A'',\bar A'')) &\leq D_{\ell_2^2\to \ell_1}(n) \cdot w((E\setminus E(S,\bar S)) \cap E(A'',\bar A''))\\
&\leq \gamma \cdot w((E\setminus E(S,\bar S)) \cap E(A'',\bar A'')).
\end{align*}
\fi
Let $T = (S \cap A'') \cup (\bar S \cap \bar A'')$ (see Figure~\ref{fig:cuts}). Note that $A'' \neq V$ and $A''\neq \varnothing$, hence
$T\neq S$ and $T\neq \bar S$. Write 
\ifSODA
\begin{align*}
E(S,\bar S)  & {}\cap E(A'',\bar A'') \\
&= E(S\cap A'', \bar S \cap \bar A'') \cup E(S\cap \bar A'', \bar S \cap A'')\\
&=E(S\cap T, \bar S \cap T) \cup E(S\cap \bar T, \bar S \cap \bar T)\\
&= E(S, \bar S) \setminus E(T, \bar T),\\
\intertext{and}
\mathrlap{(E\setminus E(S,\bar S)) \cap E(A'',\bar A'')} \phantom{E(S,\bar S)} &\\
&= E(S\cap A'', S\cap \bar A'') \cup E(\bar S\cap A'', \bar S\cap \bar A'')\\
&= E(T\cap A'', \bar T\cap A'') \cup E(T\cap \bar A'', \bar T\cap \bar A'') \\
&= E(T, \bar T) \cap E(S, \bar S).
\end{align*}
\else
\begin{align*}
E(S,\bar S) \cap E(A'',\bar A'')&= E(S\cap A'', \bar S \cap \bar A'') \cup E(S\cap \bar A'', \bar S \cap A'')\\
&=E(S\cap T, \bar S \cap T) \cup E(S\cap \bar T, \bar S \cap \bar T)=
E(S, \bar S) \setminus E(T, \bar T),\\
\intertext{and}
(E\setminus E(S,\bar S)) \cap E(A'',\bar A'') &= E(S\cap A'', S\cap \bar A'') \cup E(\bar S\cap A'', \bar S\cap \bar A'')\\
&= E(T\cap A'', \bar T\cap A'') \cup E(T\cap \bar A'', \bar T\cap \bar A'')\\
& = E(T, \bar T) \cap E(S, \bar S).
\end{align*}
\fi
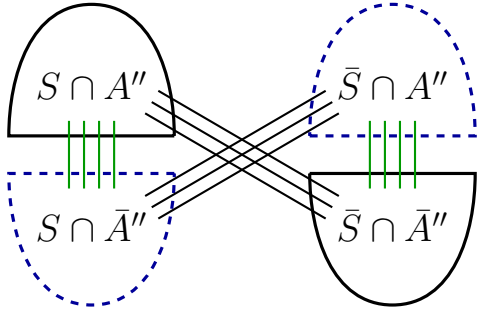
\begin{figure}
\centering
\begin{tikzpicture}
    \draw[very thick] (-0.9,0.25) to (-3.1, 0.25) to [out=90,in=180] (-2, 2) to [out=0,in=90] (-0.9, 0.25);
    \draw[very thick] (0.9,-0.25) to (3.1, -0.25) to [out=-90,in=0] (2, -2) to [out=180,in=-90] (0.9, -0.25);

    \draw[very thick, dashed, color=black!40!blue] (+0.9,0.25) to (+3.1, 0.25) to [out=90,in=0] (+2, 2) to [out=180,in=90] (+0.9, 0.25);
    \draw[very thick, dashed, color=black!40!blue] (-0.9,-0.25) to (-3.1, -0.25) to [out=-90,in=180] (-2, -2) to [out=0,in=-90] (-0.9, -0.25);

    \foreach \x in {-1.7,-1.9,-2.1, -2.3, 1.7, 1.9, 2.1, 2.3}
       \draw[thick, color=black!40!green] (\x, 0.45) -- (\x,-0.45);
    
    \draw[thick, color=black] (-1.2+0.0875, 0.85) -- (1.2+0.0875, -0.55);
    \draw[thick, color=black] (-1.2, 0.7) -- (1.2, -0.7);
    \draw[thick, color=black] (-1.2-0.0875, 0.55) -- (1.2-0.0875, -0.85);

    \draw[thick, color=black] (-1.2+0.0875, -0.85) -- (1.2+0.0875, 0.55);
    \draw[thick, color=black] (-1.2, -0.7) -- (1.2, 0.7);
    \draw[thick, color=black] (-1.2-0.0875, -0.55) -- (1.2-0.0875, 0.85);
    \node[above] at (-2,0.6) {\Large $S\cap A''$};
    \node[above] at (2,0.6) {\Large $\bar S\cap A''$};
    \node[below] at (-2,-0.6) {\Large $S\cap \bar A''$};
    \node[below] at (2,-0.6) {\Large $\bar S\cap \bar A''$};
\end{tikzpicture}
%
\caption{In the figure, the set $T = (S \cap A'') \cup (\bar S \cap \bar A'')$ is represented
by semicircles in the upper-left and lower-right corners; the set $\bar T = (S \cap \bar A'') \cup (\bar S \cap A'')$ 
is represented by semicircles in the lower-left and upper-right corners. Edges in $E(S, \bar S) \setminus E(T, \bar T)$
are drawn diagonally; edges in $E(T, \bar T) \setminus E(S, \bar S)$ are drawn vertically.}
\label{fig:cuts}
\end{figure}
We get,
$$w(E(S,\bar S) \setminus E(T,\bar T)) \leq \gamma\cdot w(E(T,\bar T) \setminus E(S,\bar S)),$$
which contradicts to the fact that $G$ is a $\gamma$-stable instance (see Definition~\ref{def:stability2}).
\end{proof}

From Theorem~\ref{thm:sdp-integral}, we get the main algorithmic result of our paper.
\begin{corollary}\label{cor:algorithm}
There is a robust polynomial-time algorithm for $\gamma$-stable instance of Max Cut with $\gamma \geq D_{\ell_2^2\to \ell_1}(n)$. 
\end{corollary}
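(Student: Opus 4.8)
The plan is to derive Corollary~\ref{cor:algorithm} as a direct consequence of Theorem~\ref{thm:sdp-integral}, using the fact that the Max Cut SDP with $\ell_2^2$-triangle inequalities can be solved (up to arbitrarily small additive error) in polynomial time via the ellipsoid method or interior-point methods. The robust algorithm I would describe has three steps: (1) solve the SDP relaxation and obtain a (near-)optimal solution $\{\bar u\}$; (2) test whether $\{\bar u\}$ is integral, i.e.\ whether there is a unit vector $\bar e$ with $\bar u\in\{\bar e,-\bar e\}$ for all $u$; (3) if it is integral, read off the cut $(S,\bar S)$ and output it; if it is not integral, output the special symbol $\perp$.

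Correctness of the output follows from two observations. First, whenever the algorithm returns a cut $(S,\bar S)$, that cut is a maximum cut: the SDP is a relaxation of Max Cut, so its optimum is an upper bound on $\maxcut(G)$, and an integral optimal SDP solution has value exactly equal to the cost of the corresponding cut, hence that cut attains the optimum. Second, whenever the algorithm returns $\perp$, the instance is genuinely not $\gamma$-stable: by Theorem~\ref{thm:sdp-integral}, a $\gamma$-stable instance with $\gamma\geq D_{\ell_2^2\to\ell_1}(n)$ has the property that \emph{every} optimal SDP solution is integral, so a non-integral optimum is a polynomial-time-verifiable certificate of non-stability. Combined, these give exactly the two bullet points in the definition of a robust algorithm: on $\gamma$-stable inputs the SDP optimum is integral so we output a maximum cut, and on arbitrary inputs we either output a maximum cut or the certificate $\perp$.

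The one genuine technical wrinkle is that SDPs are solved only approximately, so strictly speaking one does not get an exact optimal solution but one within additive error $\delta$, and "integral" must be replaced by "within $\delta$ of integral." The standard fix, which I would invoke, is that for $\gamma$-stable instances the optimal cut is unique and there is a spectral/combinatorial gap between the optimum and the second-best structure, so for $\delta$ small enough (polynomially small in $n$ and the bit-complexity of the weights) one can round the near-optimal SDP vectors to the nearby integral solution and verify optimality combinatorially; alternatively, one appeals to the fact that the relevant SDP has a rational optimum of bounded bit-complexity so a sufficiently accurate solve pins it down exactly. I expect this rounding/precision bookkeeping to be the only nontrivial part; the logical structure of the robustness argument is immediate from Theorem~\ref{thm:sdp-integral}. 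I would therefore state the precision argument briefly and refer to standard treatments of SDP-based exact algorithms rather than grinding through the error analysis.
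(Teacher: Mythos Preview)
Your proposal is correct and follows essentially the same approach as the paper: solve the SDP, output the corresponding cut if the optimum is integral, and output $\perp$ otherwise, with correctness following directly from Theorem~\ref{thm:sdp-integral}. The paper's proof is actually terser than yours---it does not discuss the SDP precision issue at all---so your additional remarks on handling approximate SDP solutions go beyond what the paper provides.
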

\begin{proof}
The algorithm solves the SDP relaxation for the problem. If the solution is integral, the algorithm returns the cut
corresponding to it. Otherwise, it returns $\perp$ (indicating that the instance is not $\gamma$-stable). Note that if the algorithm
returns a cut, it must be a maximum cut (otherwise, the SDP solution would not be optimal).
By Theorem~\ref{thm:sdp-integral}, the algorithm always returns a solution if the instance is $\gamma$-stable.
\end{proof}

\section{Algorithm for Minimum Multiway Cut}\label{sec:muliway}
In this section, we study stable instances of Minimum Multiway Cut. We prove that the linear programming relaxation 
of C\u{a}linescu, Karloff, and Rabani~\cite{CKR} is integral for $4$-stable instances of the problem. Thus there is a
robust polynomial-time algorithm for 4-stable instances of Minimum Multiway Cut.

The Minimum Multiway Cut problem was introduced by 
Dahlhaus, Johnson, Papadimitriou, Seymour, and Yannakakis~\cite{DJPSY}.
We refer the reader to \cite{BNS} for the summary of known results for the problem.
\begin{Definition}
In the Minimum Multiway Cut problem, we are given a graph $G=(V,E,w)$ with positive edge weights $w_e$ and a set of terminals $T=\{s_1,\dots, s_k\}\subset V$.
Our goal is to partition the graph into $k$ pieces $S_1,\dots, S_k$ such that $s_i\in S_i$ so as 
to minimize the total weight of cut edges.
\end{Definition}

We give a definition of $\gamma$-stable instances of Minimum Multiway Cut (cf. Definition~\ref{def:stability1}).
\begin{Definition}
Let $\gamma > 1$.  An instance $\{G=(V,E,w),T\}$ of Minimum Multiway Cut is $\gamma$-stable if there is a multiway cut ${\cal S}$
which is the unique optimal solution for every $\gamma$-perturbation of $G$.
\end{Definition}

We also restate this definition as follows (cf. Definition~\ref{def:stability2}).

\begin{Definition}
Consider an instance $\{G=(V,E,w),\, T\}$ of Minimum Multiway Cut. Let $\gamma > 1$. Denote the optimal multiway cut by ${\cal S}^*$, and let
the set it cuts be $E^*$. We say that $G$ is a $\gamma$-stable instance of Multiway Cut if for every multiway cut ${\cal S}'\neq {\cal S}^*$, we have
$$w(E'\setminus E^*) > \gamma\cdot w(E^*\setminus E'),$$
where $E'$ is the set of edges cut by ${\cal S}'$.
\end{Definition}

Consider the LP relaxation of C\u{a}linescu, Karloff, and Rabani~\cite{CKR}.
In this relaxation, there is a variable $\bar u=(u_1,\dots, u_k)\in {\mathbb R}^k$
for every vertex $u\in V$. Let $e_1,\dots, e_k$ be the standard basis in ${\mathbb R}^k$ and $\Delta = \{x:\|x\|_1 = 1, x_1\geq 0,\dots, x_k\geq 0\}$
be the simplex with vertices $e_1,\dots, e_k$.
\ifSODA
\begin{align}
\text{minimize } & \mathrlap{\frac{1}{2} \sum_{(u,v)\in E} w(u,v) \, \|\bar u - \bar v\|_1}\qquad \label{eq:CKR-relaxation}\\
\text{subject to: }&\notag\\
&\bar s_i = e_i &&\text{for every } i,\notag\\
&\bar u \in \Delta &&\text{for every } u\in V.\notag
\end{align}
\else
\begin{align}
\text{minimize } & \frac{1}{2} \sum_{(u,v)\in E} w(u,v) \, \|\bar u - \bar v\|_1 \label{eq:CKR-relaxation}\\
\text{subject to: }&\notag\\
&\bar s_i = e_i &&\text{for every } i,\notag\\
&\bar u \in \Delta &&\text{for every } u\in V.\notag
\end{align}
\fi
Every feasible LP solution defines a metric on $V$: $d(u,v) = \|\bar u - \bar v\|_1/2$.
We will need the following lemma.
\begin{lemma}\label{lem:hol-rounding}
Consider a feasible LP solution $\{\bar u:u\in V\}$. There is a distribution of multiway cuts (partitions) $S_1,\dots, S_k$ such that
\begin{itemize}
\item $s_i\in S_i$ for every $i\in \{1,\dots, k\}$ (always),
\item $\Pr(u \text{ and } v \text{ are separated by the cut}) \leq \frac{2d(u,v)}{1+d(u,v)}$ for every $u$ and $v$
($u$ and $v$ are separated if $u\in S_i$ and $v\in S_j$ with $i\neq j$).
 In particular, for every edge $(u,v)$ (see Figure~\ref{fig:graph-prob-cut}),
\ifSODA
$$
\Pr((u, v) \text{ is cut}) \leq 2d(u,v)$$ and $$\Pr((u,v)  \text{ is not cut}) \geq \frac{1 - d(u,v)}{2}.$$
\else
$$
\Pr((u, v) \text{ is cut}) \leq 2d(u,v) \quad\text{ and }\quad
\Pr((u,v)  \text{ is not cut}) \geq \frac{1 - d(u,v)}{2}.
$$
\fi
\end{itemize}
If the LP solution is not integral, the distribution is supported on at least two multiway cuts.
\end{lemma}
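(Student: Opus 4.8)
The plan is to use the classical threshold rounding scheme of C\u{a}linescu, Karloff, and Rabani. Pick a uniformly random radius $\rho \in (0,1)$ and a uniformly random permutation $\pi$ of the terminals $\{1,\dots,k\}$. Process the terminals in the order given by $\pi$; when we come to terminal $s_{\pi(j)}$, assign to $S_{\pi(j)}$ every still-unassigned vertex $u$ with $u_{\pi(j)} \geq 1 - \rho$ (equivalently $\|\bar u - e_{\pi(j)}\|_1 \leq 2\rho$, using that $\bar u \in \Delta$). Any vertex not captured after processing $s_{\pi(1)},\dots,s_{\pi(k-1)}$ is placed in $S_{\pi(k)}$. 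Since $\bar s_i = e_i$ and $(e_i)_i = 1 \geq 1-\rho$ for all $\rho$, terminal $s_i$ is always captured the first time its index is considered, so $s_i \in S_i$ always; this gives the first bullet.

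For the separation probability I would follow the standard CKR analysis restricted to a pair $(u,v)$. Let $a = u_{\pi(j)}$ "ball-radius coordinates": more precisely, for each coordinate $i$ let $r_i(u) = 1 - u_i$, so $u$ is captured by $s_i$ at radius $\rho$ iff $\rho \geq r_i(u)$. A pair $u,v$ can be separated at step $j$ only if exactly one of $u,v$ lies in the ball of $s_{\pi(j)}$ at radius $\rho$, i.e. $\rho$ falls between $r_{\pi(j)}(u)$ and $r_{\pi(j)}(v)$, and moreover $s_{\pi(j)}$ is the first terminal (in the order $\pi$) that captures at least one of them. Conditioning on $\rho$ and summing over which coordinate $i$ does the separating, a union bound over $i$ together with the observation that the separating coordinate must come before every coordinate that already captured $u$ or $v$ gives, after integrating over $\rho$, the bound $\Pr(u,v \text{ separated}) \leq \sum_i |r_i(u) - r_i(v)| \cdot (\text{prob. } i \text{ is first relevant})$. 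The clean way to get the stated $\frac{2 d(u,v)}{1 + d(u,v)}$ form is to note $\sum_i |u_i - v_i| = 2 d(u,v)$ and that the "first relevant coordinate" factor, after the permutation average, contributes the normalization $1/(1 + d(u,v))$; this is exactly the CKR single-pair estimate and I would cite/reproduce that computation rather than rederive it. The two displayed consequences follow immediately: $\frac{2d}{1+d} \leq 2d$, and the complementary event has probability $\geq 1 - \frac{2d}{1+d} = \frac{1-d}{1+d} \geq \frac{1-d}{2}$ since $d \le 1$.

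For the last sentence, suppose the distribution were supported on a single multiway cut $(S_1,\dots,S_k)$. Then for every edge $(u,v)$, either $u,v$ are always separated or never separated; in particular every pair with $d(u,v) < 1$ is never separated (since its separation probability is at most $\frac{2d}{1+d} < 1$), so $u,v$ lie in the same part. But if $u,v$ are in the same part $S_i$ for the pair realized by the cut, one checks directly that the only LP assignment consistent with "$u$ and $v$ always go to $S_i$ under this rounding" forces $\bar u = \bar v = e_i$ — more simply, I would argue: define $\bar u' = e_i$ whenever the rounding always puts $u$ into $S_i$; since the cut is unique, this holds for all $u$, so the rounded integral solution has the same cut value, and by feasibility plus the contrapositive of "not integral" we may instead just observe that a deterministic rounding would mean the integral solution $\{\bar u'\}$ has LP cost equal to that of $\{\bar u\}$, contradicting non-integrality only if $\{\bar u\}\neq\{\bar u'\}$; to rule out $\{\bar u\}=\{\bar u'\}$ we are assuming $\{\bar u\}$ is not integral. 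Hence the support has size at least two.

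The main obstacle is the middle step: proving the exact constant $\frac{2d(u,v)}{1+d(u,v)}$ rather than a loose $O(d(u,v))$ bound. This requires the careful permutation-averaging argument of C\u{a}linescu--Karloff--Rabani, handling the conditioning on "which terminal captures the pair first," and it is the only part that is not essentially bookkeeping; everything else (the $s_i \in S_i$ guarantee, the two corollary inequalities, the support-size claim) is routine once that estimate is in hand.
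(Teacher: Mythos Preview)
Your proposal uses the C\u{a}linescu--Karloff--Rabani threshold rounding, but this scheme does \emph{not} yield the per-pair bound $\Pr(u,v \text{ separated}) \leq \frac{2d(u,v)}{1+d(u,v)}$ that the lemma asserts. That bound is not ``exactly the CKR single-pair estimate''; it comes from a different rounding, namely the Kleinberg--Tardos scheme, which is what the paper actually uses. A concrete counterexample to your claim: take $k$ large (say $k=100$), $\bar u = (0.99,\,0.01,\,0,\dots,0)$ and $\bar v = (0.51,\,0.49,\,0,\dots,0)$, so $d(u,v)=0.48$. Under the CKR scheme you describe, a case analysis on $\rho$ gives separation probability $0.48\cdot\tfrac{k-1}{k} + 0.48\cdot\tfrac12 \approx 0.715$, whereas $\tfrac{2d}{1+d}\approx 0.649$. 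The hand-waved step ``the permutation average contributes the normalization $1/(1+d(u,v))$'' is precisely where the argument breaks; the CKR permutation analysis produces a factor of roughly $3/2$ in this regime, not $1/(1+d)$.

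The paper instead uses the Kleinberg--Tardos randomized rounding: repeatedly sample $i\in\{1,\dots,k\}$ and $r\in(0,1)$ uniformly and independently, and assign each still-unassigned $u$ to $S_i$ whenever $r\le \bar u_i$. In a single round the probability that at least one of $u,v$ is assigned is $\tfrac1k\sum_i\max(\bar u_i,\bar v_i)=\tfrac{1+d(u,v)}{k}$, while the probability that exactly one is assigned is $\tfrac1k\sum_i|\bar u_i-\bar v_i|=\tfrac{2d(u,v)}{k}$; conditioning on the first round in which at least one is assigned gives the bound immediately. The paper even remarks, right after the lemma, that CKR lacks the ``not-cut with probability $\Omega(1-d)$'' guarantee that this scheme provides. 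Your argument for the final support-size claim is also more tangled than necessary: under the Kleinberg--Tardos scheme, if some coordinate $\bar u_j\in(0,1)$ then $u\in S_j$ and $u\notin S_j$ each occur with positive probability, so the distribution cannot be supported on a single partition.
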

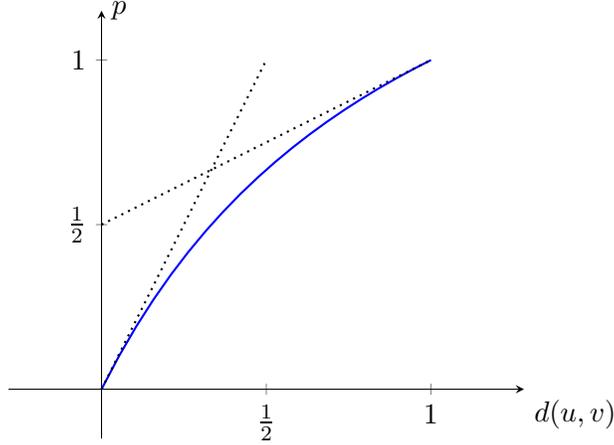
\begin{figure}
\centering
\begin{tikzpicture}
\begin{axis}[
axis x line=middle,
axis y line=middle,
xmax=1, xmin=0,
ymax=1, ymin=0,
xlabel={$d(u,v)$}, ylabel={$p$},
axis equal = true,
xtick={0,0.5,1},
ytick={0,0.5,1},
xticklabels={$0$,$\frac12$,$1$},
yticklabels={$0$,$\frac12$,$1$},
enlargelimits=0.15,
xlabel style={at={(current axis.right of origin)},anchor=north west},
ylabel style={at={(current axis.above origin)},anchor=west}
]
\addplot[blue,mark=none, thick, domain=0:1,samples=20]{2*x/(1+x)};
\addplot[black,dotted,thick, domain=0:0.5,samples=3]{2*x};
\addplot[black,dotted, thick,domain=0:1,samples=3]{1/2 + x/2};

\end{axis}
\end{tikzpicture}
\caption{The probability that $u$ and $v$ are separated is at most $p = 2d(u,v) / (1 + d(u,v))$, which is bounded from above by $2d(u,v)$ and by $1 - (1-d(u,v))/2$.}
\label{fig:graph-prob-cut}
\end{figure}
\begin{proof}
We use the rounding scheme of Kleinberg and Tardos~\cite{KT02} (which they used in their algorithm for the Metric Labeling problem)
to round the LP solution to an integral solution. The scheme works as follows.
We iteratively construct sets $S_1,\dots, S_k$. We start with empty sets
$S_1,\dots, S_k$ and then in each iteration add vertices to one of the sets $S_1,\dots, S_k$. We stop once
each vertex $u$ belongs to some set $u$. In each iteration, we choose independently and uniformly at random $r\in (0,1)$ and $i\in \{1,\dots, k\}$.
We add each vertex $u$ to $S_i$ if $r\leq \bar u_i$ and it was not added to any set $S_j$ in previous iterations.

First, note that we add every vertex $u$ to some $S_i$ with probability $\sum_{i=1}^k \bar u_i/k = 1/k$ in each iteration
(unless $u$ already lies in some $S_j$). So eventually we will add every vertex to some set $S_i$.
Also note that we cannot add $s_i$ to $S_j$ if $j\neq i$. Therefore, $s_i\in S_i$.

Now consider two vertices $u$ and $v$. Consider an iteration of our partitioning algorithm. Suppose that neither $u$ nor $v$ is assigned
to some $S_j$. The probability that at least one of them is assigned to some $S_i$ in this iteration is
\ifSODA
\begin{align*}
\frac{1}{k}\sum_{i=1}^k \max(\bar u_i,\bar v_i) &= \frac{1}{k}\sum_{i=1}^k \left(\frac{\bar u_i +\bar v_i}{2} 
+ \frac{|\bar u_i -\bar v_i|}{2}\right) \\
&= \frac{1}{k}\left(1 
+ \frac{\|\bar u -\bar v\|_1}{2}\right) \\
&= \frac{1 
+ d(u,v)}{k}.
\end{align*}
\else
$$\frac{1}{k}\sum_{i=1}^k \max(\bar u_i,\bar v_i) = \frac{1}{k}\sum_{i=1}^k \left(\frac{\bar u_i +\bar v_i}{2} 
+ \frac{|\bar u_i -\bar v_i|}{2}\right) = \frac{1}{k}\left(1 
+ \frac{\|\bar u -\bar v\|_1}{2}\right) = \frac{1 
+ d(u,v)}{k}.$$
\fi
The probability that exactly one of them is assigned to some $S_i$ is
$$\frac{1}{k}\sum_{i=1}^k |\bar u_i-\bar v_i| = \frac{\|\bar u - \bar v\|_1 }{k}= \frac{2d(u,v)}{k}.$$
Therefore, the probability that $u$ and $v$ are separated in some iteration is $2d(u,v) / (1 + d(u,v))$.
The probability that $u$ and $v$ belong to different pieces of the cut is at most $2d(u,v) / (1 + d(u,v))$.

Finally, note that if some $\bar u_j \in (0,1)$ then with positive probability $u\in S_j$,
and with positive probability $u\notin S_j$. Therefore, if the LP solution is not integral, the distribution of multiway cuts is supported on at least two multiway cuts.
\end{proof}
\begin{remark}
Note that in general this rounding scheme gives only a $2$ approximation for Multiway Cut. 
Other known rounding schemes achieve a better approximation; e.g.
the rounding scheme of C\u{a}linescu, Karloff, and Rabani~\cite{CKR} gives a $3/2$ approximation.
However, this rounding scheme has a property that other rounding schemes do not have: it does not cut an edge $(u,v)$ 
with probability at least $(1 - d(u,v))/{2} = \Omega(1 - d(u,v))$. This property is crucial for our proof (we discuss why this property is important
in Section~\ref{sec:Discussion}).
\end{remark}

Now we prove the main result of this section.
\begin{theorem}\label{thm:main-multiway}
The LP relaxation is integral if the instance is $4$-stable.
\end{theorem}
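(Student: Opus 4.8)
The plan is to follow the template of the proof of Theorem~\ref{thm:sdp-integral}. Suppose for contradiction that there is an optimal LP solution $\{\bar u\}$ that is not integral, and let $\mathcal S^*=(S_1^*,\dots,S_k^*)$, with cut set $E^*$, be the (unique, by $4$-stability) optimal multiway cut. I will use the randomized rounding of Lemma~\ref{lem:hol-rounding} to produce a competing multiway cut $\mathcal S''\neq\mathcal S^*$ whose cut set $E''$ satisfies $w(E''\setminus E^*)\le 4\,w(E^*\setminus E'')$, contradicting the restated definition of $4$-stability.

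First I would record the ``relaxation'' inequality. Writing $d(u,v)=\|\bar u-\bar v\|_1/2$, the value of $\{\bar u\}$ is $\sum_{(u,v)\in E} w(u,v)\,d(u,v)$, and since the integral solution $\bar u=e_i$ for $u\in S_i^*$ is feasible with value $w(E^*)=\sum_{(u,v)\in E^*}w(u,v)$, optimality of $\{\bar u\}$ gives $\sum_{(u,v)\in E} w(u,v)\,d(u,v)\le w(E^*)$. Splitting the left-hand side over $E^*$ and $E\setminus E^*$, this rearranges to the key inequality
\[
\sum_{(u,v)\in E\setminus E^*} w(u,v)\,d(u,v)\ \le\ \sum_{(u,v)\in E^*} w(u,v)\,\bigl(1-d(u,v)\bigr),
\]
i.e. the LP ``mass'' on edges not cut by $\mathcal S^*$ is at most the LP ``slack'' on the edges that are cut by $\mathcal S^*$.

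Next I would apply Lemma~\ref{lem:hol-rounding} to $\{\bar u\}$ to get a distribution over multiway cuts $\mathcal S'$ with cut set $E'$. For $(u,v)\in E\setminus E^*$ the lemma gives $\Pr[(u,v)\in E']\le 2d(u,v)$, so $\Exp[w(E'\setminus E^*)]\le 2\sum_{(u,v)\in E\setminus E^*} w(u,v)\,d(u,v)$; for $(u,v)\in E^*$ the lemma gives $\Pr[(u,v)\notin E']\ge (1-d(u,v))/2$, so $\Exp[w(E^*\setminus E')]\ge\tfrac12\sum_{(u,v)\in E^*} w(u,v)\,(1-d(u,v))$. Chaining these two estimates through the key inequality yields $\Exp[w(E'\setminus E^*)]\le 4\,\Exp[w(E^*\setminus E')]$, hence $\Exp\bigl[w(E'\setminus E^*)-4\,w(E^*\setminus E')\bigr]\le 0$. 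The event $\mathcal S'=\mathcal S^*$ contributes exactly $0$ to this expectation, and Lemma~\ref{lem:hol-rounding} guarantees the distribution is supported on at least two multiway cuts (since $\{\bar u\}$ is not integral); therefore some cut $\mathcal S''\neq\mathcal S^*$ in the support must satisfy $w(E''\setminus E^*)-4\,w(E^*\setminus E'')\le 0$, contradicting $4$-stability.

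The only delicate point I anticipate is this last step: one has to invoke the ``supported on at least two cuts'' clause of Lemma~\ref{lem:hol-rounding} to be sure the extracted cut differs from $\mathcal S^*$, since otherwise the inequality is satisfied vacuously by $\mathcal S^*$ itself. Everything else is bookkeeping: keeping straight which edges lie in $E^*$ versus $E\setminus E^*$, and pairing the ``$2d$'' bound (probability of \emph{creating} a new cut edge) with the ``$(1-d)/2$'' bound (probability of \emph{saving} an old cut edge). The constant $4$ is exactly the ratio $2\big/\tfrac12$ of these two bounds, which is why this argument delivers the threshold $\gamma=4$ and not something smaller.
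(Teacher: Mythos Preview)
Your proposal is correct and follows essentially the same approach as the paper's proof: both combine the LP-optimality inequality $\mathsf{LP}_{-}\le\mathsf{LP}_{+}$ with the two probability bounds of Lemma~\ref{lem:hol-rounding} to compare $\Exp[w(E'\setminus E^*)]$ and $4\,\Exp[w(E^*\setminus E')]$, and then use the ``at least two cuts'' clause. The only cosmetic difference is the order of the contradiction---the paper first applies $4$-stability to every $\mathcal S'\neq\mathcal S^*$ to force $\mathsf{LP}_{+}<\mathsf{LP}_{-}$ and then contradicts LP optimality, whereas you first use LP optimality and then extract a single $\mathcal S''$ violating stability---but the content is identical.
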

\begin{proof}
Consider a $4$-stable instance of Multiway Cut. 
Let ${\cal S}^*$ be the minimum multiway cut and $E^*$ be the set of edges cut by it.
Let $\{\bar u:u\in V\}$ be the optimal solution of the LP relaxation. 
Assume to the contrary that it is not integral.
Consider a random multiway ${\cal S}'=(S_1,\dots, S_k)$ as in Lemma~\ref{lem:hol-rounding}.
Let $E'$ be the set of edges cut by it ($E'$ is a random variable).
Note that since the input instance is $4$-stable, we have
$4w(E^*\setminus E') < w(E' \setminus E^*)$
unless ${\cal S}' = {\cal S}^*$. 
Since the LP solution is not integral,
${\cal S}' \neq {\cal S}^*$ with positive probability, and
thus $4\E{w(E^*\setminus E')} < \E{w(E' \setminus E^*)}$.

Let 
\begin{align*}
\mathsf{LP}_{+} &= \sum_{(u,v)\in E^*} w(u,v) (1 - d(u,v)),\\
\mathsf{LP}_{-} &= \sum_{(u,v)\in E\setminus E^*} w(u,v) \, d(u,v).
\end{align*}
We have, 
\ifSODA
\begin{align*}
\E{w(E^*\setminus E')} &= \sum_{(u,v) \in E^*} w(u,v) \Pr((u,v) \text{ is not cut}) \\
&
\geq 
\sum_{(u,v) \in E^*} w(u,v) (1-d(u,v))/2 \\&
= \mathsf{LP}_{+}/2,\\
\intertext{and}
\E{w(E'\setminus E^*)} &= \sum_{(u,v) \in E\setminus E^*} w(u,v) \, \Pr((u, v) \text{ is cut}) \\
&\leq 
2\sum_{(u,v) \in E^*} w(u,v) d(u,v) \\
&= 2\,\mathsf{LP}_{-}.
\end{align*}
\else
\begin{align*}
\E{w(E^*\setminus E')} &= \sum_{(u,v) \in E^*} w(u,v) \Pr((u,v) \text{ is not cut}) \\
&
\geq 
\sum_{(u,v) \in E^*} w(u,v) (1-d(u,v))/2 = \mathsf{LP}_{+}/2,\\
\E{w(E'\setminus E^*)} &= \sum_{(u,v) \in E\setminus E^*} w(u,v) \, \Pr((u, v) \text{ is cut}) \leq 
2\sum_{(u,v) \in E^*} w(u,v) d(u,v) = 2\,\mathsf{LP}_{-}.
\end{align*}
\fi
We conclude that $\mathsf{LP}_{+} < \mathsf{LP}_{-}$.
On the other hand,
\ifSODA
\begin{align*}
\mathsf{LP}_{+} - \mathsf{LP}_{-} &= w(E^*) - \sum_{(u,v) \in E} w(u,v) \, d(u,v) \\
&= w(E^*) - \frac{1}{2}\sum_{(u,v) \in E} w(u,v) \, \|\bar u-\bar v\|_1 
\geq 0
\end{align*}
\else
$$\mathsf{LP}_{+} - \mathsf{LP}_{-} = w(E^*) - \sum_{(u,v) \in E} w(u,v) \, d(u,v) = w(E^*) - \frac{1}{2}\sum_{(u,v) \in E} w(u,v) \, \|\bar u-\bar v\|_1 \geq 0$$
\fi
since the value of the relaxation is at most the value of the integral solution. We get a contradiction.
\end{proof}

As an immediate corollary we get that there is a robust polynomial-time algorithm for $4$-stable instances of Multiway Cut.
\begin{corollary}
There is a robust polynomial-time algorithm for $4$-stable instances of Multiway Cut.
\end{corollary}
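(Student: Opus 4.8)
The plan is to mirror the derivation of Corollary~\ref{cor:algorithm} from Theorem~\ref{thm:sdp-integral}, now using the linear program (\ref{eq:CKR-relaxation}) in place of the SDP. The key point is that (\ref{eq:CKR-relaxation}) has only $O(nk)$ variables and $O(n)$ constraints, so an optimal LP solution can be computed in polynomial time by any standard method (ellipsoid or interior point), and Theorem~\ref{thm:main-multiway} tells us precisely when that optimum must be integral.

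Concretely, the algorithm I would describe does the following. First, solve (\ref{eq:CKR-relaxation}) to obtain an optimal solution $\{\bar u : u\in V\}$. Second, test whether $\bar u\in\{e_1,\dots,e_k\}$ for every $u$. If yes, output the partition $S_i=\{u\in V:\bar u=e_i\}$; this is a feasible multiway cut because the constraint $\bar s_i=e_i$ forces $s_i\in S_i$, and since $\|e_i-e_j\|_1=2$ for $i\neq j$ and $0$ otherwise, the LP objective of this integral point equals exactly the total weight of edges cut by $(S_1,\dots,S_k)$. If instead some $\bar u$ is not a standard basis vector, output the symbol $\perp$.

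For robustness I would argue two things. If the algorithm outputs a partition, its cost equals the LP value of an optimal LP solution, and the LP value is a lower bound on the minimum multiway cut, so the output is an optimal multiway cut --- this holds whether or not the instance is stable. Conversely, if the instance is $4$-stable, then by Theorem~\ref{thm:main-multiway} every optimal LP solution is integral, so whatever optimum the LP solver returns is integral and the algorithm outputs the minimum multiway cut (never $\perp$); hence outputting $\perp$ certifies that not every LP optimum is integral, i.e.\ that the instance is not $4$-stable, and this certificate is polynomial-time verifiable (e.g.\ re-solve the LP, or exhibit a fractional optimal vertex together with a dual solution of matching value).

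I do not expect any genuine obstacle: everything nontrivial is already contained in Theorem~\ref{thm:main-multiway}. The one subtlety worth stating explicitly is that the paper's notion of an ``integral LP relaxation'' requires \emph{every} optimal solution to be integral, not merely the existence of one integral optimum; it is this stronger property --- exactly what Theorem~\ref{thm:main-multiway} provides --- that lets us read the answer off an arbitrary LP optimum without worrying about which optimal solution the solver happens to return.
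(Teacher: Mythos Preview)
Your proposal is correct and follows essentially the same approach as the paper: solve the LP relaxation~(\ref{eq:CKR-relaxation}), output the corresponding partition if the optimum is integral, and output $\perp$ otherwise, with correctness and robustness coming directly from Theorem~\ref{thm:main-multiway} and the fact that the LP lower-bounds the optimum. The paper's proof is a two-line version of exactly this argument; your additional remarks about polynomial-time solvability and the need for \emph{every} LP optimum to be integral are accurate elaborations but not new ideas.
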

\begin{proof}
We solve the LP relaxation for Multiway Cut. If the LP solution is integral, we return the corresponding combinatorial solution. Otherwise,
we return that the instance is not $4$-stable.
\end{proof}

\section{Negative Results}\label{sec:negative}
In this section, we present our hardness results and prove that the SDP relaxation is not integral if $\gamma < D_{\ell_2^2\to \ell_1}(n/2)$.

\subsection{Reduction from Sparsest Cut to Max Cut} \label{sec:reduction}
We first present a reduction from Sparsest Cut to Max Cut, which we use later to prove both our negative results.

Consider a Sparsest Cut instance. Denote the set of vertices by $V_0$, the set of capacity edges by $E_c$, the set of demand pairs by $E_d$, 
edge capacities by $\mathrm{cap}_{uv}$, and demands by $\mathrm{dem}_{uv}$. Define graph $G(V, E, w)$ as follows. 
Introduce two vertices $u_1$ and $u_2$ for every $u\in V_0$, and let $V=\set{u_1,u_2:u\in V_0}$. Let 
\ifSODA
\begin{align*}
E &= \set{(u_1,v_2): (u,v) \in E_c} \\ &\qquad{} \cup  \set{(u_1,v_1), (u_2,v_2): (u,v) \in E_d} \\&\qquad{}\cup \set{(u_1,u_2): u\in V_0}.
\end{align*}
\else
$$E = \set{(u_1,v_2): (u,v) \in E_c} \cup \set{(u_1,v_1), (u_2,v_2): (u,v) \in E_d} \cup \set{(u_1,u_2): u\in V_0}.$$
\fi
Define edge weights by $w(u_1,v_2) = \mathrm{cap}_{uv}$, $w(u_1,v_1) = w(u_2, v_2) = \mathrm{dem}_{uv}$, and $w(u_1, u_2) = W_{\infty}$,
where $W_{\infty}$ is an arbitrary number larger than 
\ifSODA
$\gamma \cdot w(\set{(u_1,v_2): (u,v) \in E_c} \cup \set{(u_1,v_1), (u_2,v_2): (u,v) \in E_d})$.
\else
$$\gamma \cdot w(\set{(u_1,v_2): (u,v) \in E_c} \cup \set{(u_1,v_1), (u_2,v_2): (u,v) \in E_d}).$$
\fi
Let $S = \set{u_1:u\in V_0}$ and $\bar S = V\setminus S = \set{u_2: u\in V_0}$. 

\begin{lemma}\label{lem:reduction}
If $\phi(A) > \gamma$  for every cut $(A,\bar A)$ (see Definition~\ref{def:nuSparseCut}), then
the instance $G$ is $\gamma$-stable with the maximum cut $(S,\bar S)$.
\end{lemma}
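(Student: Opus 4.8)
The plan is to verify Definition~\ref{def:stability2} directly for the cut $(S,\bar S)$ of the graph $G$ built in the reduction: I will show that for every vertex set $T$ with $T\neq S$ and $T\neq\bar S$,
$$w\big(E(S,\bar S)\setminus E(T,\bar T)\big)\;>\;\gamma\cdot w\big(E(T,\bar T)\setminus E(S,\bar S)\big),$$
which, since $w(E(S,\bar S))-w(E(T,\bar T))$ is exactly the difference of the two sides, simultaneously shows $(S,\bar S)$ is the unique maximum cut. The first step is to record which edges $(S,\bar S)$ cuts: every infinity edge $(u_1,u_2)$ and every capacity edge $(u_1,v_2)$ is cut, while no demand edge $(u_1,v_1)$ or $(u_2,v_2)$ is cut. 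In particular, the edges \emph{not} cut by $(S,\bar S)$ are exactly the demand edges.

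I then split into two cases according to how $T$ treats the infinity edges. \textbf{Case 1: some infinity edge $(u_1,u_2)$ is not cut by $(T,\bar T)$.} Then $(u_1,u_2)\in E(S,\bar S)\setminus E(T,\bar T)$, so the left side is at least $W_\infty$; and since every infinity edge is cut by $(S,\bar S)$, the set $E(T,\bar T)\setminus E(S,\bar S)$ consists only of capacity and demand edges, so its weight is at most the total weight of all such edges, which is strictly below $W_\infty/\gamma$ by the choice of $W_\infty$. The desired strict inequality follows immediately.

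\textbf{Case 2: $(T,\bar T)$ cuts every infinity edge}, i.e.\ $u_1$ and $u_2$ lie on opposite sides for each $u\in V_0$. Then $T$ ``descends'' to a cut of $V_0$: putting $A=\{u\in V_0:u_1\in T\}$, we have $u_1\in T\iff u\in A$ and $u_2\in T\iff u\notin A$, and $T\neq S,\bar S$ forces $\varnothing\neq A\neq V_0$. A short endpoint check shows that a capacity edge $(u_1,v_2)$ (always cut by $(S,\bar S)$) fails to be cut by $(T,\bar T)$ exactly when $(u,v)$ is separated by $(A,\bar A)$, and a demand edge $(u_1,v_1)$ or $(u_2,v_2)$ (never cut by $(S,\bar S)$) is cut by $(T,\bar T)$ exactly when $(u,v)$ is separated by $(A,\bar A)$. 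Hence $w(E(S,\bar S)\setminus E(T,\bar T))=\mathrm{cap}(E_c(A,\bar A))$, while $w(E(T,\bar T)\setminus E(S,\bar S))$ is a fixed positive multiple of $\mathrm{dem}(E_d(A,\bar A))$ (one term for each of the two demand edges attached to a demand pair). If $(A,\bar A)$ separates no demand pair, the right side is $0$, while the left side is positive because connectivity of the Sparsest Cut instance forces the nontrivial cut $(A,\bar A)$ to sever at least one capacity edge; and if $(A,\bar A)$ does separate a demand pair, then $\phi(A)$ is defined and the stability inequality for $(T,\bar T)$ becomes a lower bound on the sparsity $\phi(A)$, which the hypothesis $\phi(A)>\gamma$ supplies.

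I expect the only genuine work to be the bookkeeping in Case~2 — determining, as a function of the induced cut $(A,\bar A)$, which of the three edge types land in $E(S,\bar S)\setminus E(T,\bar T)$ and which land in $E(T,\bar T)\setminus E(S,\bar S)$, and tracking the exact constant multiplying $\mathrm{dem}(E_d(A,\bar A))$. Case~1 is essentially immediate from the enormous weight $W_\infty$, and the connectivity hypothesis on the Sparsest Cut instance is needed only to dispose of the degenerate subcase in which $(A,\bar A)$ cuts some capacity edge but no demand pair.
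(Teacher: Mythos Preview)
Your proposal is correct and follows essentially the same two-case argument as the paper's own proof. One bookkeeping point to watch in Case~2: the reduction creates \emph{two} capacity edges $(u_1,v_2)$ and $(v_1,u_2)$ per capacity pair $\{u,v\}\in E_c$, so $w(E(S,\bar S)\setminus E(T,\bar T))=2\,\mathrm{cap}(E_c(A,\bar A))$ rather than $\mathrm{cap}(E_c(A,\bar A))$; this matches the factor~$2$ you already observed on the demand side, and the ratio is then exactly $\phi(A)$ as you claim.
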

\begin{proof}
We need to show that for every cut $(T,\bar T)$ different from $(S,\bar S)$:
$$w(E(S,\bar S)\setminus E(T,\bar T)) > \gamma \cdot w(E(T,\bar T)\setminus E(S,\bar S)).$$
Here, we use Definition~\ref{def:stability2} of $\gamma$-stability. Note that if for some $u$, the edge $(u_1,u_2)$ is not cut by $E(T,\bar T)$ then 
$w(E(S,\bar S)\setminus E(T,\bar T)) \geq w(u_1,u_2) = W_{\infty}$ and $\gamma \cdot w(E(T,\bar T)\setminus E(S,\bar S)) < W_{\infty}$,
and the desired inequality holds. So we assume below that every edge $(u_1,u_2)$ is cut by $E(T,\bar T)$.
Then, for every $u$ either $u_1\in T$ and $u_2\in \bar T$, or $u_1\in \bar T$ and $u_2\in T$.
Let 
\begin{align}
A &= \set{u\in V_0: u_1\in T} = \set{u\in V_0: u_2\in \bar T} ;\\
\bar A &= \set{u\in V_0: u_1\in \bar T} = \set{u\in V_0: u_2\in T}.
\end{align}
Observe, that $S\cap T = \{u_1: u \in A\}$; $S\cap \bar T = \{u_1: u \in \bar A\}$, similarly
$\bar S\cap T = \{u_2: u \in \bar A\}$; $\bar S\cap \bar T = \{u_2: u \in A\}$. 
Since $\phi(A) > \gamma$, we have
\ifSODA
\begin{align*}
\mathrlap{w(E(S,\bar S)\setminus E(T,\bar T)) }\qquad & \\
&=w(E(S\cap T, \bar S\cap T)) + w(E(S\cap \bar T, \bar S\cap \bar T))\\
&= 2\, \mathrm{cap}(E_c(A,\bar A)) > \gamma \cdot 2\, \mathrm{dem}(E_d(A,\bar A))\\
&= w(E(S\cap T, S\cap \bar T)) + w(E(\bar S\cap T, \bar S\cap \bar T))\\
& = w(E(T,\bar T)\setminus E(S,\bar S)),
\end{align*}
\else
\begin{align*}
w(E(S,\bar S)\setminus E(T,\bar T)) &= w(E(S\cap T, \bar S\cap T)) + w(E(S\cap \bar T, \bar S\cap \bar T))\\
&= 2\, \mathrm{cap}(E_c(A,\bar A)) > \gamma \cdot 2\, \mathrm{dem}(E_d(A,\bar A))\\
&= w(E(S\cap T, S\cap \bar T)) + w(E(\bar S\cap T, \bar S\cap \bar T)) = w(E(T,\bar T)\setminus E(S,\bar S)),
\end{align*}
\fi
as required. We proved that the instance is $\gamma$-stable.
\end{proof}

\subsection{Hardness Result for Max Cut} \label{sec:hardness}
We now prove that there is no robust polynomial-time algorithm for $\gamma$-stable instances of Max Cut 
when $\gamma < \alpha_{SC}(n/2)$.

\begin{theorem}\label{thm:hardness}
Suppose that there is a robust polynomial-time algorithm $\cal A$ for $\gamma$-stable instances of Max Cut with $\gamma \geq \gamma(n)$.
Then there is a polynomial-time algorithm $\cal B$ for the decision version of Sparsest Cut with promise that either
$$\phi^* = \min_{(A,\bar A)} \phi(A) < \phi_0 \text{ or } \phi^* > \gamma(2n) \phi_0.$$
The algorithm given a Sparsest Cut instance decides whether $\phi^* < \phi_0$ or $\phi^* > \gamma(2n) \phi_0$.
\end{theorem}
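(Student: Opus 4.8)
The plan is to invert the reduction of Section~\ref{sec:reduction}. Given a Sparsest Cut instance on $n=|V_0|$ vertices together with the threshold $\phi_0$, first rescale all capacities $\mathrm{cap}_{uv}$ by $1/\phi_0$; this divides every $\phi(A)$, and hence $\phi^*$, by $\phi_0$, so we may assume $\phi_0=1$ and the promise becomes ``$\phi^*<1$ or $\phi^*>\gamma(2n)$''. Build the Max Cut instance $G$ on $N=2n$ vertices exactly as in Section~\ref{sec:reduction}, choosing the huge weight $W_\infty$ relative to $\gamma:=\gamma(2n)$. The algorithm $\mathcal B$ computes $w(E(S,\bar S))$ (which is explicit from the construction: it equals $nW_\infty+\mathrm{cap}(E_c)$), runs the robust algorithm $\mathcal A$ on $G$ with stability parameter $\gamma(2n)$, and answers: ``$\phi^*<\phi_0$'' if $\mathcal A$ outputs $\perp$ or outputs a cut of weight strictly greater than $w(E(S,\bar S))$; ``$\phi^*>\gamma(2n)\phi_0$'' if $\mathcal A$ outputs a cut of weight exactly $w(E(S,\bar S))$.

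The analysis hinges on one structural fact about $G$. Because $W_\infty$ exceeds $\gamma(2n)$ times the total weight of all non-$W_\infty$ edges, every maximum cut of $G$ must cut all $n$ edges $(u_1,u_2)$; hence every maximum cut has the ``flipped'' form parametrized by a set $A\subseteq V_0$ (with $(S,\bar S)$ corresponding to $A=V_0$), and the computation in the proof of Lemma~\ref{lem:reduction} gives that the cut associated with $A$ has weight $w(E(S,\bar S))+2\big(\mathrm{dem}(E_d(A,\bar A))-\mathrm{cap}(E_c(A,\bar A))\big)$. Therefore the maximum cut of $G$ has weight strictly greater than $w(E(S,\bar S))$ exactly when some cut $(A,\bar A)$ has $\mathrm{cap}(E_c(A,\bar A))<\mathrm{dem}(E_d(A,\bar A))$, i.e.\ exactly when $\phi^*<1=\phi_0$; otherwise the maximum cut weight equals $w(E(S,\bar S))$.

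Now verify the two promise cases. If $\phi^*>\gamma(2n)\phi_0$, i.e.\ $\phi(A)>\gamma(2n)$ for every cut, then Lemma~\ref{lem:reduction} shows $G$ is $\gamma(2n)$-stable with maximum cut $(S,\bar S)$; since $\mathcal A$ is robust it must, on this $\gamma(2n)$-stable input, output a maximum cut (it cannot output $\perp$), and that cut has weight $w(E(S,\bar S))$, so $\mathcal B$ answers correctly. If $\phi^*<\phi_0$, then by the structural fact the maximum cut weight of $G$ exceeds $w(E(S,\bar S))$; robustness guarantees $\mathcal A$ outputs either $\perp$ or a maximum cut, and in the latter case its weight exceeds $w(E(S,\bar S))$, so $\mathcal B$ again answers correctly. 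The promise is used precisely to exclude the intermediate range $\phi_0\le\phi^*\le\gamma(2n)\phi_0$, in which the maximum-cut-weight test cannot separate the two cases. The point I expect to require the most care is that robustness---not merely correctness on stable instances---is essential here: a non-robust algorithm for $\gamma$-stable Max Cut could, when fed the non-stable instance $G$ arising from a sparse Sparsest Cut instance, return the suboptimal cut $(S,\bar S)$ and fool $\mathcal B$; robustness rules this out by forcing every returned cut to be a genuine maximum cut, so that $\mathcal B$ may trust its weight. The remaining steps (bounding $W_\infty$, verifying $|V(G)|=2n$, and checking all computations run in polynomial time) are routine.
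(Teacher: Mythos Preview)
Your proposal is correct and follows essentially the same route as the paper: normalize $\phi_0=1$, build the graph $G$ of Section~\ref{sec:reduction}, run $\mathcal A$, and decide based on whether the output corresponds to the canonical cut $(S,\bar S)$. The only cosmetic difference is that the paper tests whether the returned cut \emph{equals} $(S,\bar S)$, while you test whether its \emph{weight} equals $w(E(S,\bar S))$; since robustness guarantees any returned cut is a maximum cut, the two tests coincide. One small slip: in the construction each capacity edge $(u,v)\in E_c$ contributes two edges $(u_1,v_2)$ and $(v_1,u_2)$ to $E(S,\bar S)$, so $w(E(S,\bar S))=nW_\infty+2\,\mathrm{cap}(E_c)$, not $nW_\infty+\mathrm{cap}(E_c)$; this does not affect your argument.
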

\begin{proof}
We may assume that $\phi_0 = 1$ by dividing all edge weights by $\phi_0$. We apply reduction from Section~\ref{sec:reduction} and obtain a graph $G$ on $2n$ vertices. Then we
run $\cal A$ on $G$. If $\cal A$ returns the cut $(S, \bar S)$ (where $S = \{u_1:u\in V_0\}$),
we decide that $\phi^* > \gamma(2n)$. Otherwise, we decide that $\phi^* < 1$.

We prove that we always decide correctly. Assume first that $\phi^* > \gamma(2n)$ then $G$ is $\gamma(2n)$-stable and $(S, \bar S)$ is the maximum cut
by Lemma~\ref{lem:reduction}. Therefore, $\cal A$ returns $(S,\bar S)$ and we correctly decide that $\phi^* > \gamma(2n)$. 
Now assume that $\phi^* < 1$. Denote the sparsest cut in $G$ by $A$. Let $T = \set{u_1:u\in A}\cup \set{u_2:u\notin A}$.
We have,
\ifSODA
\begin{align*}
\mathrlap{w(E(S,\bar S)) - w(E(T,\bar T))}\quad &
 \\
&= w(E(S\cap T, \bar S\cap T)) + w(E(S\cap \bar T, \bar S\cap \bar T)) \\
&\quad - 
(w(E(S\cap T, S\cap \bar T)) + w(E(\bar S\cap T, \bar S\cap \bar T)))\\
&=2\, \mathrm{cap}(E_c(A,\bar A)) -  2\, \mathrm{dem}(E_d(A,\bar A)) < 0.
\end{align*}
\else
\begin{align*}
w(E(S,\bar S)) - w(E(T,\bar T)) &= w(E(S\cap T, \bar S\cap T)) + w(E(S\cap \bar T, \bar S\cap \bar T)) \\
&\quad - 
(w(E(S\cap T, S\cap \bar T)) + w(E(\bar S\cap T, \bar S\cap \bar T)))\\
&=2\, \mathrm{cap}(E_c(A,\bar A)) -  2\, \mathrm{dem}(E_d(A,\bar A)) < 0.
\end{align*}
\fi
Hence $(S,\bar S)$ is not a maximum cut. Since $\cal A$ is a robust algorithm it must either return a cut different from $(S,\bar S)$
or $\perp$. Therefore, we decide that $\phi^* > \gamma(2n)$. 

\end{proof}

We get as a corollary that if there is a robust polynomial-time algorithm for $\gamma$-stable instances of Max Cut then there is a $\gamma(2n)$-approximation
algorithm for Sparsest Cut (the algorithm finds the value of Sparsest Cut but
not the actual cut). 
\begin{corollary}
Suppose that there is a robust polynomial-time algorithm $\cal A$ for $\gamma$-stable instances of Max Cut with $\gamma \geq \gamma(n)$.
Then there is a polynomial-time algorithm for the decision version of Sparsest Cut 
that given an instance with value $\phi^*$ and $\varepsilon > 0$
outputs a value $\phi_{\text{approx}}$ between $(1-\varepsilon)\phi^*/\gamma(2n)$ 
and $\phi^*$.
\end{corollary}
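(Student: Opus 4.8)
The plan is to bootstrap the promise-decision algorithm $\cal B$ of Theorem~\ref{thm:hardness} into an approximation algorithm by a geometric binary search over the threshold parameter. For a real number $t>0$, call the following the \emph{test at scale $t$}: apply the reduction of Section~\ref{sec:reduction} with the sparsity threshold set to $t$ (i.e. rescale so that $t$ plays the role of $\phi_0=1$), obtaining a graph $G_t$ on $2n$ vertices, and run $\cal A$ on $G_t$; if $\cal A$ outputs the cut $(S,\bar S)$, the test \emph{returns} ``$\phi^*\ge t$'', and otherwise (a different cut, or $\perp$) it returns ``$\phi^*\le \gamma(2n)\,t$''. The heart of the argument is that \emph{whichever statement the test returns is true, with no promise assumed}. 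On the one hand, if $\cal A$ outputs $(S,\bar S)$ then by robustness $(S,\bar S)$ is a maximum cut of $G_t$, and the identity $w(E(S,\bar S))-w(E(T,\bar T))=2\,\mathrm{cap}(E_c(A,\bar A))-2\,\mathrm{dem}(E_d(A,\bar A))$ from the proof of Theorem~\ref{thm:hardness} (for the cut $T$ corresponding to a sparsest-cut candidate $A$) forces $\phi(A)\ge t$ for every $A$, so $\phi^*\ge t$. On the other hand, if $\phi^*>\gamma(2n)\,t$ then Lemma~\ref{lem:reduction} makes $G_t$ a $\gamma(2n)$-stable instance whose unique maximum cut is $(S,\bar S)$, so the robust algorithm $\cal A$ must output $(S,\bar S)$; contrapositively, whenever $\cal A$ outputs anything else the test's conclusion $\phi^*\le\gamma(2n)\,t$ holds.

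Next I would run the test on a geometric grid $t_0<t_1<\cdots<t_m$ with $t_{i+1}=(1+\varepsilon)t_i$, taking $t_0$ so small that $\gamma(2n)\,t_0$ lies below every possible value of $\phi^*$ and $t_m$ so large that it lies above every possible value of $\phi^*$; since $\phi^*$ is confined to an explicit range (at least the smallest capacity divided by the total demand, at most the total capacity divided by the smallest demand) and $\gamma(2n)$ is a fixed number for the given instance, $m$ is polynomial in the input length and $1/\varepsilon$. By the claim above, the test at $t_0$ cannot truthfully return ``$\le$'', hence returns ``$\phi^*\ge t_0$'', and the test at $t_m$ cannot truthfully return ``$\ge$'', hence returns ``$\phi^*\le\gamma(2n)t_m$''. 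A binary search that maintains a sub-interval whose left endpoint returned ``$\ge$'' and whose right endpoint returned ``$\le$'' then finds, with $O(\log m)$ invocations of $\cal A$, a consecutive pair $t_\ell,t_{\ell+1}=(1+\varepsilon)t_\ell$ for which the test returns ``$\phi^*\ge t_\ell$'' and ``$\phi^*\le\gamma(2n)(1+\varepsilon)t_\ell$''. The algorithm outputs $\phi_{\text{approx}}=t_\ell$.

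Finally I would check the guarantee. Combining the two true statements gives $t_\ell\le\phi^*\le\gamma(2n)(1+\varepsilon)t_\ell$, whence $\phi_{\text{approx}}=t_\ell\le\phi^*$ and $\phi_{\text{approx}}\ge \phi^*/\bigl(\gamma(2n)(1+\varepsilon)\bigr)\ge (1-\varepsilon)\,\phi^*/\gamma(2n)$, using $(1+\varepsilon)^{-1}\ge 1-\varepsilon$; the running time is $O(\log m)$ reductions and calls to $\cal A$, hence polynomial, with the stated $1/\varepsilon$ dependence entering through $m$. I expect the only genuine subtlety to be the ``test at scale $t$'' claim — that the returned statement is unconditionally correct — and in particular its two-fold use of robustness: a returned cut is genuinely optimal (the ``$\ge$'' side, needing no stability), while on a truly $\gamma(2n)$-stable $G_t$ the algorithm has no choice but to return the planted cut $(S,\bar S)$ (the ``$\le$'' side). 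Everything after that — the grid, the binary search, and the $\varepsilon$-arithmetic — is routine.
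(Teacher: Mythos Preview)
Your proof is correct and follows essentially the same strategy as the paper: run the decision procedure of Theorem~\ref{thm:hardness} over a geometric grid of thresholds and locate the switch-over point. A few minor differences are worth noting. The paper bootstraps the search range from the Arora--Rao--Vazirani algorithm, whereas you use crude explicit bounds on $\phi^*$; both give polynomially many grid points, and your version avoids the external dependence on ARV. More substantively, you observe that the test's output is \emph{unconditionally} true (not merely under the promise of Theorem~\ref{thm:hardness}), deriving this directly from robustness: a returned cut is always a genuine maximum cut, so the ``$\phi^*\ge t$'' branch needs no stability hypothesis at all. This lets you run binary search without worrying about non-monotonicity of the test outcomes, since any adjacent $(\ge,\le)$ pair yields a valid sandwich. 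The paper instead does a linear scan and argues via the promise cases; your argument is a bit cleaner but the content is the same.
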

\begin{proof}
Let $\phi_{ARV}$ be the approximate value of the problem given by the algorithm of Arora, Rao and Vazirani. We try all possible values of $\phi_0$ of the form
$(1+k \varepsilon) \phi_{ARV}$ in the range $(\phi_{ARV},(\alpha_{ARV}+\varepsilon)\phi_{ARV})$. For each value, we
run the algorithm $\cal B$ from Theorem~\ref{thm:hardness}. We find the smallest value $\phi_{\text{approx}}'$ of $\phi_0$ such that $\cal B$ returns that $\phi^* < \phi_0$.
Note that if $\phi_0 > \phi^*$ then the promise of Theorem~\ref{thm:hardness} 
is satisfied and thus the algorithm $\cal B$ 
returns that $\phi^* < \phi_0$. Therefore, $\phi_{\text{approx}}' \leq (1+\varepsilon) \phi^*$. 

Similarly, if $\phi_0 < \phi_*/\gamma(2n)$ then the promise is satisfied and thus 
$\cal B$ returns that $\phi_* >  \gamma(2n) \phi_0$. Therefore, $\phi_{\text{approx}}'
\geq \phi^*/\gamma(2n)$. We output $\phi_{\text{approx}} = \phi_{\text{approx}}'/ (1+\varepsilon)$.
\end{proof}

We note that Theorem~\ref{thm:hardness} implies that there is no polynomially-time tractable relaxation
for Max Cut that is integral on $\gamma$-stable instances if
$\gamma < \alpha_{SC}(n/2)$. If there was such a relaxation, by solving it, 
we would get a robust algorithm as we do in Corollary~\ref{cor:algorithm}.
\begin{corollary}
There is no polynomial-time tractable relaxation for Max Cut that is integral on $\gamma$-stable instances if
$\gamma < \alpha_{SC}(n/2)$.
\end{corollary}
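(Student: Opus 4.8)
The plan is to mimic the proof of Corollary~\ref{cor:algorithm}: a relaxation of the stated kind already \emph{is} a robust polynomial-time algorithm for $\gamma$-stable instances of Max Cut, and Theorem~\ref{thm:hardness} rules such an algorithm out once $\gamma < \alpha_{SC}(n/2)$. Concretely, suppose for contradiction that $\mathcal R$ is a convex relaxation of Max Cut that can be solved in polynomial time, whose optimum always upper-bounds the maximum-cut value, and whose optimal solution corresponds to an actual cut of $G$ whenever $G$ is $\gamma$-stable. Let $\mathcal A$ be the algorithm that solves $\mathcal R$ on the input $G$ and then: if the returned optimum corresponds to a cut $(S,\bar S)$, outputs $(S,\bar S)$; otherwise outputs the symbol $\perp$. (The usual precision issues in solving an SDP or LP are handled exactly as in Corollary~\ref{cor:algorithm}.)

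Next I would verify that $\mathcal A$ is a robust algorithm for $\gamma$-stable Max Cut (in the sense recalled in Section~\ref{sec:prelims}). Whenever the optimum of $\mathcal R$ on $G$ corresponds to a cut $(S,\bar S)$, the value of $\mathcal R$ equals $w(E(S,\bar S))$; since $\mathcal R$ is a relaxation, $w(E(S,\bar S))$ is sandwiched between the maximum-cut value and itself, so $(S,\bar S)$ is in fact a maximum cut, and $\mathcal A$ never outputs an incorrect cut. If $G$ is $\gamma$-stable, the assumed integrality of $\mathcal R$ guarantees the optimum corresponds to a cut, so $\mathcal A$ outputs a maximum cut. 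If $G$ is not $\gamma$-stable, $\mathcal A$ outputs either a maximum cut or $\perp$, both of which are permitted. Hence $\mathcal A$ is a robust polynomial-time algorithm for $\gamma$-stable instances of Max Cut.

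Finally I would invoke Theorem~\ref{thm:hardness}: the existence of $\mathcal A$ yields a polynomial-time procedure that, for every threshold $\phi_0$ and every $n$-vertex Sparsest Cut instance satisfying the promise, decides whether $\phi^* < \phi_0$ or $\phi^* > \gamma(2n)\phi_0$. Bootstrapping this with the Arora--Rao--Vazirani relaxation and a geometric search over $\phi_0$, exactly as in the corollary immediately preceding this one, converts it into a $\gamma$-approximation (up to an arbitrarily small multiplicative slack) for the decision version of Sparsest Cut on instances of half the size, i.e.\ $\alpha_{SC}(n/2) \le \gamma$. This contradicts the hypothesis $\gamma < \alpha_{SC}(n/2)$, and the proof is complete.

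Essentially all the content lives in Theorem~\ref{thm:hardness}, which we may assume, so there is no deep obstacle here; the two points that need care are bookkeeping rather than mathematics. First, one must make the phrase ``tractable relaxation that is integral on $\gamma$-stable instances'' precise — polynomial-time solvable, always upper-bounding the maximum-cut value, and having an integral optimum on every $\gamma$-stable instance. Second, one must track that the reduction of Section~\ref{sec:reduction} doubles the number of vertices, which is exactly why the statement reads $\alpha_{SC}(n/2)$ rather than $\alpha_{SC}(n)$. The one step where the defining properties of $\mathcal R$ are genuinely used — not merely cited — is the observation that an integral optimum of $\mathcal R$ must be a \emph{maximum} cut, which relies on $\mathcal R$ being an upper bound on the maximum-cut value.
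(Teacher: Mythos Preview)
Your proposal is correct and follows essentially the same route as the paper: build a robust algorithm from the hypothetical integral relaxation exactly as in Corollary~\ref{cor:algorithm}, then invoke Theorem~\ref{thm:hardness} (and its corollary) to derive $\alpha_{SC}(n/2) \le \gamma$, contradicting the hypothesis. The paper's own proof is in fact just the one-sentence remark preceding the corollary, so your write-up is more detailed but not different in substance.
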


\subsection{SDP Integrality Gap} \label{sec:intgap}
In this section, we prove that the SDP relaxation for $\gamma$-stable instances is not integral in general when 
$\gamma <D_{\ell_2^2\to \ell_1}(n)$. To this end, we show how to transform an integrality gap Sparsest Cut instance 
to a $\gamma$-stable Max Cut instance with a non-integral SDP solution.

We say that an instance of Sparsest Cut has integrality gap $D > 1$ if for some assignment of vectors $u \mapsto \bar u$ such that
the set $\{\bar u\}$ is an $\ell_2^2$ space, we have 
\ifSODA
\begin{align*}
\phi(A) &{}= \frac{\mathrm{cap}(E_c(A,\bar A))}{\mathrm{dem}(E_d(A,\bar A))} \\
&{}\geq D \cdot \frac{\sum_{(u,v)\in E_c} \mathrm{cap}_{uv}\cdot\|\bar u - \bar v\|^2}{\sum_{(u,v)\in E_d} \mathrm{dem}_{uv}\cdot\|\bar u - \bar v\|^2}
\end{align*}
for every cut $(A, \bar A)$.
\begin{fact}{\sc (Linial, London and Rabinovich~\cite{LLR}, Aumann and Rabani~\cite{AR})}
\else
$$
\phi(A) = \frac{\mathrm{cap}(E_c(A,\bar A))}{\mathrm{dem}(E_d(A,\bar A))} \geq D \cdot
\frac{\sum_{(u,v)\in E_c} \mathrm{cap}_{uv}\cdot\|\bar u - \bar v\|^2}{\sum_{(u,v)\in E_d} \mathrm{dem}_{uv}\cdot\|\bar u - \bar v\|^2}
\quad\text{for every cut} (A, \bar A).
$$
\begin{fact}[Linial, London and Rabinovich~\cite{LLR}, Aumann and Rabani~\cite{AR}]
\fi
\label{fact:integrality-gap-SC}
For every $n > 1$, there is an instance of Sparsest Cut on $n$ vertices with integrality gap $D_{\ell_2^2\to \ell_1}(n)$.
\end{fact}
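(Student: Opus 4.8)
The plan is to read off the instance from a worst-case non-embeddability certificate, using the classical duality between $\ell_1$-embeddability and Poincar\'e-type (cut) inequalities that underlies the results of Linial, London and Rabinovich~\cite{LLR} and Aumann and Rabani~\cite{AR}. First I would fix an $n$-point metric space of negative type whose least $\ell_1$-distortion equals exactly $D:=D_{\ell_2^2\to \ell_1}(n)$. Such a space exists: after normalizing, say, $\max_{u\ne v}d(u,v)=1$, the set of $n$-point negative-type metrics is a compact subset of $\bbR^{\binom{n}{2}}$, and the $\ell_1$-distortion is an upper semicontinuous function of the metric (an infimum of continuous functions, one for each combinatorial type of embedding), so it attains its supremum. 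Realizing this metric as vectors $\{\bar u:u\in V_0\}\subset\ell_2$ with $d(u,v)=\|\bar u-\bar v\|^2$, the set $\{\bar u\}$ is by definition an $\ell_2^2$ space; this will be the configuration witnessing the gap, and it remains to choose capacities and demands on $V_0$.

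The core step is to convert ``$d$ does not embed into $\ell_1$ with distortion $<D$'' into capacities and demands. Recall that finite $\ell_1$ metrics are exactly the nonnegative combinations of cut metrics $\delta_A$ (where $\delta_A(u,v)=1$ iff $A$ separates $u$ and $v$), so the least $\ell_1$-distortion of $d$ equals $\min\{c:\exists\,\rho\in\mathrm{CUT},\ d\le\rho\le c\cdot d\ \text{pointwise}\}$. Applying Farkas' lemma / LP duality over the cut cone to this program, non-embeddability with distortion $<D$ yields nonnegative symmetric weights $P=(P_{uv})$, $N=(N_{uv})$ on the pairs of $V_0$ such that
\[
\sum_{u,v}P_{uv}\,\delta_A(u,v)\ \ge\ \sum_{u,v}N_{uv}\,\delta_A(u,v)\quad\text{for every cut }A,
\qquad\text{while}\qquad
\sum_{u,v}N_{uv}\,d(u,v)\ =\ D\cdot\!\!\sum_{u,v}P_{uv}\,d(u,v).
\]
In words, the inequality $\sum N\cdot\sigma\le\sum P\cdot\sigma$ holds for every $\ell_1$ metric $\sigma$ but is violated by $d$ by the factor $D$; this is precisely the LLR/AR certificate. (The supremum defining the distortion is attained here because, once $\sum P\cdot d$ is normalized, the constraints $\sum P\,\delta_{\{u\}}\ge\sum N\,\delta_{\{u\}}$ for singletons bound $N$ in terms of $P$, so one maximizes a ratio over a compact set.)

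Now put $\mathrm{cap}_{uv}=P_{uv}$ with $E_c=\operatorname{supp}(P)$ and $\mathrm{dem}_{uv}=N_{uv}$ with $E_d=\operatorname{supp}(N)$. For any cut $(A,\bar A)$ we have $\mathrm{cap}(E_c(A,\bar A))=\sum_{u,v}P_{uv}\,\delta_A(u,v)$ and $\mathrm{dem}(E_d(A,\bar A))=\sum_{u,v}N_{uv}\,\delta_A(u,v)$, so the first displayed inequality gives $\phi(A)\ge 1$ (and $\phi(A)=\infty$ if the cut separates no demand pair), while
\[
\frac{\sum_{(u,v)\in E_c}\mathrm{cap}_{uv}\,\|\bar u-\bar v\|^2}{\sum_{(u,v)\in E_d}\mathrm{dem}_{uv}\,\|\bar u-\bar v\|^2}
=\frac{\sum_{u,v}P_{uv}\,d(u,v)}{\sum_{u,v}N_{uv}\,d(u,v)}=\frac1D.
\]
Hence $\phi(A)\ge 1=D\cdot\frac1D$ for every cut, which is exactly the stated definition of integrality gap $D=D_{\ell_2^2\to \ell_1}(n)$ witnessed by $\{\bar u\}$.

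The non-elementary inputs --- that some $n$-point negative-type metric realizes the worst-case distortion, and that this distortion is $O(\sqrt{\log n}\log\log n)$ --- are already recorded in the excerpt (Arora--Lee--Naor; Fact~\ref{fact:cut-embedding}), so I expect the only real work to be bookkeeping: orienting the cut-cone LP duality so that capacities sit on the ``$P$-side'' and demands on the ``$N$-side,'' and checking that the output is a bona fide instance in the sense of Definition~\ref{def:nuSparseCut}. For the latter, the certificate already forces $\sum P\,\delta_A>0$ whenever $\sum N\,\delta_A>0$, so every cut separating a demand pair also cuts a capacity edge; the remaining requirements (connectivity of the capacity graph, strict positivity of weights) can be arranged by restricting to a connected component, or by adding a vanishingly small uniform capacity and demand on all pairs, neither of which changes the gap.
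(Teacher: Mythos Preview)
The paper does not supply its own proof of this statement; it is recorded as a known fact attributed to \cite{LLR} and \cite{AR}. Your sketch correctly reconstructs the classical argument behind those references: take a worst-case $n$-point negative-type metric, use LP duality over the cut cone to extract a Poincar\'e-type certificate $(P,N)$, and read off capacities $P$ and demands $N$ so that the cut-cone inequality forces $\phi(A)\ge 1$ for every cut while the witnessing vectors give SDP ratio $1/D$.

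One small remark: your compactness argument for the supremum being attained, and your fix for connectivity/positivity by an $\varepsilon$-perturbation, are both slightly soft and may in principle cost an $\varepsilon$ in the gap. This is harmless here, since the only place the paper invokes this fact is Lemma~\ref{lem:special-SDP}, whose conclusion is already stated with a $D_{\ell_2^2\to\ell_1}-\varepsilon$ slack; so the level of precision in your sketch matches what the paper needs.
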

We will need the following technical lemma. 
%
\begin{lemma} \label{lem:special-SDP}
For every $n > 1$ and $\varepsilon > 0$, there exists a Sparsest Cut instance on $n$ vertices
and unit vectors $\{\bar u\}$ such that vectors $\{\pm \bar u\}$ form an $\ell_2^2$ space and 
\ifSODA for every cut $(A, \bar A)$\fi
\begin{align}
\mathrm{cap}(E_c(A,\bar A)) &> (D_{\ell_2^2\to \ell_1} - \varepsilon) \cdot \mathrm{dem}(E_d(A,\bar A)) 
\ifSODA\else\qquad\text{for every cut } (A, \bar A)\fi \label{ineq:combinatorial}\\
\sum_{(u,v)\in E_c} \mathrm{cap}_{uv}\cdot \|\bar u - \bar v\|^2 &< \sum_{(u,v)\in E_d} \mathrm{dem}_{uv}\cdot \|\bar u - \bar v\|^2 
\label{ineq:vector}.
\end{align}
\end{lemma}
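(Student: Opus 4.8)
The plan is to take an integrality gap instance for Sparsest Cut from Fact~\ref{fact:integrality-gap-SC} and postprocess it so that its optimal vector solution becomes a configuration of \emph{unit} vectors whose $\pm$-closure is still of negative type, at the cost of an arbitrarily small loss in the gap (absorbed by $\varepsilon$) and a global rescaling of the demands. By Fact~\ref{fact:integrality-gap-SC}, fix a Sparsest Cut instance $(V,E_c,E_d,\mathrm{cap},\mathrm{dem})$ on $n$ vertices together with a negative-type metric $d$ on $V$ such that, writing $R_0 = \frac{\sum_{(u,v)\in E_c}\mathrm{cap}_{uv}\,d(u,v)}{\sum_{(u,v)\in E_d}\mathrm{dem}_{uv}\,d(u,v)}$, we have $\phi(A)\ge D_{\ell_2^2\to\ell_1}\cdot R_0$ for every cut $(A,\bar A)$; in particular $\phi^* := \min_A \phi(A) \ge D_{\ell_2^2\to\ell_1}\cdot R_0$. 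We may assume $0<\varepsilon<D_{\ell_2^2\to\ell_1}$ (otherwise the right-hand side of (\ref{ineq:combinatorial}) is $\le 0$ while its left-hand side is positive for every non-trivial cut of the connected graph $(V,E_c)$, and it only remains to scale the demands up so that (\ref{ineq:vector}) holds) and, as usual, that the demand graph is connected, so every non-trivial cut separates some demand pair. Throughout, the combinatorial instance will be modified only by a global rescaling of all demands; only the accompanying vectors change.

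The main step is to replace $d$ by the \emph{strictly} negative-type metric $d_\eta := d + \eta\,d_\Delta$, where $d_\Delta(u,v)=\mathbf 1[u\ne v]$ is the simplex metric and $\eta>0$ is a small parameter fixed at the end. Let $B_\eta=\big(d_\eta(u,v)\big)_{u,v\in V}$ be its distance matrix, $\mathbf 1$ the all-ones vector, $\mathbf 1^\perp$ its orthogonal complement in $\mathbb R^V$, and $J=\mathbf 1\mathbf 1^\top$. Since $y^\top(J-I)y = -\|y\|^2$ whenever $\mathbf 1^\top y=0$, the form $-B_\eta$ is positive definite on $\mathbf 1^\perp$; consequently $x\mapsto x^\top B_\eta x$ is a concave quadratic on the hyperplane $\{x:\mathbf 1^\top x=1\}$ and is therefore bounded there by some finite $K$. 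I claim that for every $c\in(0,2/K]$ the symmetric matrix $M:=J-\tfrac{c}{2}B_\eta$ is positive semidefinite: for $x$ with $\mathbf 1^\top x=0$ we have $x^\top M x = \tfrac c2 x^\top(-B_\eta)x\ge 0$ by negative type, and for $x$ with $\mathbf 1^\top x=s\ne 0$, writing $x=s\,x'$ with $\mathbf 1^\top x'=1$ gives $x^\top M x = s^2\big(1-\tfrac c2 x'^\top B_\eta x'\big)\ge s^2\big(1-\tfrac c2 K\big)\ge 0$. As $M$ is PSD with unit diagonal, there exist unit vectors $\{\bar u\}\subset\ell_2$ with $\langle\bar u,\bar v\rangle = M_{uv} = 1-\tfrac c2 d_\eta(u,v)$, so $\|\bar u-\bar v\|^2 = c\,d_\eta(u,v)$ and $\|\bar u+\bar v\|^2 = 4-c\,d_\eta(u,v)$. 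Shrinking $c$ further so that $3c\max_{u\ne v}d_\eta(u,v)\le 8$, a direct case check shows that $\{\pm\bar u : u\in V\}$ satisfies all three families of $\ell_2^2$-triangle inequalities: two of them reduce to the triangle inequality for the metric $d_\eta$, and the third reduces to $c\big(d_\eta(u,v)+d_\eta(v,w)+d_\eta(u,w)\big)\le 8$. Hence $\{\pm\bar u\}$ is an $\ell_2^2$ space.

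It remains to choose $\eta$ and the demand-scaling. Because $\|\bar u-\bar v\|^2=c\,d_\eta(u,v)$, the factor $c$ cancels in
$$\frac{\sum_{(u,v)\in E_c}\mathrm{cap}_{uv}\,\|\bar u-\bar v\|^2}{\sum_{(u,v)\in E_d}\mathrm{dem}_{uv}\,\|\bar u-\bar v\|^2} \;=\; \frac{\sum_{(u,v)\in E_c}\mathrm{cap}_{uv}\,d_\eta(u,v)}{\sum_{(u,v)\in E_d}\mathrm{dem}_{uv}\,d_\eta(u,v)} \;=:\; R_\eta,$$
and $R_\eta\to R_0$ as $\eta\to 0^+$, while the combinatorial quantity $\phi^*$ does not depend on $\eta$ and satisfies $R_0\le \phi^*/D_{\ell_2^2\to\ell_1} < \phi^*/(D_{\ell_2^2\to\ell_1}-\varepsilon)$. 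Fix $\eta>0$ small enough that $R_\eta < \phi^*/(D_{\ell_2^2\to\ell_1}-\varepsilon)$, and then pick any $\mu$ in the non-empty interval $\big(R_\eta,\ \phi^*/(D_{\ell_2^2\to\ell_1}-\varepsilon)\big)$. Rescale every demand $\mathrm{dem}_{uv}$ to $\mu\cdot\mathrm{dem}_{uv}$; this does not touch the vectors or the capacities, and it divides $R_\eta$ and every $\phi(A)$ by $\mu$. Then $R_\eta/\mu<1$ is exactly (\ref{ineq:vector}), and for every cut separating a demand pair, $\phi(A)/\mu \ge \phi^*/\mu > D_{\ell_2^2\to\ell_1}-\varepsilon$, which is (\ref{ineq:combinatorial}); cuts separating no demand pair are non-trivial, hence cut a capacity edge in the connected graph $(V,E_c)$, so (\ref{ineq:combinatorial}) holds there with right-hand side $0$. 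The vertex set is still $V$, of size $n$, which completes the construction.

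The crux is the middle paragraph: realizing a prescribed negative-type metric by \emph{unit} vectors whose antipodal closure is again of negative type. This is exactly why the perturbation $d\rightsquigarrow d_\eta$ cannot be omitted --- for a metric lying on the boundary of the negative-type cone the matrix $J-\tfrac c2 B$ fails to be PSD for every $c>0$ (a null direction of $-B$ in $\mathbf 1^\perp$ makes $x^\top B x$ unbounded on $\{\mathbf 1^\top x=1\}$), so such a metric generally has no unit-vector realization with proportional squared distances; pushing it into the interior of the cone with the simplex metric, and letting the slack $\varepsilon$ swallow the resulting perturbation of the ratio $R_\eta$, is what makes the argument go through.
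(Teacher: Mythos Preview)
Your proof is correct and follows essentially the same strategy as the paper: perturb the negative-type metric to make it \emph{strictly} negative type, realize it by unit vectors with small pairwise squared distances, and verify that the antipodal closure $\{\pm\bar u\}$ then satisfies all three families of $\ell_2^2$-triangle inequalities, with the slack $\varepsilon$ absorbing the perturbation. In fact your perturbation $d\mapsto d+\eta\,d_\Delta$ is \emph{exactly} the paper's $\bar u\mapsto \bar u+\delta\,\bar e_u$ read at the level of squared distances (with $\eta=2\delta^2$).

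The execution of the unit-vector realization is where you differ. The paper argues geometrically: after perturbation the points are in general position, hence lie on a sphere of some radius $R$ centered at $\bar c$; translating, scaling by $1/(2R)$, and adding a large orthogonal component $\tfrac{\sqrt{3}}{2}\bar z$ yields unit vectors with $\|\tilde u-\tilde v\|^2=\|\bar u'-\bar v'\|^2/(4R^2)\le 1$. You instead argue algebraically via Schoenberg's criterion: strict negative type makes $x\mapsto x^\top B_\eta x$ strictly concave on $\{\mathbf 1^\top x=1\}$, hence bounded by some $K$, whence $J-\tfrac{c}{2}B_\eta$ is PSD with unit diagonal for all $c\le 2/K$, giving unit vectors with $\|\bar u-\bar v\|^2=c\,d_\eta(u,v)$; shrinking $c$ further forces the third triangle inequality. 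The two constructions produce the same Gram matrix (your $c$ plays the role of the paper's $1/(4R^2)$); your route avoids the intermediate circumscribed-sphere step and gets the scaling bound from a single concavity observation.

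Two minor remarks. The assumption that the demand graph is connected is never actually used---you (correctly) handle cuts separating no demand pair via connectivity of the \emph{capacity} graph $(V,E_c)$, which is part of Definition~\ref{def:nuSparseCut}. And in your closing discussion, a null direction $y\in\mathbf 1^\perp$ of $-B$ does not automatically make $x^\top Bx$ unbounded on $\{\mathbf 1^\top x=1\}$ (one needs $By\notin\operatorname{span}\mathbf 1$); but this is only motivational and does not affect the proof.
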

To prove the lemma, we first rescale demands so that conditions~(\ref{ineq:combinatorial}) and~(\ref{ineq:vector}) hold. Then we
transform vectors $u$ so that all of them lie on the unit sphere. Specifically, if all vectors $u_i$ lie on some sphere,
we 
scale all vectors $u_i$ to unit vectors and move the origin to the center of the sphere; these transformations preserve ratios of
distances between vectors. In a degenerate case, when all vectors $u_i$ do not lie on a sphere, we first slightly perturb all vectors and
then apply the above argument. The formal proof is a bit technical, so we present it in Appendix~\ref{sec:proof-special-SDP}.

\begin{theorem}\label{thm:sdp-gap}
For every $n$ and $\gamma\in [1,D_{\ell_2^2\to \ell_1}(n/2))$, there is a $\gamma$-stable instance $G$ of Max Cut
on $2n$ vertices, such that that the SDP relaxation for $G$ is not integral.
\end{theorem}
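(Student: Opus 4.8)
The plan is to build $G$ as the image, under the reduction of Section~\ref{sec:reduction}, of a suitable Sparsest Cut integrality-gap instance, and then read off a non-integral optimal SDP solution for $G$ from the vector solution of that instance. Concretely, fix $n$ and $\gamma \in [1, D_{\ell_2^2\to \ell_1}(n/2))$. First I would pick $\varepsilon > 0$ so small that $D_{\ell_2^2\to \ell_1}(n) - \varepsilon > \gamma$ --- possible because $D_{\ell_2^2\to \ell_1}$ is non-decreasing, so $D_{\ell_2^2\to \ell_1}(n) \ge D_{\ell_2^2\to \ell_1}(n/2) > \gamma$ --- and apply Lemma~\ref{lem:special-SDP} to obtain a Sparsest Cut instance on a vertex set $V_0$ of size $n$, together with unit vectors $\{\bar u : u\in V_0\}$ such that $\set{\pm\bar u : u\in V_0}$ is an $\ell_2^2$ space and both~(\ref{ineq:combinatorial}) and~(\ref{ineq:vector}) hold (with the constant in~(\ref{ineq:combinatorial}) being $D_{\ell_2^2\to \ell_1}(n)$). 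Inequality~(\ref{ineq:combinatorial}) then reads $\mathrm{cap}(E_c(A,\bar A)) > (D_{\ell_2^2\to \ell_1}(n) - \varepsilon)\,\mathrm{dem}(E_d(A,\bar A)) > \gamma\, \mathrm{dem}(E_d(A,\bar A))$ for every cut $(A,\bar A)$, i.e.\ $\phi(A) > \gamma$ everywhere; so by Lemma~\ref{lem:reduction} the $2n$-vertex Max Cut instance $G$ produced by the reduction is $\gamma$-stable, with maximum cut $(S,\bar S)$, $S = \set{u_1 : u\in V_0}$.

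Next I would show the SDP for $G$ is not integral by exhibiting a feasible SDP solution with objective strictly larger than $w(E(S,\bar S))$. This suffices: the SDP feasible region --- unit vectors obeying the $\ell_2^2$-triangle inequalities --- is compact, so an optimal SDP solution exists, and $w(E(S,\bar S))$ is the maximum objective over integral solutions; hence any optimal SDP solution would beat all integral ones and so cannot be integral. The solution is the obvious ``fold'': put $\bar u_1 := \bar u$ and $\bar u_2 := -\bar u$ for the two copies of each $u\in V_0$. It is feasible because these are unit vectors and because, as noted just after the SDP is stated, the SDP's triangle-inequality constraints are exactly the requirement that $\set{\pm\bar u_1, \pm\bar u_2 : u\in V_0} = \set{\pm\bar u : u\in V_0}$ be an $\ell_2^2$ space, which Lemma~\ref{lem:special-SDP} guarantees.

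The remaining work is a bookkeeping calculation: comparing $\tfrac14\sum_{(x,y)\in E} w_{xy}\|\bar x - \bar y\|^2$ for this solution with $w(E(S,\bar S))$, splitting the edges of $G$ into the three families of the reduction. Using unit-norm identities --- $\|\bar u_1 - \bar u_2\|^2 = \|2\bar u\|^2 = 4$, so the $W_\infty$-edges contribute the same $n W_\infty$ to both quantities; $\|\bar u_1 - \bar v_1\|^2 = \|\bar u_2 - \bar v_2\|^2 = \|\bar u - \bar v\|^2$; and $\|\bar u_1 - \bar v_2\|^2 = \|\bar u + \bar v\|^2 = 4 - \|\bar u - \bar v\|^2$ --- one finds that the objective of the folded solution equals $w(E(S,\bar S))$ plus a positive multiple of $\sum_{(u,v)\in E_d}\mathrm{dem}_{uv}\|\bar u - \bar v\|^2 - \sum_{(u,v)\in E_c}\mathrm{cap}_{uv}\|\bar u - \bar v\|^2$ (the demand edges are uncut by $(S,\bar S)$ but contribute positively to the SDP, while the capacity edges are ``cut less sharply'' by the SDP than by $(S,\bar S)$). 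That difference is strictly positive by~(\ref{ineq:vector}), which is exactly what we need. The hard part is not any single step --- all of the conceptual content is already packed into Lemma~\ref{lem:reduction} and Lemma~\ref{lem:special-SDP} --- but lining the two inequalities up correctly: making the \emph{same} instance deliver both $\gamma$-stability (through the combinatorial inequality~(\ref{ineq:combinatorial}), used via Lemma~\ref{lem:reduction}) and a strictly improving feasible SDP point (through the vector inequality~(\ref{ineq:vector})), while keeping the vectors unit vectors with $\set{\pm\bar u}$ an $\ell_2^2$ space so that the fold is admissible.
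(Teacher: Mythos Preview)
Your proposal is correct and follows essentially the same route as the paper: apply Lemma~\ref{lem:special-SDP} to get the Sparsest Cut instance, feed it through the reduction of Section~\ref{sec:reduction}, invoke Lemma~\ref{lem:reduction} with inequality~(\ref{ineq:combinatorial}) for $\gamma$-stability, and then verify that the folded solution $\bar u_1 = \bar u,\ \bar u_2 = -\bar u$ is feasible and has SDP value $w(E(S,\bar S)) + \tfrac12\bigl(\sum_{E_d}\mathrm{dem}_{uv}\|\bar u - \bar v\|^2 - \sum_{E_c}\mathrm{cap}_{uv}\|\bar u - \bar v\|^2\bigr) > w(E(S,\bar S))$ via~(\ref{ineq:vector}). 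Your explicit justification that $\varepsilon$ can be chosen positive (monotonicity of $D_{\ell_2^2\to\ell_1}$) and your compactness remark for existence of an optimal SDP solution are reasonable additions the paper leaves implicit.
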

\begin{proof}
Let $\varepsilon = (D_{\ell_2^2\to \ell_1}(n) - \gamma)/2$.
From Lemma~\ref{lem:special-SDP}, we get a Sparsest Cut instance on a set $V_0$ and vectors $\bar u$ such that 
$\set{\pm \bar u}$ is an $\ell_2^2$ space
and inequalities (\ref{ineq:combinatorial}) and (\ref{ineq:vector}) hold. 
We apply reduction from Section~\ref{sec:reduction} to this instance and obtain a graph $G(V,E, w)$.
From Lemma~\ref{lem:reduction} and inequality~(\ref{ineq:combinatorial}), we get that $G$ is a $\gamma$-stable Max Cut instance.

We define an SDP solution for the SDP relaxation for $G$ by $\bar u_1 = \bar u$ and $\bar u_2 = -\bar u$. Since all vectors $\bar u$ are unit vectors and 
$\set{\pm \bar u}$ is an $\ell_2^2$ space, this is a feasible SDP solution.
Its value equals
\ifSODA
\begin{align*}
SDP &\equiv \frac{1}{4} \sum_{(x,y) \in E} w(x,y) \|\bar x- \bar y\|^2 = n W_{\infty} \\
       &\ {} + \frac{1}{4} \sum_{(u,v) \in E_d} \mathrm{dem}_{uv} \cdot(\|\bar u_1- \bar v_1\|^2 + \|\bar u_2- \bar v_2\|^2) \\
       &\ {}+ \frac{1}{4} \sum_{(u,v) \in E_c} \mathrm{cap}_{uv} \cdot (\|\bar u_1- \bar v_2\|^2 + \|\bar u_2- \bar v_1\|^2)\\
       &= n W_{\infty} + \frac{1}{2} \sum_{(u,v) \in E_d} \mathrm{dem}_{uv} \cdot \|\bar u - \bar v\|^2  \\
       &\ + \frac{1}{2} \sum_{(u,v) \in E_c} \mathrm{cap}_{uv} \|\bar u + \bar v\|^2.
\end{align*}
\else
\begin{align*}
SDP &\equiv \frac{1}{4} \sum_{(x,y) \in E} w(x,y) \|\bar x- \bar y\|^2 \\
       &= n W_{\infty} + \frac{1}{4} \sum_{(u,v) \in E_d} \mathrm{dem}_{uv} \cdot(\|\bar u_1- \bar v_1\|^2 + \|\bar u_2- \bar v_2\|^2) \\
       &\phantom{{}= n W_{\infty}} {}+ \frac{1}{4} \sum_{(u,v) \in E_c} \mathrm{cap}_{uv} \cdot (\|\bar u_1- \bar v_2\|^2 + \|\bar u_2- \bar v_1\|^2)\\
       &= n W_{\infty} + \frac{1}{2} \sum_{(u,v) \in E_d} \mathrm{dem}_{uv} \cdot \|\bar u - \bar v\|^2  + \frac{1}{2} \sum_{(u,v) \in E_c} \mathrm{cap}_{uv} \|\bar u + \bar v\|^2.
\end{align*}
\fi
Using that $\|\bar u + \bar v\|^2 = 4- \|\bar u- \bar v\|^2$, we get 
\ifSODA
$SDP = (n W_{\infty} + 2 \mathrm{cap} (E_c)) +  \frac{1}{2}\sum_{(u,v) \in E_d} \mathrm{dem}_{uv} \cdot \|\bar u - \bar v\|^2 -  \frac{1}{2}\sum_{(u,v) \in E_c} \mathrm{dem}_{uv} \|\bar u - \bar v\|^2$.
\else
$$SDP = (n W_{\infty} + 2 \mathrm{cap} (E_c)) +  \frac{1}{2}\sum_{(u,v) \in E_d} \mathrm{dem}_{uv} \cdot \|\bar u - \bar v\|^2 -  \frac{1}{2}\sum_{(u,v) \in E_c} \mathrm{dem}_{uv} \|\bar u - \bar v\|^2.$$
\fi
The first term $n W_{\infty} + 2 \mathrm{cap} (E_c)$ equals $w(S,\bar S)$ (where $(S,\bar S)$ is maximum cut).
From inequality~(\ref{ineq:vector}), we get 
\ifSODA
$SDP = w(S, \bar S) +  \frac{1}{2}\sum_{(u,v) \in E_d} \mathrm{dem}_{uv} \cdot \|\bar u - \bar v\|^2 -  \frac{1}{2}\sum_{(u,v) \in E_c} \mathrm{dem}_{uv} \|\bar u - \bar v\|^2 > w(S,\bar S)$.
\else
$$SDP = w(S, \bar S) +  \frac{1}{2}\sum_{(u,v) \in E_d} \mathrm{dem}_{uv} \cdot \|\bar u - \bar v\|^2 -  \frac{1}{2}\sum_{(u,v) \in E_c} \mathrm{dem}_{uv} \|\bar u - \bar v\|^2 > w(S,\bar S).$$
\fi
We conclude that the optimal SDP solution has value at least $SDP$, which is greater than $w(S,\bar S)$.
Therefore, the SDP relaxation is not integral.

\end{proof}

\subsection{Hardness Result for Max \texorpdfstring{$k$}{k}-Cut}
\label{sec:max-k-cut}
In this section, we prove a hardness result for Max $k$-Cut.  
\begin{Definition}
The Max $k$-Cut problem is to partition a given weighted graph $G$ into $k$ pieces so as to maximize the total weight of cut edges.
\end{Definition}

\begin{Definition}
Let us say that an instance $G=(V,E,w)$ of Max $k$-Cut is $\infty$-stable if it is $\gamma$-stable for every $\gamma$.
That is, there is a partition $\cal P$ of $V$ such that for every set of positive weights $w'$, $\cal P$ is an optimal solution
for Max Cut instance $G'=(V,E,w')$.
\end{Definition}

\begin{claim}\label{claim:neg-max-k-cut}
For every $k \geq 3$, there is no polynomial-time algorithm that solves $\infty$-stable instances of Max $k$-Cut unless $NP=RP$.
\end{claim}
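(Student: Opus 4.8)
The plan is to characterize $\infty$-stable instances combinatorially and then to apply the Valiant--Vazirani isolation lemma. Restrict attention to unit edge weights (a legitimate special case). I claim that an unweighted graph $G=(V,E)$ is an $\infty$-stable instance of Max $k$-Cut exactly when $G$ is $k$-colorable and has a \emph{unique} partition into (nonempty) independent sets, i.e.\ $G$ is uniquely $k$-colorable up to renaming colors. For the forward direction, let $\mathcal P$ be the partition that remains the unique optimum under every perturbation $w'\ge 1$ (this uses the uniqueness built into $\gamma$-stability); if some edge $(u,v)$ were not cut by $\mathcal P$, take $w'(u,v)=|E|+1$ and $w'\equiv 1$ elsewhere, and note that since $k\ge 2$ some partition cuts $(u,v)$ and hence strictly beats $\mathcal P$ — a contradiction. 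So $\mathcal P$ cuts every edge; and since every partition into independent sets has the same value under $w'\equiv 1$, uniqueness forces $\mathcal P$ to be the only one. Conversely, if $G$ has a unique partition $\mathcal P$ into independent sets, then for every positive $w'$ the value $w'(E)$ is attained by $\mathcal P$ and no one else, so $\mathcal P$ is the unique optimum of every perturbation. Consequently, if $\mathcal A$ is a polynomial-time algorithm solving $\infty$-stable Max $k$-Cut instances, then on an arbitrary input $G$ it returns a partition, and: whenever that partition is a proper $k$-coloring — a property checkable in polynomial time — $G$ is genuinely $k$-colorable; and this check necessarily succeeds whenever $G$ is uniquely $k$-colorable (since then $G$ is $\infty$-stable, so $\mathcal A$ returns its unique optimal partition, which is a proper coloring).

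It remains to reduce the $k$-coloring problem — $NP$-complete for every fixed $k\ge 3$ — to the uniquely-$k$-colorable case by a randomized reduction. Given a graph $G_0$, I would: (i) reduce ``$G_0$ is $k$-colorable'' to satisfiability of a CNF formula $\phi$; (ii) apply Valiant--Vazirani to $\phi$, obtaining a random CNF $\phi'$ that is unsatisfiable whenever $\phi$ is, and that has a unique satisfying assignment with probability at least $1/q$ (for a polynomial $q$) whenever $\phi$ is satisfiable; (iii) apply a reduction from satisfiability to $k$-coloring that is \emph{parsimonious up to the $S_k$ color-renaming symmetry} and keeps the chromatic number equal to $k$ on satisfiable inputs — for $k>3$ one composes the parsimonious $3$-\textsc{Sat}-to-$3$-coloring gadget reduction with joining a clique on $k-3$ fresh vertices to the whole graph. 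The output graph $G$ is then not $k$-colorable whenever $G_0$ is not, and with probability $\ge 1/q$ it is uniquely $k$-colorable (with $\chi(G)=k$), hence $\infty$-stable, whenever $G_0$ is $k$-colorable. This yields an $RP$ algorithm for $k$-coloring: sample $G$ this way $q$ times, run $\mathcal A$ on each sample, and accept iff some run outputs a proper $k$-coloring of its graph. There are no false accepts (on a no-instance no sampled $G$ is $k$-colorable, so no such output exists), while on a yes-instance each sample is $\infty$-stable — and thus handled correctly by $\mathcal A$ — with probability $\ge 1/q$, so the overall acceptance probability exceeds $1/2$. Hence $NP\subseteq RP$, i.e.\ $NP=RP$.

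The one delicate ingredient is step (iii): one must realize a reduction from satisfiability to $k$-coloring by gadgets that are parsimonious modulo the unavoidable permutation of colors — every variable and clause gadget must admit, for each legal coloring of its interface, exactly one consistent coloring of its internal vertices — while keeping the chromatic number exactly $k$. Such reductions are standard, but verifying the gadget counts is where the real work lies; the structural characterization of the first paragraph and the $RP$ bookkeeping are routine. Two remarks. First, $k\ge 3$ is essential: for $k=2$ the target problem is bipartiteness, which is in $P$, and correspondingly the characterization above would only say that $G$ is connected and bipartite. Second, if one instead reads $\infty$-stability in the weaker sense where $\mathcal P$ need only be \emph{an} optimum of every perturbation, then $\infty$-stable instances are precisely the $k$-colorable graphs, and the same argument applied directly to $G_0$ (with no isolation step) even yields $P=NP$.
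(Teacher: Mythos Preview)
Your approach is correct and its core coincides with the paper's: both observe that a uniquely $k$-colorable graph with unit weights is an $\infty$-stable instance of Max $k$-Cut (since the unique proper coloring cuts every edge under any positive weighting), and then invoke the hardness of recovering the coloring of a uniquely $k$-colorable graph. The paper simply cites Barbanchon's theorem that this promise problem is not in $P$ unless $NP=RP$; you instead sketch a proof of that theorem via Valiant--Vazirani isolation composed with a parsimonious (modulo $S_k$) reduction from SAT to $k$-coloring. Your ``delicate ingredient'' --- the parsimonious gadget reduction in step (iii) --- is precisely the technical content of Barbanchon's paper, so what you flag as the real work is exactly what the paper outsources to a citation. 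The extension from $k=3$ to $k>3$ via joining a $(k-3)$-clique to all vertices is the same trick the paper uses (in a footnote). You additionally prove the converse of the characterization (every $\infty$-stable unit-weight instance is uniquely $k$-colorable), which the paper neither needs nor states; this is a nice sharpening but not required for the claim.
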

\begin{proof}
The claim easily follows from the hardness result for the Unique $k$-Coloring problem by Barbanchon~\cite{Barbanchon}.
Recall that a graph $G$ is uniquely $k$ colorable if there exists exactly one proper coloring of $G$ in $k$ colors
(up to permutation of the colors).
Barbanchon~\cite{Barbanchon} showed%
\footnote{Barbanchon~\cite{Barbanchon} states his result only for $k=3$. The result for $k > 3$ follows
  from his result as follows. For a graph $G$, let $G'$ be the union of graphs $G$ and $K_{k-3}$ in which every vertex of $G$ is connected
  with every vertex of $K_{k-3}$. Then $G$ is uniquely 3-colorable if and only if $G'$ is uniquely $k$-colorable.}
that there is no polynomial algorithm that given a uniquely $k$-colorable graph finds its $k$ coloring unless $NP=RP$.

Let $G$ be a uniquely $k$-colorable graph. We assign each edge of $G$ weight $1$ and obtain an instance of Max $k$-Cut.
We show that the instance is $\infty$-stable. Let $\cal P$ be the partition corresponding to 
the unique $k$-coloring $\cal C$ of $G$. Note that no matter what positive weights we assign to edges,
the value of $\cal P$ equals the total weight of all edges in the graph (since $\cal P$ cuts all edges).
Thus $\cal P$ is an optimal $k$-partition. Moreover, $\cal P$ is the only optimal partition. Indeed
if a $k$-partition ${\cal P}'$ cuts all edges, then the coloring that colors every piece in ${\cal P}'$
in its own color is a proper $k$-coloring, and thus it is equal to $\cal C$ (up to permutation of the colors).
The result of Barbanchon implies that there is no polynomial-time algorithm that finds the optimal 
Max $k$-Cut in $G$ unless $NP=RP$.
\end{proof}


\section{Weakly Stable Instances}\label{sec:weakstability}
\subsection{Weakly Stable Instances of Max Cut}
In this section, we define a relaxed notion of stability, which we call weak stability, and give an algorithm for approximately solving weakly stable
instances of Max Cut.
We note that 
Awasthi, Blum and Sheffet~\cite{Awasthietal} and Balcan and Liang~\cite{BalcanL} studied a very closely related notion of \textit{perturbation resilience} for the $k$-Median Clustering problem.
\begin{Definition}\label{def:weak-stabilityA}
Consider a weighted graph $G=(V,E,w)$. Let $(S,\bar S)$ be a maximum cut in $G$, $N$ be a set of cuts that contains $(S,\bar S)$, and $\gamma \geq 1$.
We say that $G$ is a  $(\gamma, N)$-weakly stable instance of Max Cut 
if for every $\gamma$-perturbation $G'=(V,E,w')$ of $G$, we have
$$w'(E(S,\bar S)) \geq w'(E(T,\bar T)).$$
\end{Definition}
This definition is equivalent to the following definition (see Appendix~\ref{apx:equivalence} for the proof).
\begin{Definition}\label{def:weak-stability}
Consider a weighted graph $G=(V,E,w)$. Let $(S,\bar S)$ be a maximum cut in $G$, $N$ be a set of cuts that contains $(S,\bar S)$, and $\gamma \geq 1$.
We say that $G$ is a  $(\gamma, N)$-weakly stable instance of Max Cut if for every cut $(T, \bar T)\notin N$:
$$w(E(S, \bar S)\setminus E(T, \bar T)) > \gamma\cdot w(E(T, \bar T )\setminus E(S, \bar S)).$$
\end{Definition}
The notion of weak stability generalizes the notion of stability: an instance is $\gamma$-stable if and only if it is $(\gamma, \set{(S,\bar S)})$-weakly
stable. We think of the set $N$ in the definition of weak stability as a neighborhood of the maximum cut $(S, \bar S)$; it contains cuts that are ``close enough'' to $(S, \bar S)$. Intuitively, the definition requires that every cut that is sufficiently different from $(S,\bar S)$ is much smaller then $(S,\bar S)$, but does not impose any restrictions on cuts that are close to $(S,\bar S)$. One natural way to define the neighborhood of $(S,\bar S)$
is captured in the following definition.
\begin{Definition}\label{def:weak-stability-delta}
Consider a weighted graph $G$. Let $(S,\bar S)$ be a maximum cut in $G$, $\delta \geq 0$, and $\gamma \geq 1$.
We say that $G$ is a  $(\gamma, \delta)$-weakly stable instance of Max Cut if $G$ is $(\gamma, \set{(S',\bar S'): |S\Delta S'| \leq \delta n})$-weakly
stable.
In other words,  $G$ is  $(\gamma, \delta)$-weakly stable if 
for every cut $(T, \bar T)$ such that $|S \Delta T| > \delta n$ and $|S \Delta {\bar T}| > \delta n$, we have
$$w(E(S, \bar S)\setminus E(T, \bar T)) > \gamma\cdot w(E(T, \bar T )\setminus E(S, \bar S)).$$
\end{Definition}

The main result of this section is the following theorem.
\begin{theorem}\label{thm:weakly-stable}
There is a polynomial-time algorithm that given a $(\gamma, N)$-stable instance of Max Cut, returns a cut from $N$ if $\gamma \geq c\sqrt{\log n}\log \log n$ (for some absolute constant $c$).
(The set $N$ is not part of the input and is not known to the algorithm.)
\end{theorem}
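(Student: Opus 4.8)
I would use a local-search scheme whose only nontrivial subroutine is the Arora--Lee--Naor approximation algorithm for non-uniform Sparsest Cut, together with Fact~\ref{fact:cut-embedding} and the vector transformation from the proof of Theorem~\ref{thm:sdp-integral}. We maintain a cut $(T,\bar T)$ of $G$, initialized to (say) the Goemans--Williamson cut. In each iteration we build an auxiliary non-uniform Sparsest Cut instance $H_T$ on the same vertex set $V$: every edge cut by $(T,\bar T)$ becomes a capacity edge with $\mathrm{cap}_{uv}=w_{uv}$, and every edge not cut by $(T,\bar T)$ becomes a demand pair with $\mathrm{dem}_{uv}=w_{uv}$. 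For a set $B\subseteq V$, the edges whose cut status changes when we pass from $(T,\bar T)$ to $(T\triangle B,\,V\setminus(T\triangle B))$ are exactly $E(B,\bar B)$, so the cut weight changes by $w(E_d(B,\bar B))-w(E_c(B,\bar B))$, which is positive iff $\phi_{H_T}(B)<1$. We run ALN on $H_T$; it returns a cut $(B,\bar B)$ with $\phi_{H_T}(B)\le \alpha_{SC}(n)\cdot\phi^*_{H_T}\le D_{\ell_2^2\to\ell_1}(n)\cdot\phi^*_{H_T}$. If $\phi_{H_T}(B)<1$ we replace $T$ by $T\triangle B$ and iterate; otherwise we stop and output $(T,\bar T)$.

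\textbf{Correctness.} The key claim is that whenever $(T,\bar T)\notin N$, the call to ALN returns a cut with $\phi_{H_T}(B)<1$, so the loop cannot stop outside $N$, and the output is always a cut of $N$. Indeed, since $N$ contains the maximum cut $(S,\bar S)$ we have $T\neq S$ and $T\neq\bar S$, so $A^*:=S\triangle T$ is a proper nonempty subset of $V$ and $T\triangle A^*=S$. Hence $E(A^*,V\setminus A^*)=E(S,\bar S)\triangle E(T,\bar T)$; among these edges, the ones that are capacity edges of $H_T$ (i.e.\ cut by $T$) are exactly $E(T,\bar T)\setminus E(S,\bar S)$, and the ones that are demand pairs are exactly $E(S,\bar S)\setminus E(T,\bar T)$. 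Therefore
$$\phi_{H_T}(A^*)=\frac{w\bigl(E(T,\bar T)\setminus E(S,\bar S)\bigr)}{w\bigl(E(S,\bar S)\setminus E(T,\bar T)\bigr)}<\frac1\gamma,$$
where the strict inequality is precisely Definition~\ref{def:weak-stability} applied to $(T,\bar T)\notin N$ (which also forces the denominator to be positive). So $\phi^*_{H_T}<1/\gamma$, and the ALN cut satisfies $\phi_{H_T}(B)<\alpha_{SC}(n)/\gamma$. Taking the absolute constant $c$ larger than the constant hidden in $\alpha_{SC}(n)\le D_{\ell_2^2\to\ell_1}(n)=O(\sqrt{\log n}\log\log n)$ makes $\alpha_{SC}(n)/\gamma<1$ whenever $\gamma\ge c\sqrt{\log n}\log\log n$, so $\phi_{H_T}(B)<1$ and we improve. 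Consequently the algorithm can only halt at a cut of $N$.

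\textbf{Termination and running time.} Every iteration strictly increases $w(E(T,\bar T))$, which is bounded above by $\mathrm{OPT}$, so the algorithm terminates; with integer weights each increase is at least $1$, which already gives a pseudo-polynomial bound. To obtain a genuinely polynomial bound the plan is to show that the gap $g:=\mathrm{OPT}-w(E(T,\bar T))$ shrinks by a constant factor at each step, so that $O(\log\mathrm{OPT})$ iterations suffice (after a harmless rescaling/rounding of the weights to polynomially many bits, which only weakens $\gamma$-stability by a $1+o(1)$ factor, absorbed into $c$). For this, one feeds ALN not the bare instance $H_T$ but its SDP relaxation, using as the witness feasible point the transformed optimal Max Cut SDP solution $\{\bar u\}$ ($\hat u=\bar u$ for $u\in T$ and $\hat u=-\bar u$ otherwise, as in Theorem~\ref{thm:sdp-integral}); since the Max Cut SDP value is at least $\mathrm{OPT}$, the same computation as in the proof of Theorem~\ref{thm:sdp-integral} yields
$$\sum_{(u,v)\notin E(T,\bar T)}w_{uv}\,\|\hat u-\hat v\|^2-\sum_{(u,v)\in E(T,\bar T)}w_{uv}\,\|\hat u-\hat v\|^2\ \ge\ 4g,$$
and applying Fact~\ref{fact:cut-embedding} to $\{\hat u\}$ should produce a cut whose flip improves $w(E(T,\bar T))$ by a constant fraction of $g$.

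\textbf{Main obstacle.} The delicate point is exactly this last step: a naive average over the $\ell_1$-embedding only shows that \emph{some} cut in the support gives a positive gain, not that a single cut achieves gain $\Omega(g)$ (the averaged signed gain ``demand separated minus capacity separated'' can be negative because the capacity term is scaled up by the distortion). Making the iteration count polynomial therefore requires choosing the rounded cut carefully — e.g.\ sweeping the ALN line embedding for the threshold that maximizes the signed gain and arguing, using the $\gamma$-stability, that a good threshold exists — rather than just taking the sparsest cut ALN hands back. This is the part that needs real care; everything else above is routine bookkeeping.
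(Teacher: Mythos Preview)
Your local-search framework, the auxiliary Sparsest Cut instance $H_T$ with cut edges as capacities and uncut edges as demands, and the correctness argument (that $A^*=S\triangle T$ has sparsity $<1/\gamma$ by Definition~\ref{def:weak-stability}, hence ALN returns a cut with sparsity $<1$ whenever $(T,\bar T)\notin N$) are essentially identical to the paper's Lemma~\ref{lem:cut-improvement} and the proof of Theorem~\ref{thm:weakly-stable}.

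The genuine gap, which you yourself flag, is the polynomial running-time bound. Your proposed route---proving the optimality gap $g=\mathrm{OPT}-w(E(T,\bar T))$ contracts geometrically by exhibiting a cut of gain $\Omega(g)$ from the $\ell_1$ embedding of $\{\hat u\}$---is not substantiated, and it is not clear it can be: ALN controls the \emph{ratio} $\mathrm{cap}/\mathrm{dem}$ of the returned cut, not the absolute gain $\mathrm{dem}-\mathrm{cap}$, and the distortion factor $D$ sits on the wrong term when you try to lower-bound the signed average over the embedding. The paper does \emph{not} attempt geometric convergence. Instead, it introduces a threshold parameter $\omega$ and only declares an uncut edge a demand pair if its weight is at least $2\omega$; this guarantees that whenever ALN returns a cut of sparsity $<1/2$, the gain is at least $\omega$ (since at least one separated demand has weight $\ge 2\omega$). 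In each iteration the algorithm sweeps $\omega$ over the finite set $\{w(e)/(4m):e\in E\}$ in decreasing order until some call succeeds, and a charging argument shows that each level $\omega$ can be responsible for at most $4m^2$ iterations, giving an $O(m^3)$ total iteration bound regardless of the bit-length of the weights. So your ``main obstacle'' is real, and the paper's resolution is a different (and simpler) device than the one you sketch.
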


The algorithm starts with an arbitrary cut and then iteratively improves it. We now describe a subroutine that algorithm runs in each iteration.

\begin{lemma}\label{lem:cut-improvement}
There is a polynomial-time algorithm $\cal A$ for the following task.
Let $G=(V,E,w)$ be a $(\gamma, \N)$-weakly stable instance of Max Cut for some $N$ and $\gamma \geq c\sqrt{\log n}\log\log n$ (where $c$ is an absolute constant), $(S,\bar S)$ be the optimal cut, and $(T,\bar T)\notin \N$ be a cut in $G$. 
The algorithm $\cal A$ given the graph $G$, the cut $(T,\bar T)$ and a parameter $\omega \in {\mathbb R}^+$  
either returns a cut $(T', \bar T')$ such that $w(T',\bar T') \geq w(T,\bar T) + \omega$, or returns $\perp$.
The algorithm always returns a cut $(T', \bar T')$ if $w(E(S,\bar S) \setminus E(T,\bar T)) \geq 4m\omega$ (where $m= |E|$). 
\end{lemma}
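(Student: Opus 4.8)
The plan is to mimic the argument of Theorem~\ref{thm:sdp-integral}, but with two changes: the maximum cut $(S,\bar S)$ is replaced by the current cut $(T,\bar T)$, and instead of deriving a contradiction with stability, the algorithm explicitly \emph{outputs} the improved cut. First I would set up the SDP for Max Cut on $G$ and solve it; let $\{\bar u\}$ be an optimal SDP solution. Its value is at least $w(E(S,\bar S))$, and in particular at least $w(E(T,\bar T)) + w(E(S,\bar S)\setminus E(T,\bar T)) - w(E(T,\bar T)\setminus E(S,\bar S))$. Now reflect the vectors across the cut $(T,\bar T)$: set $\hat u = \bar u$ for $u\in T$ and $\hat u = -\bar u$ for $u\notin T$. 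Exactly as in Theorem~\ref{thm:sdp-integral}, the identity $\|\bar u-\bar v\|^2 = 4 - \|\hat u-\hat v\|^2$ on edges cut by $(T,\bar T)$ and $\|\bar u-\bar v\|^2 = \|\hat u-\hat v\|^2$ on the rest shows that the SDP value equals
\[
w(E(T,\bar T)) + \tfrac14\!\!\sum_{(u,v)\in E\setminus E(T,\bar T)}\!\! w_{uv}\|\hat u-\hat v\|^2 \;-\; \tfrac14\!\!\sum_{(u,v)\in E(T,\bar T)}\!\! w_{uv}\|\hat u-\hat v\|^2 .
\]
Combining this with the lower bound on the SDP value, the ``gain term'' $\tfrac14\sum_{E\setminus E(T,\bar T)} w_{uv}\|\hat u-\hat v\|^2 - \tfrac14\sum_{E(T,\bar T)} w_{uv}\|\hat u-\hat v\|^2$ is at least $w(E(S,\bar S)\setminus E(T,\bar T)) - w(E(T,\bar T)\setminus E(S,\bar S))$, and by weak stability (Definition~\ref{def:weak-stability}, using $(T,\bar T)\notin N$) this is strictly positive; more quantitatively, when $w(E(S,\bar S)\setminus E(T,\bar T)) \geq 4m\omega$ we should be able to lower-bound the gain term by something like $4m\omega$ times a constant, since $w(E(T,\bar T)\setminus E(S,\bar S)) < w(E(S,\bar S)\setminus E(T,\bar T))/\gamma$.

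Next I would apply Fact~\ref{fact:cut-embedding} to the $\ell_2^2$-space $X=\{\hat u:u\in V\}$: there is a distribution of cuts $(A,\bar A)$ of $X$ and a scale $\sigma$ with $\sigma\Pr[(\hat u,\hat v)\text{ separated}] \le \|\hat u-\hat v\|^2 \le \sigma D_{\ell_2^2\to\ell_1}(n)\Pr[(\hat u,\hat v)\text{ separated}]$. Pulling this back to vertices, define $A'=\{u:\hat u\in A\}$ and $T' = (T\cap A')\cup(\bar T\cap\bar A')$ (the analogue of the set $T$ in Figure~\ref{fig:cuts}). Then $w(E(T,\bar T))-w(E(T',\bar T')) = w\big((E(T,\bar T))\cap E(A',\bar A')\big) - w\big((E\setminus E(T,\bar T))\cap E(A',\bar A')\big)$, and taking expectations over the distribution, the expected \emph{decrease} in cut weight is at most $\tfrac1\sigma$ times $\sum_{E(T,\bar T)} w_{uv}\|\hat u-\hat v\|^2$ minus $\tfrac1{\sigma D_{\ell_2^2\to\ell_1}(n)}$ times $\sum_{E\setminus E(T,\bar T)} w_{uv}\|\hat u-\hat v\|^2$. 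Since $\gamma \ge c\sqrt{\log n}\log\log n \ge D_{\ell_2^2\to\ell_1}(n)$, the gain term controls this expression and the expected value of $w(E(T',\bar T'))$ exceeds $w(E(T,\bar T))$ by a positive amount proportional to $\omega$ in the regime $w(E(S,\bar S)\setminus E(T,\bar T))\ge 4m\omega$; hence some cut $(T',\bar T')$ in the support satisfies $w(E(T',\bar T'))\ge w(E(T,\bar T))+\omega$. The algorithm $\cal A$ enumerates the (polynomially many) cuts in the support of the Arora--Lee--Naor rounding distribution, checks each candidate $(T',\bar T')$, and returns one meeting the $+\omega$ threshold if it exists, and $\perp$ otherwise; by the above it never returns $\perp$ when $w(E(S,\bar S)\setminus E(T,\bar T))\ge 4m\omega$.

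The main obstacle, I expect, is bookkeeping the constants so that the $4m\omega$ threshold is exactly what comes out. Roughly: each of the (at most) $m$ edges cut by $(T,\bar T)$ contributes at most $4$ to the ``bad'' sum (since $\|\hat u-\hat v\|^2\le 4$), and one has to convert the strict inequality of weak stability into a quantitative $\Omega(m\omega)$ gain, using that $D_{\ell_2^2\to\ell_1}(n)$ and $\gamma$ differ by at most a constant factor (or choosing $c$ so that $\gamma$ is, say, twice $D_{\ell_2^2\to\ell_1}(n)$). A second, more delicate point is that Fact~\ref{fact:cut-embedding} gives a distribution, not a deterministic cut: one must either show the support is polynomial-size (which it is for the ALN rounding, as the cuts come from thresholding a bounded-dimensional $\ell_1$ embedding) or argue that a polynomial number of samples suffices to find a good cut with high probability. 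Everything else is the same reflection-and-rounding calculation already carried out in the proof of Theorem~\ref{thm:sdp-integral}, just with $(S,\bar S)$ replaced by $(T,\bar T)$ and the contradiction replaced by an explicit improvement step.
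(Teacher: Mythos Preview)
Your approach differs from the paper's and has a real gap. The paper does \emph{not} re-solve the Max Cut SDP and reflect across $(T,\bar T)$. Instead it builds an auxiliary \emph{Sparsest Cut} instance on $V$: capacity edges are $E_c=E(T,\bar T)$ with $\mathrm{cap}=w$, and demand pairs are only the \emph{heavy} uncut edges, $E_d=\{(u,v)\in E\setminus E(T,\bar T):w(u,v)\ge 2\omega\}$ with $\mathrm{dem}=w$. Weak stability applied to the cut $A^*=(S\cap T)\cup(\bar S\cap\bar T)$ shows $\phi(A^*)<2/\gamma$ (the $4m\omega$ hypothesis absorbs the total weight of the discarded light edges). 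The Arora--Lee--Naor algorithm then returns some cut $(A,\bar A)$ with $\phi(A)<1/2$, and one sets $T'=(T\cap A)\cup(\bar T\cap\bar A)$. Because $\phi(A)<1/2$ and every separated demand pair has weight at least $2\omega$, the gain $w(E(T',\bar T'))-w(E(T,\bar T))\ge\tfrac12\,\mathrm{dem}(E_d\cap E(A,\bar A))\ge\omega$. The parameter $\omega$ is used \emph{inside the construction}, not merely as a final threshold.

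The gap in your argument is the step where you pass from the ``gain term'' $P-Q:=\sum_{E\setminus E(T,\bar T)}w_{uv}\|\hat u-\hat v\|^2-\sum_{E(T,\bar T)}w_{uv}\|\hat u-\hat v\|^2$ to a lower bound on the expected improvement. After the $\ell_2^2\to\ell_1$ embedding you only get
\[
\E\bigl[w(E(T',\bar T'))-w(E(T,\bar T))\bigr]\;\ge\;\frac{1}{\sigma}\Bigl(\frac{P}{D_{\ell_2^2\to\ell_1}}-Q\Bigr),
\]
and $P-Q>0$ does \emph{not} imply $P/D-Q>0$: if the SDP solution is far from integral on edges of $E(T,\bar T)$, $Q$ can be as large as $4\,w(E(T,\bar T))$, swamping $P/D$. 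Weak stability bounds $w(E(S,\bar S)\setminus E(T,\bar T))$ versus $w(E(T,\bar T)\setminus E(S,\bar S))$, not $P$ versus $Q$, so taking $\gamma\ge 2D$ does not rescue the inequality. Even when the sign comes out right, you have no control over $\sigma$, so ``proportional to $\omega$'' is not justified. In short, the paper's filtering of demand edges by the threshold $2\omega$ is the missing idea: it is what converts a ratio bound (which is all the embedding/rounding can give) into an additive $+\omega$ improvement.
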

\begin{proof}
We construct an auxiliary Sparsest Cut instance $\cal I$ on $V$ defined by
\ifSODA
\begin{itemize}
\item $E_c = E(T,\bar T)$, \item $ \mathrm{cap}_{uv} = w(u,v)$,
\item $E_d = \set{(u,v) \in E\setminus E(T,\bar T):  w(u,v) \geq 2\omega}$ , \item $ \mathrm{dem}_{uv} = w(u,v)$.
\end{itemize}
\else
\begin{align*}
E_c &= E(T,\bar T) , & \mathrm{cap}_{uv} &= w(u,v)\\
E_d &= \set{(u,v) \in E\setminus E(T,\bar T):  w(u,v) \geq 2\omega} , & \mathrm{dem}_{uv} &= w(u,v).
\end{align*}
\fi
Then we run the approximation algorithm for Sparsest Cut by Arora, Lee and Naor and find an approximate cut $(A, \bar A)$.
We let $T' = (T \cap A) \cup (\bar T \cap \bar A)$ (see~Figure~\ref{fig:improvement}). If $w(T',\bar T') \geq w(T,\bar T) + \omega$ then we return $(T', \bar T')$,
otherwise we return $\perp$. 

\begin{figure}
\centering
\begin{tikzpicture}
    \draw[very thick] (-0.9,0.25) to (-3.1, 0.25) to [out=90,in=180] (-2, 2) to [out=0,in=90] (-0.9, 0.25);
    \draw[very thick] (0.9,-0.25) to (3.1, -0.25) to [out=-90,in=0] (2, -2) to [out=180,in=-90] (0.9, -0.25);

    \draw[very thick, dashed, color=black!40!blue] (+0.9,0.25) to (+3.1, 0.25) to [out=90,in=0] (+2, 2) to [out=180,in=90] (+0.9, 0.25);
    \draw[very thick, dashed, color=black!40!blue] (-0.9,-0.25) to (-3.1, -0.25) to [out=-90,in=180] (-2, -2) to [out=0,in=-90] (-0.9, -0.25);

    \foreach \x in {-1.7,-1.9,-2.1, -2.3, 1.7, 1.9, 2.1, 2.3}
       \draw[thick, color=black!40!green] (\x, 0.45) -- (\x,-0.45);
    
    \draw[thick, color=black] (-1.2+0.0875, 0.85) -- (1.2+0.0875, -0.55);
    \draw[thick, color=black] (-1.2, 0.7) -- (1.2, -0.7);
    \draw[thick, color=black] (-1.2-0.0875, 0.55) -- (1.2-0.0875, -0.85);

    \draw[thick, color=black] (-1.2+0.0875, -0.85) -- (1.2+0.0875, 0.55);
    \draw[thick, color=black] (-1.2, -0.7) -- (1.2, 0.7);
    \draw[thick, color=black] (-1.2-0.0875, -0.55) -- (1.2-0.0875, 0.85);

    \node[above] at (-2,0.6) {\Large $T\cap A$};
    \node[above] at (2,0.6) {\Large $\bar T\cap A$};
    \node[below] at (-2,-0.6) {\Large $T\cap  \bar A$};
    \node[below] at (2,-0.6) {\Large $\bar{T}\cap \bar A$};

\end{tikzpicture}
\caption{This figure shows the Sparsest Cut instance $\cal I$ and cut $(A,\bar A)$.
Demand pairs separated by $(A,\bar A)$ are shown by vertical segments,
capacity edges cut by $(A,\bar A)$ are shown by diagonal segments.
Set $T'$ consists of vertices in semicircles in the upper-left and lower-right corners. }
\label{fig:improvement}
\end{figure}
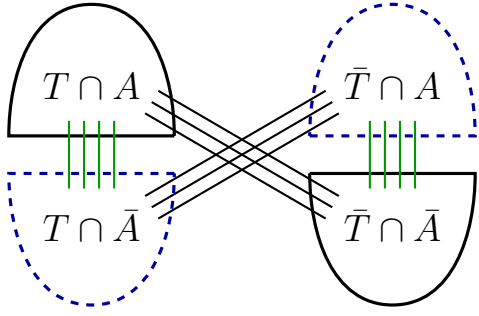

Whenever the algorithm returns a cut, the cut satisfies the requirement $w(T',\bar T') \geq w(T,\bar T) + \omega$. Thus 
we only need to prove that if $w(E(S,\bar S) \setminus E(T,\bar T)) \geq 4m\omega$ then the algorithm finds a cut.

First we show that there is a Sparsest Cut with sparsity at most $2/\gamma$ in $\cal I$. Let $A^* = (S \cap T) \cup (\bar S \cap \bar T)$.
Since $(T,\bar T) \notin \N$, we have $w(E(S, \bar S)\setminus E(T, \bar T)) > \gamma\cdot w(E(T, \bar T )\setminus E(S, \bar S))$, or equivalently
$w((E\setminus E(T,\bar T)) \cap E(A^*,\bar A^*)) > \gamma \cdot w(E(T,\bar T) \cap E(A^*,\bar A^*))$. Thus
\ifSODA
\begin{align*}
\mathrm{dem}&(E_d \cap E(A^*,\bar A^*)) \geq w((E\setminus E(T,\bar T)) \cap E(A^*,\bar A^*)) \\
&\quad{}- m \cdot 2\omega \geq w((E\setminus E(T,\bar T))/2  \\
&> \gamma \cdot  w(E(T,\bar T) \cap E(A^*,\bar A^*))/2 = (\gamma/2) \cdot \mathrm{cap}(E_c).
\end{align*}
\else
\begin{align*}
\mathrm{dem}(E_d \cap E(A^*,\bar A^*)) &\geq w((E\setminus E(T,\bar T)) \cap E(A^*,\bar A^*)) - m \cdot 2\omega \geq w((E\setminus E(T,\bar T))/2  \\
&> \gamma \cdot  w(E(T,\bar T) \cap E(A^*,\bar A^*))/2 = (\gamma/2) \cdot \mathrm{cap}(E_c).
\end{align*}
\fi
We get that $\phi(A^*) < 2/\gamma$. Therefore, our algorithm finds a cut $A$ with $\phi(A) \leq O(\sqrt{\log n}\log\log n) \cdot  2/\gamma < 1/2$.
We have
\ifSODA
\begin{align*}
w&(T',\bar T') - w(T,\bar T) \\
&= w(E(T',\bar T')\setminus E(T,\bar T)) - w(E(T,\bar T)\setminus E(T',\bar T')) \\
&\geq \mathrm{dem}(E_d \cap E(A,\bar A)) - w(E_c \cap E(A,\bar A)) \\
&\geq  \mathrm{dem}(E_d \cap E(A,\bar A))/2 \geq
\omega.
\end{align*}
\else
\begin{align*}
w(T',\bar T') - w(T,\bar T) &= w(E(T',\bar T')\setminus E(T,\bar T)) - w(E(T,\bar T)\setminus E(T',\bar T')) \\
&\geq \mathrm{dem}(E_d \cap E(A,\bar A)) - w(E_c \cap E(A,\bar A)) \geq  \mathrm{dem}(E_d \cap E(A,\bar A))/2 \geq
\omega.
\end{align*}
\fi
We used that the demand of every pair in $E_d$ is at least $2\omega$, and hence $\mathrm{dem}(E_d \cap E(A,\bar A))/2 \geq
\omega$.
\end{proof}
Now we are ready to prove Theorem~\ref{thm:weakly-stable}.
\ifSODA\\[0.5em] \noindent \textit{Proof of Theorem~\ref{thm:weakly-stable}.}\\[0.25em]
\else
\begin{proof}[Proof of Theorem~\ref{thm:weakly-stable}] {\ }\\
\fi
\indent\textbf{Algorithm.} 
We start with an arbitrary cut $(T,\bar T)$. Then we iteratively run the algorithm $\cal A$ 
from Lemma~\ref{lem:cut-improvement}. 
In each iteration, we go over all values of $\omega$ in $\Omega = \{w(u,v)/(4m) : (u,v) \in E\}$ in the descending order,
and execute $\cal A$ on input $G$, $(T,\bar T)$ and $\omega$. If $\cal A$ 
finds a cut $(T', \bar T')$, we let $T = T'$ and start a new iteration. If $\cal A$ does not find any cut, we stop and output $(T,\bar T)$.

\textbf{Analysis.} 
We first show that the algorithm always returns a cut from $N$. At every step of the algorithm when $(T, \bar T)\notin \N$, we have $w(S,\bar S) > w(T, \bar T)$, hence
$E(S,\bar S)\setminus E(T, \bar T)\neq \varnothing$ and $w(E(S,\bar S)\setminus E(T, \bar T)) \geq \min_{e\in E} w(e)$. Therefore,
for some $\omega\in \Omega$ (in particular, for $\omega = \frac{\min_{e\in E} w(e)}{4m}$; see the statement of Lemma~\ref{lem:cut-improvement}), the algorithm $\cal A$ finds a
better cut $(T',\bar T')$, and the main algorithm does not terminate. It remains to check that the running time is polynomial.

Consider one iteration of the algorithm. Let $(T,\bar T)$ be the current cut. Let $(u,v)$ be the heaviest edge in $E(S,\bar S) \setminus E(T,\bar T)$
and $\omega^* = w(u,v) / (4m)$. Note that in this iteration we find a cut when we run $\cal A$ with some $\omega \geq \omega^*$
(since if we do not find a cut $(T',\bar T')$ when $\omega > \omega^*$, we must find a cut $(T', \bar T')$ when $\omega = \omega^*$ by
Lemma~\ref{lem:cut-improvement}). We also have
\ifSODA
$w(E(S,\bar S)) - w(E(T,\bar T)) \leq w(E(S,\bar S)\setminus E(T,\bar T)) \leq  |E(S,\bar S)\setminus E(T,\bar T)| 
\cdot (4m \omega^*) \leq 4m^2 \omega$.
\else
$$w(E(S,\bar S)) - w(E(T,\bar T)) \leq w(E(S,\bar S)\setminus E(T,\bar T)) \leq  |E(S,\bar S)\setminus E(T,\bar T)| 
\cdot (4m \omega^*) \leq 4m^2 \omega.$$
\fi
We charge this iteration to ``level'' $\omega$. We show that every $\omega \in \Omega$ pays for at most $4m^2$ iterations and therefore the number of iterations is $O(m^3)$.

Indeed, consider $\omega \in \Omega$. Let $T_0$ be the value of $T$ just before we perform 
an iteration at level $\omega$ for the first time, and $T_k$ be the value of $T$ right after we perform $k$ iterations at level $\omega$ 
(possibly we perform iterations at other levels in between).
We have,
\ifSODA
$w(E(T_k,\bar T_k)) \geq w(E(T_0,\bar T_0)) +k\omega \geq w(E(S,\bar S)) -  4m^2\omega + k\omega$.
\else
$$w(E(T_k,\bar T_k)) \geq w(E(T_0,\bar T_0)) +k\omega \geq w(E(S,\bar S)) -  4m^2\omega + k\omega.$$
\fi
Since $w(E(T_k,\bar T_k)) \leq w(E(S,\bar S))$, we get that $k \leq 4m^2$.
This concludes the proof.
\ifSODA
{\hfill\vbox{\hrule height.2pt\hbox{\vrule width.2pt height5pt \kern5pt \vrule width.2pt} \hrule height.2pt}}
\else
\end{proof}
\fi

\subsection{Weakly Stable Instances of Minimum Multiway Cut}
In this section, we give an algorithm for approximately solving weakly stable instances of Minimum Multiway Cut.

\begin{Definition}\label{def:weak-stability-multiwayA}
Consider a weighted graph $G$. Let ${\cal S}^* = (S_1^*,\dots, S_k^*)$ be a minimum multiway cut in $G$, $N$ be a set of multiway 
cuts that contains ${\cal S}^*$, and $\gamma \geq 1$.
We say that $G$ is a  $(\gamma, N)$-weakly stable instance of Minimum Multiway Cut if for 
$\gamma$-perturbation $G'=(V,E,w')$ of $G$ and every multiway cut 
${\cal S}' = (S_1',\dots, S_k')\notin N$:
$$w'(E') > w'(E^*),$$
where $E^*$ is the set of edges cut by ${\cal S}^*$ and $E'$ is the set of edges cut by ${\cal S}'$.
\end{Definition}

This definition is equivalent to the following definition (see Appendix~\ref{apx:equivalence} for the proof).

\begin{Definition}\label{def:weak-stability-multiway}
Consider a weighted graph $G$. Let ${\cal S}^* = (S_1^*,\dots, S_k^*)$ be a minimum multiway cut in $G$, $N$ be a set of multiway 
cuts that contains ${\cal S}^*$, and $\gamma \geq 1$.
We say that $G$ is a  $(\gamma, N)$-weakly stable instance of Minimum Multiway Cut if for every multiway cut 
${\cal S}' = (S_1',\dots, S_k')\notin N$:
$$w(E'\setminus E^*) > \gamma\cdot w(E^*\setminus E'),$$
where $E^*$ is the set of edges cut by ${\cal S}^*$ and $E'$ is the set of edges cut by ${\cal S}'$.
\end{Definition}

\noindent The main result of this section is the following theorem.
\begin{theorem}\label{thm:weakly-stable-multiway-prime}
There is a polynomial-time algorithm for the following task.
Given a $(4, N)$-stable instance of Minimum Multiway Cut with integer edge weights in the range $[1, poly(n)]$,
the algorithm returns a multiway cut from $N$.
The set $N$ is not part of the input and is not known to the algorithm.
\end{theorem}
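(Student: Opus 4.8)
The plan is to follow the template of the weakly stable Max Cut algorithm (Theorem~\ref{thm:weakly-stable}): start from an arbitrary multiway cut, repeatedly call a local--improvement subroutine, and stop --- outputting the current multiway cut --- the first time the subroutine fails to improve. The subroutine is the analogue of Lemma~\ref{lem:cut-improvement}: given the $(4,N)$-weakly stable instance and the current multiway cut ${\cal S}=(S_1,\dots,S_k)$ with cut set $E_{cur}$, it either returns a multiway cut ${\cal S}''$ with $w(E'')<w(E_{cur})$, or reports failure, and it is guaranteed to improve whenever ${\cal S}\notin N$. Granting this, the outer loop is straightforward: the weights are integers in $[1,\mathrm{poly}(n)]$, so every multiway cut has integer weight in $[1,\mathrm{poly}(n)]$; hence each successful step decreases $w(E_{cur})$ by at least $1$, only polynomially many steps occur, and if the subroutine ever fails the analysis certifies ${\cal S}\in N$, so the final output is correct.

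For the subroutine I would, given ${\cal S}$, form an auxiliary Minimum Multiway Cut instance on the same graph and the same terminals in which the edges of $E_{cur}$ are reweighted upward by a factor $\lambda$ while all other edges keep their weight $w$ --- the analogue of making $E_{cur}$ the capacity edges in the auxiliary Sparsest Cut instance of Lemma~\ref{lem:cut-improvement}, and, as there, we try all $\lambda$ from a polynomially sized grid. On the reweighted instance I solve the C\u{a}linescu--Karloff--Rabani relaxation~(\ref{eq:CKR-relaxation}), obtaining a fractional solution $\{\bar u\}$ with metric $d$, and apply the Kleinberg--Tardos rounding of Lemma~\ref{lem:hol-rounding} to sample a multiway cut ${\cal S}''$. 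The crucial point is that this rounding satisfies both $\Pr[(u,v)\text{ is cut}]\le 2d(u,v)$ and $\Pr[(u,v)\text{ is not cut}]\ge(1-d(u,v))/2$, so repeating the bookkeeping in the proof of Theorem~\ref{thm:main-multiway} --- now with the minimum multiway cut ${\cal S}^*$ (cut set $E^*$) as the reference integral solution --- bounds $\E{w(E'')}$ by $w(E_{cur})-\bigl(\tfrac12\mathsf{LP}_{+}-2\mathsf{LP}_{-}\bigr)$ for the corresponding quantities $\mathsf{LP}_{+}$ and $\mathsf{LP}_{-}$ built from $w$, $E^*$ and $d$; boosting the weight of $E_{cur}$ drives the LP to leave almost all of $E_{cur}$ uncut, which is what forces $\mathsf{LP}_{+}\ge 4\mathsf{LP}_{-}$. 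The $4$-weak-stability inequality $w(E_{cur}\setminus E^*)>4\,w(E^*\setminus E_{cur})$ --- which for integer weights carries a genuine slack of at least $1$ --- is exactly what guarantees that some $\lambda$ in the grid makes $\mathsf{LP}_{+}\ge 4\mathsf{LP}_{-}$, hence $\E{w(E'')}<w(E_{cur})$, whenever ${\cal S}\notin N$; since $w$-weights are bounded integers the resulting gap $w(E_{cur})-\E{w(E'')}$ is at least an inverse polynomial, so a polynomial number of independent rounds finds an ${\cal S}''$ with $w(E'')<w(E_{cur})$ (one can also hope to derandomize the rounding by conditional expectations over the $O(nk)$ relevant thresholds).

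I expect the main obstacle to be getting this subroutine exactly right: choosing the reweighting factor $\lambda$ (equivalently, the grid), and carrying out the inequality chase that turns ``$\mathsf{LP}_{+}$ large relative to $\mathsf{LP}_{-}$'' into ``$\E{w(E'')}<w(E_{cur})$''. The delicate feature is that the reference solution used in the analysis is the \emph{unknown} optimal multiway cut ${\cal S}^*$: the argument must show that $4$-weak stability, which only compares ${\cal S}$ against ${\cal S}^*$, already forces some $\lambda$ in a grid that does not depend on ${\cal S}^*$ to yield the needed bound, and it must keep the original weights $w$ and the reweighted weights $w'$ separate when redoing the accounting (Lemma~\ref{lem:hol-rounding} is stated for an arbitrary feasible LP solution, so its $(1-d(u,v))/2$ guarantee survives the reweighting, but $\mathsf{LP}_{\pm}$ and the LP value must be tracked accordingly). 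Everything else --- the outer loop, the termination count, the role of the integer and polynomially bounded weights, and the fact that a constant $\gamma=4$ suffices here (rather than $\Theta(\sqrt{\log n}\log\log n)$ as for Max Cut) because the Kleinberg--Tardos rounding is used with its exact $(2d,(1-d)/2)$ guarantees and incurs no metric-distortion loss --- parallels the Max Cut development and should be routine.
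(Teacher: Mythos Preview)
Your outer loop --- start from an arbitrary multiway cut, call an improvement subroutine, terminate when it fails, and bound the number of iterations by the polynomially bounded integer cost --- is exactly what the paper does. The divergence is entirely in the subroutine, and there your proposal is both more complicated than necessary and, as written, has the reweighting pointing the wrong way.

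In the paper's subroutine (Lemma~\ref{lem:improve-multiway}) one reweights the edges \emph{not} in the current cut set $E^{\circ}$ by the factor $4$ (not a grid of $\lambda$'s, and not the edges of $E^{\circ}$): set $w'(e)=w(e)$ for $e\in E^{\circ}$ and $w'(e)=4w(e)$ otherwise, solve the CKR relaxation once with weights $w'$, and apply Kleinberg--Tardos rounding. The quantities $\mathsf{LP}_{+}$ and $\mathsf{LP}_{-}$ are then defined relative to the \emph{current} cut $E^{\circ}$, not relative to the unknown optimum $E^*$:
\[
\mathsf{LP}_{+}=\sum_{e\in E^{\circ}} w(e)\,(1-d(e)),\qquad
\mathsf{LP}_{-}=4\sum_{e\notin E^{\circ}} w(e)\,d(e).
\]
With this choice the rounding guarantees give $\E{w(E^{\circ})-w(E')}\ge(\mathsf{LP}_{+}-\mathsf{LP}_{-})/2$, and
\[
\mathsf{LP}_{+}-\mathsf{LP}_{-}=w'(E^{\circ})-\mathrm{LP}(w')\ \ge\ w'(E^{\circ})-w'(E^*)\ =\ w(E^{\circ}\setminus E^*)-4\,w(E^*\setminus E^{\circ}),
\]
which is strictly positive by $(4,N)$-weak stability whenever ${\cal S}^{\circ}\notin N$. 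So a single LP solve suffices; there is no need to search over $\lambda$, and the unknown $E^*$ enters only through the one-line inequality $\mathrm{LP}(w')\le w'(E^*)$.

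Your version boosts $E_{cur}$ instead of its complement. That does push the LP to set $d(e)\approx 0$ on $E_{cur}$, but then the saving you can claim on $E_{cur}$ after rounding is only $\sum_{e\in E_{cur}} w(e)(1-d(e))/2$, while the $\lambda$ you introduced has already been spent inside the LP and does not help in the comparison against $w(E_{cur})$ in the \emph{original} weights; your sketch never closes this loop. Relatedly, your $\mathsf{LP}_{\pm}$ are said to be ``built from $w$, $E^*$ and $d$,'' but if they are indexed by $E^*$ you cannot relate $\E{w(E'')}$ to $w(E_{cur})$ in the way you claim. The fix is simply to flip the reweighting to the complement of $E_{cur}$, set the factor to exactly $4$, and index $\mathsf{LP}_{\pm}$ by $E_{cur}$; then the grid, the slack-of-$1$ argument, and the probabilistic repetition all become unnecessary (one can enumerate the polynomially many cuts in the support of the rounding and pick the best).
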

If the weights are not polynomially bounded, the following version of this theorem holds. (The proofs of Theorems~\ref{thm:weakly-stable-multiway-prime} and \ref{thm:weakly-stable-multiway} are very similar. For simplicity of exposition, we only present the proof of Theorem~\ref{thm:weakly-stable-multiway-prime}.)
\begin{theorem}\label{thm:weakly-stable-multiway}
There is a polynomial-time algorithm that given a $(4 + \varepsilon, N)$-stable instance of Minimum Multiway Cut, returns a multiway
cut from $N$. The running time of the algorithm is inversely proportional do $\varepsilon$.
The set $N$ is not part of the input and is not known to the algorithm.
\end{theorem}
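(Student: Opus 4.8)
The plan is to mirror the two-level strategy behind Theorem~\ref{thm:weakly-stable-multiway-prime} (itself a multiway analogue of the proof of Theorem~\ref{thm:weakly-stable}): an outer loop that maintains a multiway cut ${\cal T}$, with cut edge set $E_{\cal T}$, and repeatedly replaces it by a strictly cheaper cut, together with an inner \emph{improvement subroutine} that, given ${\cal T}\notin N$, returns a multiway cut ${\cal T}'$ with $w(E_{{\cal T}'})<w(E_{\cal T})$. As a preliminary I would invoke the equivalence of Definitions~\ref{def:weak-stability-multiwayA} and~\ref{def:weak-stability-multiway} (Appendix~\ref{apx:equivalence}), so that we may work with the combinatorial form: for every multiway cut ${\cal S}'\notin N$ one has $w(E'\setminus E^*)>(4+\varepsilon)\,w(E^*\setminus E')$, where ${\cal S}^*$ is a minimum multiway cut cutting $E^*$ and ${\cal S}'$ cuts $E'$. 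The final algorithm is the outer loop started from an arbitrary multiway cut: apply the subroutine, replace ${\cal T}$ whenever it returns a cheaper cut, and stop and output ${\cal T}$ when it fails.

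The improvement subroutine uses the CKR relaxation~\eqref{eq:CKR-relaxation} and the Kleinberg--Tardos rounding of Lemma~\ref{lem:hol-rounding}, but with a twist that makes it \emph{local}. Given ${\cal T}$ and a ``budget'' $\tau\ge0$, I would solve~\eqref{eq:CKR-relaxation} with the extra linear constraint $\tfrac12\sum_{(u,v)\in E\setminus E_{\cal T}}w(u,v)\|\bar u-\bar v\|_1\le\tau$; its feasible region is contained in that of the standard CKR LP, so Lemma~\ref{lem:hol-rounding} still applies to its optimum $\{\bar u\}$, with $d(u,v)=\|\bar u-\bar v\|_1/2$. Since $w(E_{\cal T})-w(E_{{\cal S}'})$ is bounded above by the optimality gap $w(E_{\cal T})-w(E^*)$, repeating the rounding $O(1/\varepsilon)$ times (or derandomizing by conditional expectations) yields an ${\cal S}'$ realizing the expected value $\E{w(E_{\cal T}\setminus E')-w(E'\setminus E_{\cal T})}$. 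The subroutine searches $\tau$ over a $(1+\varepsilon/10)$-geometric grid containing $\{0\}$ and covering $[w_{\min},w(E)]$, of size $O\big(\tfrac1\varepsilon\cdot(\text{input bit-length})\big)$, and outputs the cheapest cut found, or $\perp$ if none beats ${\cal T}$.

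The crucial estimate is that when ${\cal T}\notin N$ some value of $\tau$ succeeds. Choose the grid point $\tau\in\big[\,w(E^*\setminus E_{\cal T}),\,(1+\varepsilon/10)\,w(E^*\setminus E_{\cal T})\,\big]$; then the integral point $u\mapsto e_{j^*(u)}$ (i.e.\ ${\cal S}^*$) is feasible for the budgeted LP with value $w(E^*)$, so its optimum $\{\bar u\}$ has LP value at most $w(E^*)$ and satisfies $\sum_{E\setminus E_{\cal T}}w\,d\le\tau$. Feeding the two bounds of Lemma~\ref{lem:hol-rounding}---``not cut'' with probability $\ge(1-d(u,v))/2$ on $E_{\cal T}$ and ``cut'' with probability $\le 2d(u,v)$ on $E\setminus E_{\cal T}$---into the expectation gives
\[
\E{w(E_{\cal T}\setminus E')-w(E'\setminus E_{\cal T})}\;\ge\;\tfrac12\big(w(E_{\cal T})-w(E^*)\big)-\tfrac32\tau\;\ge\;\tfrac12\,w(E_{\cal T}\setminus E^*)-\big(2+O(\varepsilon)\big)\,w(E^*\setminus E_{\cal T}),
\]
and the stability inequality $w(E^*\setminus E_{\cal T})<\tfrac{1}{4+\varepsilon}\,w(E_{\cal T}\setminus E^*)$ makes the right-hand side at least $\Omega(\varepsilon)\cdot w(E_{\cal T}\setminus E^*)\ge\Omega(\varepsilon)\cdot\big(w(E_{\cal T})-w(E^*)\big)>0$. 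This is exactly where the $+\varepsilon$ slack is spent: at $\gamma=4$ the bound degenerates to $\ge0$, whereas for integer weights the same computation yields $\ge\tfrac12$, hence an integral gain $\ge1$---which is precisely the $\gamma=4$ statement of Theorem~\ref{thm:weakly-stable-multiway-prime}. Consequently the outer loop can fail only when ${\cal T}\in N$, so its output is in $N$; and since each iteration shrinks $w(E_{\cal T})-w(E^*)$ by a factor $1-\Omega(\varepsilon)$ from an initial value $\le w(E)$, once this gap drops below the smallest positive number expressible with the input's bit-complexity it must equal $0$, forcing ${\cal T}\in N$. Hence $O\big(\tfrac1\varepsilon\cdot\mathrm{poly}(\text{input size})\big)$ iterations---each a polynomial number of LP solves and roundings---suffice, giving a running time that is polynomial and inversely proportional to $\varepsilon$.

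I expect the main obstacle to be the design and analysis of the budgeted LP in the improvement subroutine. Rounding the \emph{unconstrained} CKR optimum does not work once ${\cal T}$ is only slightly suboptimal: a fractional optimum can spread its $\ell_1$-length over edges cut by neither ${\cal T}$ nor ${\cal S}^*$, inflating the term $\sum_{E\setminus E_{\cal T}}w\,d$ and cancelling the gain, so a genuinely local move is unavoidable. The budget $\tau$ is the device that caps this term, and the two points that need care are (i) guessing $\tau\approx w(E^*\setminus E_{\cal T})$ accurately enough (to within $1+\varepsilon$) while keeping the search polynomial---this, together with the geometric rate, is the source of the $1/\varepsilon$ in the running time---and (ii) verifying that it is the strong ``does not cut an edge $(u,v)$ with probability $\ge(1-d(u,v))/2$'' property of Lemma~\ref{lem:hol-rounding}, rather than merely its $2$-approximation guarantee, that drives the estimate, exactly as noted in the Remark following Lemma~\ref{lem:hol-rounding}.
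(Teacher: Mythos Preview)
Your approach is correct and reaches the same $\Omega(\varepsilon)$ multiplicative shrinkage of the optimality gap, but it takes a more elaborate route than the paper's. The paper does not give a separate proof of Theorem~\ref{thm:weakly-stable-multiway}; it says the argument is ``very similar'' to that of Theorem~\ref{thm:weakly-stable-multiway-prime}, whose engine is Lemma~\ref{lem:improve-multiway}. That lemma's improvement step is simpler than yours: given the current cut ${\cal S}^\circ$ with edge set $E^\circ$, it \emph{reweights} the instance by $w'(u,v)=4w(u,v)$ for $(u,v)\notin E^\circ$ and $w'(u,v)=w(u,v)$ otherwise, solves the plain CKR LP~\eqref{eq:CKR-relaxation} once with weights $w'$, and applies Kleinberg--Tardos rounding. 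The same two bounds from Lemma~\ref{lem:hol-rounding} give $\E{w(E^\circ)-w(E')}\ge\tfrac12\big(w'(E^\circ)-\text{LP}\big)\ge\tfrac12\big(w(E^\circ\setminus E^*)-4\,w(E^*\setminus E^\circ)\big)$, and $(4+\varepsilon)$-stability makes the right-hand side at least $\tfrac{\varepsilon}{2(4+\varepsilon)}\,w(E^\circ\setminus E^*)\ge\Omega(\varepsilon)\cdot(w(E^\circ)-w(E^*))$---exactly the decrease you obtain. The fixed multiplicative penalty of $4$ on edges outside $E^\circ$ automatically suppresses the ``spreading'' of LP mass that you correctly identify as the obstacle, so no budget constraint and no search over $\tau$ is needed; your method is a valid alternative (your inequality $\E{\text{gain}}\ge\tfrac12(w(E_{\cal T})-w(E^*))-\tfrac32\tau$ checks out), but it pays an extra factor of $O(\tfrac1\varepsilon\log(w(E)/w_{\min}))$ LP solves per outer iteration. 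One small correction: ``repeating the rounding $O(1/\varepsilon)$ times'' does not by itself guarantee a sample achieving the expectation, since the gain can be negative; enumerating the polynomial-size support of the KT rounding (as the paper does in Lemma~\ref{lem:improve-multiway}) or the conditional-expectations derandomization you also mention is the right way to extract such a cut.
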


The algorithm starts with an arbitrary multiway cut $\cal S$ and then iteratively improves it.

\begin{lemma}\label{lem:improve-multiway}
There is a polynomial-time algorithm that given a $(4,N)$ weakly stable instance $G=(V,E,w)$ and 
a solution ${\cal S}^{\circ}$ either finds a solution ${\cal S}'$ of smaller cost or 
certifies that ${\cal S}^{\circ} \in N$.
\end{lemma}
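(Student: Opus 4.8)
The plan is to reduce one improvement step to solving a single linear program — a re-weighted copy of the relaxation~(\ref{eq:CKR-relaxation}) — followed by the Kleinberg--Tardos rounding of Lemma~\ref{lem:hol-rounding}. Write ${\cal S}^*$ for the (unknown) minimum multiway cut, and let $E^*$ and $E^{\circ}$ be the sets of edges cut by ${\cal S}^*$ and by the given solution ${\cal S}^{\circ}$. The guiding observation is that if ${\cal S}^{\circ}\notin N$ then $(4,N)$-weak stability gives $w(E^{\circ}\setminus E^*) > 4\,w(E^*\setminus E^{\circ})$, so ${\cal S}^{\circ}$ wastes a lot of weight on edges that ${\cal S}^*$ leaves uncut. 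To exploit this I would solve the relaxation~(\ref{eq:CKR-relaxation}) on the same terminals but with the modified weights $w'(u,v)=w(u,v)$ for $(u,v)\in E^{\circ}$ and $w'(u,v)=4\,w(u,v)$ for $(u,v)\in E\setminus E^{\circ}$; let $\{\bar u\}$ be an optimal solution and $L'$ its value. Both are found in polynomial time, and since the integral point $\bar u=e_i$ $(u\in S_i^{\circ})$ is feasible and attains $w(E^{\circ})$, we always have $L'\le w(E^{\circ})$.

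The first step is to settle the two cases using $L'$ alone. Evaluating the modified objective at the integral point $\bar u=e_i$ $(u\in S_i^*)$ gives $w(E^*\cap E^{\circ})+4\,w(E^*\setminus E^{\circ})$, which by weak stability is strictly below $w(E^*\cap E^{\circ})+w(E^{\circ}\setminus E^*)=w(E^{\circ})$; hence ${\cal S}^{\circ}\notin N$ forces $L'<w(E^{\circ})$. Contrapositively, if $L'=w(E^{\circ})$ the algorithm safely reports ${\cal S}^{\circ}\in N$, with $L'$ itself as the certificate. It remains to handle $L'<w(E^{\circ})$, where I claim one Kleinberg--Tardos rounding of $\{\bar u\}$ already beats ${\cal S}^{\circ}$. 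Let $d$ be the metric of $\{\bar u\}$ and let ${\cal S}'$ be the random multiway cut from Lemma~\ref{lem:hol-rounding} (it automatically keeps each $s_i$ in its own part); let $E'$ be its cut edges. For $(u,v)\in E^{\circ}$ the edge is cut with probability $1-\Pr[(u,v)\text{ not cut}]\le 1-\tfrac{1-d(u,v)}{2}=\tfrac{1+d(u,v)}{2}$, and for $(u,v)\notin E^{\circ}$ with probability at most $2d(u,v)$. Summing these bounds weighted by $w$, and recalling $L'=\sum_{(u,v)\in E^{\circ}}w(u,v)\,d(u,v)+4\sum_{(u,v)\notin E^{\circ}}w(u,v)\,d(u,v)$, gives $\E{w(E')}\le\tfrac12 w(E^{\circ})+\tfrac12 L'<w(E^{\circ})$. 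Hence some outcome of the rounding satisfies $w(E')<w(E^{\circ})$, and such an outcome is produced in polynomial time by a standard derandomization of the Kleinberg--Tardos process (the sum of the current conditional cut-probabilities, weighted by $w$, is a supermartingale along the process, equals $w(E')$ at the end, and is at most $\tfrac12 w(E^{\circ})+\tfrac12 L'$ at the start).

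The constant $4$ is exactly what makes the two halves fit: it is the factor that weak stability can overcome in the comparison with ${\cal S}^*$, and it is simultaneously the product of the two factor-$2$ slacks in Lemma~\ref{lem:hol-rounding} (the bound $\Pr[\text{cut}]\le 2d$ on currently-uncut edges and the bound $\Pr[\text{not cut}]\ge\tfrac{1-d}{2}$ on currently-cut edges), so a smaller constant would break one side or the other. I expect the main difficulty to be in the rounding bookkeeping rather than in anything conceptual: one must check that it is precisely these \emph{crude} bounds — not the exact separation probabilities of the Kleinberg--Tardos process — that compose with the re-weighted objective, and that an explicit improving multiway cut (not merely its existence) can be extracted in polynomial time. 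The comparison with ${\cal S}^*$, by contrast, is the short computation above, entirely in the spirit of the proof of Theorem~\ref{thm:main-multiway}.
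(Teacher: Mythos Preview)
Your proposal is correct and follows essentially the same approach as the paper: both re-weight by $w'=w$ on $E^{\circ}$ and $w'=4w$ off $E^{\circ}$, solve the C\u{a}linescu--Karloff--Rabani LP, and apply the Kleinberg--Tardos rounding of Lemma~\ref{lem:hol-rounding}. Your direct bound $\E{w(E')}\le \tfrac12\bigl(w(E^{\circ})+L'\bigr)$ is exactly the paper's inequality $\E{w(E^{\circ})-w(E')}\ge \tfrac12(\mathsf{LP}_{+}-\mathsf{LP}_{-})$ rewritten (since $\mathsf{LP}_{+}-\mathsf{LP}_{-}=w(E^{\circ})-L'$), and your conditional-expectations derandomization is a clean substitute for the paper's instruction to enumerate the support of the rounding distribution.
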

\begin{proof}
Let $E^{\circ}$ be the set of edges cut by ${\cal S}^{\circ}$.
Define edge weights $w'(u,v)$ by 
$$
w'(u,v) = 
\begin{cases}
w(u,v), & \text{if} (u,v) \in E^{\circ},\\
4\,w(u,v), & \text{otherwise}.
\end{cases}
$$
We solve the LP relaxation~(\ref{eq:CKR-relaxation}) for Multiway Cut with weights $w'(u,v)$.
Let $\{\bar u\}$ be the LP solution. 
Consider the distribution of random cuts ${\cal S}' = (S_1',\dots, S_k')$
from Lemma~\ref{lem:hol-rounding}.
Let $E'$ be the set of edges cut by ${\cal S}'$. Similarly to the proof of Theorem~\ref{thm:main-multiway}, we define
\ifSODA
\begin{align*}
\mathsf{LP}_{+} &= \sum_{(u,v)\in E^{\circ}} w'(u,v) (1 - d(u,v)) \\&= \sum_{(u,v)\in E^{\circ}} w(u,v) (1 - d(u,v)),\\
\intertext{and}
\mathsf{LP}_{-} &= \sum_{(u,v)\in E\setminus E^{\circ}} w'(u,v) \, d(u,v) \\&= 4\sum_{(u,v)\in E\setminus E^{\circ}} w(u,v) \, d(u,v).
\end{align*}
\else
\begin{align*}
\mathsf{LP}_{+} &= \sum_{(u,v)\in E^{\circ}} w'(u,v) (1 - d(u,v)) = \sum_{(u,v)\in E^{\circ}} w(u,v) (1 - d(u,v)),\\
\mathsf{LP}_{-} &= \sum_{(u,v)\in E\setminus E^{\circ}} w'(u,v) \, d(u,v) = 4\sum_{(u,v)\in E\setminus E^{\circ}} w(u,v) \, d(u,v).
\end{align*}
\fi
We have, 
\ifSODA
\begin{align*}
\E{w(E^{\circ}\setminus E')} &= \sum_{(u,v) \in E^{\circ}} w(u,v) \Pr((u,v) \text{ is not cut}) \\
&
\geq 
\sum_{(u,v) \in E^{\circ}} w(u,v) (1-d(u,v))/2 \\
&= \mathsf{LP}_{+}/2,\\
\intertext{and}
\E{w(E'\setminus E^{\circ})} &= \sum_{(u,v) \in E\setminus E^*} w(u,v) \, \Pr((u, v) \text{ is cut}) \\
&\leq 
2\sum_{(u,v) \in E^{\circ}} w(u,v) d(u,v) = \mathsf{LP}_{-}/2.
\end{align*}
\else
\begin{align*}
\E{w(E^{\circ}\setminus E')} &= \sum_{(u,v) \in E^{\circ}} w(u,v) \Pr((u,v) \text{ is not cut}) 
\geq 
\sum_{(u,v) \in E^{\circ}} w(u,v) (1-d(u,v))/2 = \mathsf{LP}_{+}/2,\\
\E{w(E'\setminus E^{\circ})} &= \sum_{(u,v) \in E\setminus E^*} w(u,v) \, \Pr((u, v) \text{ is cut}) \leq 
2\sum_{(u,v) \in E^{\circ}} w(u,v) d(u,v) = \mathsf{LP}_{-}/2.
\end{align*}
\fi
Therefore, 
\ifSODA
\begin{align*}
\E{w(E^{\circ})- w(E')} &=  \E{w(E^{\circ}\setminus E')} - \E{w(E'\setminus E^{\circ})}\\
&= \frac{\mathsf{LP}_{+}- \mathsf{LP}_{-}}{2}.
\end{align*}
\else
$$\E{w(E^{\circ})- w(E')} =  \E{w(E^{\circ}\setminus E')} - \E{w(E'\setminus E^{\circ})}= \frac{\mathsf{LP}_{+}- \mathsf{LP}_{-}}{2}.$$
\fi
We have,
\ifSODA
\begin{align*}
\mathsf{LP}_{+} &{}- \mathsf{LP}_{-} = w'(E^{\circ}) - \sum_{(u,v) \in E} w'(u,v) \, d(u,v) \\
&\geq w'(E^{\circ}) - w'(E^*)  = w'(E^{\circ} \setminus E^*) - w'(E^* \setminus E^{\circ}),
\end{align*}
\else
$$\mathsf{LP}_{+} - \mathsf{LP}_{-} = w'(E^{\circ}) - \sum_{(u,v) \in E} w'(u,v) \, d(u,v) \geq w'(E^{\circ}) - w'(E^*) 
= w'(E^{\circ} \setminus E^*) - w'(E^* \setminus E^{\circ}),$$
\fi
since the LP value of solution $\{\bar u\}$ is at most the value of solution $E^*$ (of the multiway instance with weights $w'$).
Now if ${\cal S}^{\circ}\notin N$ then
\ifSODA
\begin{align*}
\mathsf{LP}_{+} - \mathsf{LP}_{-} &= w'(E^{\circ} \setminus E^*) - w'(E^* \setminus E^{\circ})\\
& = w(E^{\circ} \setminus E^*) - 4 w(E^* \setminus E^{\circ}) > 0.
\end{align*}
\else
$$\mathsf{LP}_{+} - \mathsf{LP}_{-} = w'(E^{\circ} \setminus E^*) - w'(E^* \setminus E^{\circ})
 = w(E^{\circ} \setminus E^*) - 4 w(E^* \setminus E^{\circ}) > 0.
$$
\fi
Therefore, $\E{w(E')-w(E^{\circ})} < 0$ and for some multiway cut in the distribution, we have 
$w(E')<w(E^{\circ})$. Note that we can efficiently go over all multiway cuts in the support of the
distribution. If we find a multiway cut with $w(E') < w(E^{\circ})$, we return it; otherwise, 
we output that ${\cal S}^{\circ}\in N$.
\end{proof}

\ifSODA\vspace{0.2cm}\noindent \textit{Proof of Theorem~\ref{thm:weakly-stable-multiway-prime}.}\\[0.25em]
\else
\begin{proof}[Proof of Theorem~\ref{thm:weakly-stable-multiway-prime}]
\fi
We start with an arbitrary feasible multiway cut ${\cal S}^{\circ}$ and iteratively 
improve it using the algorithm from Lemma~\ref{lem:improve-multiway}. Once the algorithm returns
that the current cut ${\cal S}^{\circ}$ lies in $N$, we output it.
Since the cost of the multiway cut decreases by at least $1$ in each iteration, and
the initial cost of ${\cal S}^{\circ}$ is polynomial in $n$, the algorithm terminates after
polynomially many steps.
\ifSODA

{\hfill\vbox{\hrule height.2pt\hbox{\vrule width.2pt height5pt \kern5pt \vrule width.2pt} \hrule height.2pt}}
\else
\end{proof}
\fi

\section{Discussion}\label{sec:Discussion}
In this paper, we presented algorithms for stable instances of Max Cut and Minimum Multiway Cut.
In conclusion, we briefly discuss what properties of these problems we used. 
We provide a sufficient condition under which there is an algorithm for stable instances
of a graph partitioning problem.

Consider a graph partitioning problem. Our goal is to partition a graph into several pieces, subject to certain constraints,
so as to minimize or maximize the weight of cut edges. Consider a metric relaxation for this problem. The relaxation
defines a metric $d(\cdot, \cdot)$ on the set of vertices. A combinatorial solution to the problem corresponds to a 
multicut metric $d(u,v)$: the distance between vertices in one piece is $0$, the distance between vertices in different
pieces is $1$. For Max Cut and Multiway Cut, 
we proved that \textit{the metric relaxation is integral when the instance is sufficiently stable};
this, in turn, implied the existence of polynomial-time robust algorithms for stable instances of these problems.
We summarize the properties that we used in the proof in the following meta-theorem.
\begin{theorem}[Meta-theorem]
Consider a graph partitioning problem and a metric relaxation for it. Suppose that there is a rounding scheme 
that given a graph $G=(V,E,w)$ and a metric $d(\cdot, \cdot)$ returns a feasible partition such that
for some $\alpha \geq 1$ and $\beta \geq 1$:
\begin{tabbing}
\ifSODA
\=\  \=\quad \ \=\ \=\ \=\ \=\kill
\else
\quad\=\quad\=\quad\quad\=\quad\=\quad\=\quad\=\kill
\fi
\>\textnormal{\textbf{For a cut minimization problem,}} \\
\>\>$1.$ \> $\Prob{u\textnormal{ and }v\textnormal{ are separated}} \leq \alpha d(u,v)$, \\
\>\>$2.$ \> $\Prob{u\textnormal{ and }v\textnormal{ are not separated}} \geq \beta^{-1}(1-d(u,v))$.\\[0.3em]
\>\textnormal{\textbf{For a cut maximization problem,}} \\
\>\>$1'$.  \> $\Prob{u\textnormal{ and }v\textnormal{ are separated}} \geq \alpha^{-1} d(u,v)$  \\
\>\>$2'$.  \> $\Prob{u\textnormal{ and }v\textnormal{ are not separated}} \leq \beta (1-d(u,v))$
\end{tabbing}
Then the metric relaxation is integral for $(\alpha\beta)$-stable instances of the problem.
Consequently, there is a robust polynomial-time algorithm for $(\alpha\beta)$-stable instances
(if the relaxation is polynomial-time solvable). Moreover, there is an algorithm
for $(\alpha\beta + \varepsilon)$-weakly stable instances of the problem.
(The meta-theorem also holds for a cut maximization/minimization problem with positive and negative weights.
Then we require that all four properties $1$, $1'$, $2$ and $2'$ hold.)
\end{theorem}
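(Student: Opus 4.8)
The statement is just an abstraction of the proofs of Theorems~\ref{thm:sdp-integral} and~\ref{thm:main-multiway}, which are already two instances of one argument; the plan is to redo that argument with $\alpha$ and $\beta$ in place of the concrete constants. I will first prove the integrality claim for a cut \emph{minimization} problem; the maximization case is symmetric, using properties $1',2'$ instead of $1,2$ and the inequality ``relaxation value $\ge$ integral optimum'' instead of ``$\le$''. Then the robust algorithm and the weak-stability algorithm follow by the same black-box reductions as in Corollary~\ref{cor:algorithm} and Lemmas~\ref{lem:improve-multiway}/\ref{lem:cut-improvement}.

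\textbf{Integrality.} Let the instance be $(\alpha\beta)$-stable, let $\calS^*$ be the optimal partition with cut set $E^*$, and suppose for contradiction that the relaxation has an optimal solution inducing a metric $d$ that is not a multicut metric. Run the rounding scheme on $d$ to get a random feasible partition $\calS'$ with cut set $E'$; since $d$ is not integral, $\calS'\ne\calS^*$ with positive probability. Set $\mathsf{LP}_{+}=\sum_{(u,v)\in E^*}w(u,v)(1-d(u,v))$ and $\mathsf{LP}_{-}=\sum_{(u,v)\in E\setminus E^*}w(u,v)\,d(u,v)$. Properties $2$ and $1$ give
\[
\E{w(E^*\setminus E')}\ \ge\ \beta^{-1}\mathsf{LP}_{+},\qquad
\E{w(E'\setminus E^*)}\ \le\ \alpha\,\mathsf{LP}_{-}.
\]
Stability says $w(E'\setminus E^*)>\alpha\beta\cdot w(E^*\setminus E')$ whenever $\calS'\ne\calS^*$ (and both sides vanish when $\calS'=\calS^*$), so, since $\calS'\ne\calS^*$ with positive probability, $\E{w(E'\setminus E^*)}>\alpha\beta\,\E{w(E^*\setminus E')}$. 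Chaining the three displays yields $\alpha\,\mathsf{LP}_{-}>\alpha\beta\cdot\beta^{-1}\mathsf{LP}_{+}=\alpha\,\mathsf{LP}_{+}$, i.e.\ $\mathsf{LP}_{-}>\mathsf{LP}_{+}$. But $\mathsf{LP}_{+}-\mathsf{LP}_{-}=w(E^*)-\sum_{(u,v)\in E}w(u,v)\,d(u,v)\ge 0$, since the relaxation value is at most the cost of the integral optimum $\calS^*$; contradiction. Hence every optimal relaxation solution is integral.

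\textbf{Consequences.} For the robust algorithm: solve the (polynomial-time) relaxation; if the optimum is integral output the corresponding partition (which is then a true optimum, since its cost equals the relaxation value, which bounds the integral optimum), else output $\perp$. By the integrality claim $\perp$ is never output on an $(\alpha\beta)$-stable instance, so this is robust. For $(\alpha\beta+\eps)$-weakly stable instances, mirror Lemmas~\ref{lem:improve-multiway}/\ref{lem:cut-improvement}: starting from an arbitrary feasible partition $\calS^{\circ}$ with cut set $E^{\circ}$, reweight by $w'(u,v)=w(u,v)$ on $E^{\circ}$ and $w'(u,v)=(\alpha\beta)w(u,v)$ off $E^{\circ}$ (swap the two roles for a maximization problem), solve the relaxation with $w'$, round, and test each partition in the support of the rounding distribution against the original weights. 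The computation above, run with $w'$, gives $\E{w(E^{\circ})-w(E')}\ge\beta^{-1}(\mathsf{LP}_{+}-\mathsf{LP}_{-})$ with $\mathsf{LP}_{+}-\mathsf{LP}_{-}\ge w'(E^{\circ})-w'(E^*)=w(E^{\circ}\setminus E^*)-\alpha\beta\,w(E^*\setminus E^{\circ})$, which is strictly positive exactly when $\calS^{\circ}\notin N$ (Definitions~\ref{def:weak-stability}/\ref{def:weak-stability-multiway}). So if no partition in the support improves on $\calS^{\circ}$, we may safely report $\calS^{\circ}\in N$; otherwise take the improvement and iterate. Termination is as before: the cost changes by $\ge 1$ per iteration for polynomially bounded integer weights, and the level-charging argument of Theorem~\ref{thm:weakly-stable} handles arbitrary weights at the cost of the extra $\eps$.

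\textbf{Main obstacle.} The one delicate point is the implication ``$d$ not integral $\Rightarrow$ the rounding distribution is supported on at least two partitions'': this is exactly what upgrades the weak inequality $\E{w(E'\setminus E^*)}\ge\alpha\beta\,\E{w(E^*\setminus E')}$ to the strict one needed for the contradiction, and it does \emph{not} follow from properties $1,2,1',2'$ alone, so it must be read as a further requirement on the rounding scheme (it is verified for both schemes used in the paper --- Lemma~\ref{lem:hol-rounding} and Fact~\ref{fact:cut-embedding}). The mixed positive/negative-weight version needs all four properties at once because then neither $E^*$ nor $E\setminus E^*$ has a fixed sign, so the $1,2$ bounds and the $1',2'$ bounds are each applied to a (sign-determined) subset of the same edge set; the arithmetic is otherwise identical.
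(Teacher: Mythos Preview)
Your proposal is correct and follows precisely the route the paper intends: the paper does not spell out a separate proof but simply states that the argument ``repeats the proofs of Theorems~\ref{thm:sdp-integral} and~\ref{thm:main-multiway}'', and your abstraction with $\alpha,\beta$ in place of the concrete constants is exactly that repetition. Your flagged caveat --- that ``$d$ non-integral $\Rightarrow$ the rounding distribution is supported on at least two partitions'' is an implicit extra hypothesis on the rounding scheme, verified in Lemma~\ref{lem:hol-rounding} and Fact~\ref{fact:cut-embedding} but not implied by properties $1,2,1',2'$ alone --- is a valid and useful observation that the paper leaves tacit.
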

The proof of this meta-theorem repeats the proofs of Theorems~\ref{thm:sdp-integral} and~\ref{thm:main-multiway}.
Note that if a rounding scheme just satisfies property 1 or $1'$ 
then there is an $\alpha$ approximation algorithm for the problem. However, properties $1$ and $1'$
alone do not imply that the relaxation is integral. For example, there is a rounding scheme for Max $k$-Cut  
satisfying $1'$, 
but there is no algorithm for stable instances of Max $k$-Cut (see Claim~\ref{claim:neg-max-k-cut}). 
Another example is Minimum Multicut. There is a rounding scheme for the standard LP relaxation of 
Minimum Multicut with $\alpha = O(\log n)$~\cite{LR}. However, this relaxation is not integral even for 
$(n - 2 -\varepsilon)$-stable instances of the problem (for every $\varepsilon> 0$). Indeed, consider an instance on $(n-1)$ terminals $s_1,\dots,
s_{n-1}$ and one extra vertex $u$; $u$ is connected with $s_1$ by an edge of weight $n-2-\varepsilon/2$ and with all other terminals by
edges of weight $1$. We need to separate every pair of terminals $s_i$ and $s_j$.
This instance is $(n-2-\varepsilon)$-stable. However, the optimal LP solution
is not integral: it assigns $d(u,s_i) = 1/2$ (for every $i$) and $d(s_i,s_j) = 1$ (for every $i\neq j)$.
\bibliographystyle{plain}
\bibliography{stable}

\appendix

\section{Proof of Lemma~\ref{lem:special-SDP}}\label{sec:proof-special-SDP}
\ifSODA
\noindent \textit{Proof of Lemma~\ref{lem:special-SDP}.}
\else
\begin{proof}[Proof of Lemma~\ref{lem:special-SDP}.]
\fi
Consider an integrality gap instance with gap $D_{\ell_2^2\to\ell_1}(n)$(see Fact~$\ref{fact:integrality-gap-SC}$).
Denote the set of vertices by $V_0$, the set of capacity edges by $E_c$, and edge capacities by $\mathrm{cap}_{uv}$.
Similarly, denote the set of demand pairs by $E_d$, and demands by $\mathrm{dem}_{uv}$. Let $\{\bar u\}$ be the optimal SDP solution 
for the standard SDP relaxation for $\cal I$. Since the integrality gap is $D_{\ell_2^2\to \ell_1}(n)$, we have
that for every cut $(A, \bar A=V_0\setminus A)$
$$
\frac{\mathrm{cap}(E_c(A,\bar A))}{\mathrm{dem}(E_d(A,\bar A))} \geq D_{\ell_2^2\to \ell_1}\cdot
\frac{\sum_{(u,v)\in E_c} \mathrm{cap}_{uv}\cdot\|\bar u - \bar v\|^2}{\sum_{(u,v)\in E_d} \mathrm{dem}_{uv}\cdot\|\bar u - \bar v\|^2}.
$$
By rescaling demands we may assume without loss of generality that 
$$\frac{D_{\ell_2^2\to \ell_1}-\varepsilon}{D_{\ell_2^2\to \ell_1}} <
\frac{\sum_{(u,v)\in E_c} \mathrm{cap}_{uv}\cdot \|\bar u - \bar v\|^2}{\sum_{(u,v)\in E_d} \mathrm{dem}_{uv}\cdot \|\bar u - \bar v\|^2} < 1.$$
Then we have (\ref{ineq:combinatorial}).

We now transform vectors $\bar u$ to unit vectors $\tilde u$ such that $\set{\pm \tilde u}$ is an $\ell_2^2$ space and 
vectors $\tilde u$ satisfy inequality~(\ref{ineq:vector}).
Choose a sufficiently small $\delta>0$ (which we specify later) and let $\hat u' = \hat u + \delta \cdot \bar e_u$, where vectors $\bar e_u$ are unit vectors
orthogonal to each other and all vectors $\bar v$. All vectors $\bar u'$ are in general position (that is, no $r+2$ vectors 
lie in an $r$ dimensional affine subspace). Therefore all vectors $\bar u'$ lie on some sphere. Denote its center by $\bar c$ and radius by
$R$. Let $\bar z$ be a unit vector orthogonal to all vectors $\bar u'$. Finally, define vectors $\tilde u$,
$$\tilde u = \frac{\sqrt{3}\,\bar z}{2}  + \frac{\bar u' - \bar c}{2R}.$$
Note that vectors $(\bar u' - \bar c)/R$ are unit vectors, and therefore vectors $\tilde u$ are also unit vectors.
Now 
$$\|\tilde u - \tilde v\|^2 = \left\|\frac{\bar u' - \bar v'}{2R}\right\|^2 = \frac{\|\bar u - \bar v'\|^2 + 2\delta^2}{4R^2}.$$
Therefore, when $\delta \to 0$,
\ifSODA
$\frac{\sum_{(u,v)\in E_c} \mathrm{cap}_{uv}\cdot \|\tilde u - \tilde v\|^2}{\sum_{(u,v)\in E_d} \mathrm{dem}_{uv}\cdot \|\tilde u - \tilde v\|^2}=
\frac{\sum_{(u,v)\in E_c} \mathrm{cap}_{uv}\cdot\|\bar u - \bar v\|^2}{\sum_{(u,v)\in E_d} \mathrm{dem}_{uv}\cdot\|\bar u - \bar v\|^2} + O(\delta^2)
$.
\else
$$\frac{\sum_{(u,v)\in E_c} \mathrm{cap}_{uv}\cdot \|\tilde u - \tilde v\|^2}{\sum_{(u,v)\in E_d} \mathrm{dem}_{uv}\cdot \|\tilde u - \tilde v\|^2}=
\frac{\sum_{(u,v)\in E_c} \mathrm{cap}_{uv}\cdot\|\bar u - \bar v\|^2}{\sum_{(u,v)\in E_d} \mathrm{dem}_{uv}\cdot\|\bar u - \bar v\|^2} + O(\delta^2).
$$
\fi
We choose $\delta > 0$ so that
$$
\frac{\sum_{(u,v)\in E_c} \mathrm{cap}_{uv}\cdot \|\tilde u - \tilde v\|^2}{\sum_{(u,v)\in E_d} \mathrm{dem}_{uv}\cdot \|\tilde u - \tilde v\|^2} < 1.
$$
We get that inequality~(\ref{ineq:vector}) holds. Finally, we verify that vectors $\set{\pm\tilde u}$ form an $\ell_2^2$ space.
We have,
\ifSODA
\begin{align*}
\|\tilde u - \tilde v\|^2 &{} + \|\tilde v - \tilde w\|^2 = 
\frac{\|\bar u - \bar v\|^2 + \|\bar v - \bar w\|^2 + 4\delta^2}{4R^2}\\
&\geq
\frac{\|\bar u - \bar w\|^2 + 4\delta^2}{4R^2} > \|\tilde u - \tilde w\|^2,\\
\|\tilde u - \tilde v\|^2 &{}+ \|\tilde v + \tilde w\|^2 =
\|\tilde u - \tilde v\|^2 +  (4 - \|\tilde v - \tilde w\|^2) \\
& > 
(\|\tilde u - \tilde w\|^2 + \|\tilde w - \tilde v\|^2) + (4 - \|\tilde v - \tilde w\|^2)\\
&= \|\tilde u + \tilde w\|^2,\\
\|\tilde u + \tilde v\|^2 &{}+ \|\tilde v + \tilde w\|^2 = 4 - \|\tilde u - \tilde v\|^2 + 4 - \|\tilde v - \tilde w\|^2\\
&\geq 8 - 2 > \|\tilde u - \tilde w\|^2.
\end{align*}
\else
\begin{align*}
\|\tilde u - \tilde v\|^2 + \|\tilde v - \tilde w\|^2 &= \frac{\|\bar u - \bar v\|^2 + \|\bar v - \bar w\|^2 + 4\delta^2}{4R^2}
\geq
\frac{\|\bar u - \bar w\|^2 + 4\delta^2}{4R^2} > \|\tilde u - \tilde w\|^2,\\
\|\tilde u - \tilde v\|^2 + \|\tilde v + \tilde w\|^2 &=
\|\tilde u - \tilde v\|^2 +  (4 - \|\tilde v - \tilde w\|^2) \\
&> 
(\|\tilde u - \tilde w\|^2 + \|\tilde w - \tilde v\|^2) + (4 - \|\tilde v - \tilde w\|^2)= \|\tilde u + \tilde w\|^2,\\
\|\tilde u + \tilde v\|^2 + \|\tilde v + \tilde w\|^2 &= 4 - \|\tilde u - \tilde v\|^2 + 4 - \|\tilde v - \tilde w\|^2
\geq 8 - 2 > \|\tilde u - \tilde w\|^2.
\end{align*}
\fi
In the the last line, we used that $\|\tilde u - \tilde v\|^2 = \left\|\frac{\bar u' - \bar c}{2R} - \frac{\bar v' - \bar c}{2R}\right\|^2 \leq 1$
for every $u$ and $v$.
\ifSODA
{\hfill\vbox{\hrule height.2pt\hbox{\vrule width.2pt height5pt \kern5pt \vrule width.2pt} \hrule height.2pt}}
\else
\end{proof}
\fi

\section{Remark on Correlation Clustering}\label{sec:corrclust}
In this section, we briefly describe how our results extend to stable instances of the Correlation Clustering problem. The Correlation Clustering problem was introduced 
by Bansal, Blum, and Chawla~\cite{bbc} and later studied by Charikar and Wirth~\cite{CW}, and others. Our positive results for stable
and weakly stable instances of Max Cut also apply to stable and weakly stable instances of 2-Correlation Clustering.
Our negative result when Max $k$-Cut shows that there is no exact polynomial-time algorithm for $\infty$-stable instances 
of $k$-Correlation Clustering for $k \geq 3$ unless $RP=NP$.

\begin{Definition}
An instance of the $k$-Correlation Clustering problem is a weighted graph $G = (V,E,w)$ in which every edge 
is labeled with either ``$+$'' or ``$-$''. We denote the set of edges labeled with  ``$+$'' by $E^+$
and the set of edges labeled with ``$-$'' by $E^-$.
Consider a clustering $\cal C$ of $V$ into $k$ disjoint clusters. For every $u \in V$, let ${\cal C}(u)$ be the cluster that $u$ belongs to.
We define the total weight of agreements and disagreements as follows:
\begin{align*}
\mathsf{Agree}_G({\cal C}) &= \sum_{\substack{(u,v)\in E^+\\ {\cal C}(u) = {\cal C}(u)}} w_{(u,v)} + \sum_{\substack{(u,v)\in E^-\\ {\cal C}(u) \neq {\cal C}(u)}} w_{(u,v)}\\
\mathsf{DisAgree}_G({\cal C}) &= \sum_{\substack{(u,v)\in E^+\\ {\cal C}(u) \neq {\cal C}(u)}} w_{(u,v)} + \sum_{\substack{(u,v)\in E^-\\ {\cal C}(u) = {\cal C}(u)}} w_{(u,v)}.
\end{align*}

The Correlation Clustering problems asks to find a clustering of $V$ into $k$ clusters that maximizes the total weight of agreements, $\mathsf{Agree}_G({\cal C})$.
\end{Definition}

\begin{table*}
\centering
\begin{tabular}{|l|l|}
\hline \textbf{variant}     & \textbf{objective}\\ \hline
Maximum Agreement    & maximize  $\mathsf{Agree}_G({\cal C})$\\
Minimum Disagreement & minimize $\mathsf{DisAgree}_G({\cal C})$\\
Maximum Correlation  & maximize $\mathsf{Agree}_G({\cal C}) - \mathsf{DisAgree}_G({\cal C})$\\ \hline
\end{tabular}
\caption{The table shows different versions of the Correlation Clustering problem studied in the literature.}
\label{table:corrclust}
\end{table*}

We note that different variants of the problem have been studied in the literature (see Table~\ref{table:corrclust}). 
A good \textit{approximate} solution for one variant is not necessarily a good approximate solution for the other variants.
However, an \textit{optimal} solution for one variant is also an optimal solution for all other variants.
Thus an instance of Correlation Clustering is $\gamma$-stable w.r.t. one objective if and only if it is 
$\gamma$-stable w.r.t. each of them. Since in this paper we study exact algorithms for $\gamma$-stable instances, 
all three variants of the problem are equivalent for our purposes. We will assume that our objective is to maximize 
$\mathsf{Agree}_G({\cal C})$.

\paragraph{Positive Results.} Our positive results for stable and weakly stable instances of Max Cut also apply to stable and weakly stable instances of
$2$-Correlation Clustering. The proofs of Theorem~\ref{thm:sdp-integral}, Corollary~\ref{cor:algorithm}, and 
Theorem~\ref{thm:weakly-stable} can very easily be modified to deal with the $2$-Correlation Clustering problem.
We do not describe the necessary modifications in this paper. Instead, we point out that there is a simple reduction that maps
$\gamma$-stable instances of $2$ Correlation Clustering to $\gamma$-stable instances of Max Cut, and weakly stable instances to 
weakly stable instances. Therefore, every algorithm for solving $\gamma$-stable or $\gamma$-weakly stable instances of Max Cut
can be used to solve $\gamma$-stable or $\gamma$-weakly stable instances of 2-Correlation Clustering.
We now briefly describe the reduction.
Given a graph $G = (V,E^+ \cup E^-,w)$, the reduction constructs a graph $G'=(V', E', w')$ with
\ifSODA
\begin{itemize}
\item $V' = \set{u:u\in V} \cup \set{u':u\in V}$,
\item $E'  = \set{(u,v): (u,v) \in E^-} \cup \set{(u',u): u\in V} \cup \set{(u',v), (u,v'): (u,v) \in E^+}$,
\item $w'(u,v) = w(u,v)$, $w'(u,v') = w(u,v) /2$, $w(u,u') = W_{\infty}$,
\end{itemize}
\else
\begin{align*}
V' &= \set{u:u\in V} \cup \set{u':u\in V}, \\
E' & = \set{(u,v): (u,v) \in E^-} \cup \set{(u',u): u\in V} \cup \set{(u',v), (u,v'): (u,v) \in E^+},\\
w'(u,v) &= w(u,v), \quad w'(u,v') = w(u,v) /2, \quad w(u,u') = W_{\infty},
\end{align*}
\fi
where $W_{\infty}$ is large enough (e.g. $W_{\infty} = 2\gamma \sum_{e\in E} w_e$). Since the weight of edges $(u,u')$ is very large, every maximum cut in $G'$ cuts all edges $(u,u')$, even if we increase
some edge weights by a factor at most $\gamma$. For every 2-clustering $(S,\bar S)$ of $G$, consider the corresponding cut $(S', \bar S')$ in $G'$
\begin{align*}
S' &= \set{u: u\in S} \cup \set{u': u\in \bar S} \\
\bar S' &= \set{u: u\in \bar S} \cup \set{u': u\in S}.
\end{align*}
Then $\mathsf{Agree}_G((S, \bar S))= w'(E'(S', \bar S')) - nW_{\infty}$ (note that the term $nW_{\infty}$ does not depend on $S$).
We get that $(S,\bar S)$ is an optimal 2-clustering if and only if $(S', \bar S')$ is a maximum cut in $G'$.

\paragraph{Negative Results.} The Max $k$-Cut problem is a special case of the $k$-Correlation Clustering problem, in which all edges
are labeled with ``-''. Therefore, the result of Theorem~\ref{claim:neg-max-k-cut} applies to the $k$-Correlation Clustering problem when $k \geq 3$.

\section{Different Definitions of Weak Stability} \label{apx:equivalence}
In this section, we first prove that Definitions~\ref{def:weak-stabilityA} and~\ref{def:weak-stability} are equivalent,
and then that Definitions~\ref{def:weak-stability-multiwayA} and~\ref{def:weak-stability-multiway} are equivalent.
\begin{claim}Definitions~\ref{def:weak-stabilityA} and~\ref{def:weak-stability} are equivalent.
\end{claim}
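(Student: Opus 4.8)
The plan is to prove the two implications separately, following the scheme of the equivalence between Definitions~\ref{def:stability1} and~\ref{def:stability2}, and reusing the edge-set bookkeeping already used in the proof of Theorem~\ref{thm:sdp-integral} (and depicted in Figure~\ref{fig:cuts}). We read Definition~\ref{def:weak-stabilityA} with $(T,\bar T)$ ranging over all cuts not in $N$. The single identity driving both directions is: for any two cuts $(S,\bar S)$ and $(T,\bar T)$, the edge set $E(S,\bar S)$ is the disjoint union of $E(S,\bar S)\cap E(T,\bar T)$ and $E(S,\bar S)\setminus E(T,\bar T)$, and $E(T,\bar T)$ is the disjoint union of $E(S,\bar S)\cap E(T,\bar T)$ and $E(T,\bar T)\setminus E(S,\bar S)$; hence, for any nonnegative weights $w'$,
$$w'(E(S,\bar S)) - w'(E(T,\bar T)) = w'(E(S,\bar S)\setminus E(T,\bar T)) - w'(E(T,\bar T)\setminus E(S,\bar S)),$$
since the contribution of the common part $E(S,\bar S)\cap E(T,\bar T)$ cancels.

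For the direction Definition~\ref{def:weak-stability} $\Rightarrow$ Definition~\ref{def:weak-stabilityA}, let $G'=(V,E,w')$ be an arbitrary $\gamma$-perturbation, so $w(e)\le w'(e)\le\gamma\, w(e)$ for every edge $e$, and let $(T,\bar T)\notin N$. Substituting $w'$ into the identity and then using $w'\ge w$ on the edges of $E(S,\bar S)\setminus E(T,\bar T)$ and $w'\le\gamma w$ on the edges of $E(T,\bar T)\setminus E(S,\bar S)$ gives
$$w'(E(S,\bar S)) - w'(E(T,\bar T)) \ge w(E(S,\bar S)\setminus E(T,\bar T)) - \gamma\, w(E(T,\bar T)\setminus E(S,\bar S)) > 0,$$
where the last inequality is precisely the hypothesis of Definition~\ref{def:weak-stability}. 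So $(S,\bar S)$ beats $(T,\bar T)$ in $G'$, which is what Definition~\ref{def:weak-stabilityA} asks.

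For the converse, Definition~\ref{def:weak-stabilityA} $\Rightarrow$ Definition~\ref{def:weak-stability}, fix $(T,\bar T)\notin N$ and consider the particular perturbation $w^{(T)}$ given by $w^{(T)}(e)=\gamma\, w(e)$ for $e\in E(T,\bar T)\setminus E(S,\bar S)$ and $w^{(T)}(e)=w(e)$ otherwise; since $\gamma\ge 1$ this is a legal $\gamma$-perturbation, and it is the one least favorable to $(S,\bar S)$ against $(T,\bar T)$, so it suffices to test it. Because $E(T,\bar T)\setminus E(S,\bar S)$ is disjoint from $E(S,\bar S)$, we have $w^{(T)}(E(S,\bar S))=w(E(S,\bar S))$, while $w^{(T)}(E(T,\bar T))=w(E(S,\bar S)\cap E(T,\bar T))+\gamma\, w(E(T,\bar T)\setminus E(S,\bar S))$. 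Feeding this into the inequality guaranteed by Definition~\ref{def:weak-stabilityA} and cancelling $w(E(S,\bar S)\cap E(T,\bar T))$ from both sides yields $w(E(S,\bar S)\setminus E(T,\bar T))\ge\gamma\, w(E(T,\bar T)\setminus E(S,\bar S))$, the conclusion of Definition~\ref{def:weak-stability}.

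The only genuinely delicate point --- and thus the ``hard part'' of the write-up --- is reconciling strict and non-strict inequalities: Definition~\ref{def:weak-stability} states its inequality strictly, and the forward direction delivers strictness for free, whereas the backward direction, when run from the ``$\ge$'' form, yields only ``$\ge$''. The fix is the standard one from Bilu and Linial: Definition~\ref{def:weak-stabilityA} is meant in the sense that $(S,\bar S)$ is the \emph{unique} maximum cut outside $N$ under every $\gamma$-perturbation (paralleling the unique-maximum phrasing of Definition~\ref{def:stability1}), and applying this uniqueness to the extremal perturbation $w^{(T)}$ upgrades the derived inequality to a strict one. Everything else is routine set arithmetic, identical to the manipulations in the proof of Theorem~\ref{thm:sdp-integral}.
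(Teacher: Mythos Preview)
Your proof is correct and follows essentially the same approach as the paper: the identity $w'(E(S,\bar S))-w'(E(T,\bar T))=w'(E(S,\bar S)\setminus E(T,\bar T))-w'(E(T,\bar T)\setminus E(S,\bar S))$, together with an extremal perturbation for one direction and the sandwich $w\le w'\le\gamma w$ for the other. Your extremal perturbation scales only $E(T,\bar T)\setminus E(S,\bar S)$ whereas the paper scales all of $E(T,\bar T)$, but this is immaterial since the common part cancels; more notably, your explicit discussion of the strict-versus-nonstrict issue is actually more careful than the paper's own write-up, which silently uses $>$ in the proof even though Definition~\ref{def:weak-stabilityA} as stated only gives $\geq$.
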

\begin{proof}
Let $G=(V,E,w)$ be a $(\gamma,N)$-stable instance according to Definition~\ref{def:weak-stabilityA}. 
Let $(S,\bar S)$ be the maximum cut in $G$. Consider an arbitrary
cut $(T, \bar T)$ not in $N$. Define a $\gamma$-perturbation $G'$ of $G$ by
$w'(e) = \gamma w(e)$ if $e$ is cut by $(T,\bar T)$ and $w'(e) = w(e)$, otherwise.
Since $G$ is $\gamma$-stable, we have $w'(E(S,\bar S)) > w'(E(T,\bar T))$, and therefore,
$w'(E(S,\bar S)\setminus E(T,\bar T)) > w'(E(T,\bar T)\setminus E(S,\bar S))$. We conclude that
\ifSODA
\begin{multline*}
w(E(S,\bar S)\setminus E(T,\bar T)) = w'(E(S,\bar S)\setminus E(T,\bar T)) \\ > w'(E(T,\bar T)\setminus E(S,\bar S)) = \gamma w(E(T,\bar T)\setminus E(S,\bar S)).
\end{multline*}
\else
$$
w(E(S,\bar S)\setminus E(T,\bar T)) = w'(E(S,\bar S)\setminus E(T,\bar T)) > w'(E(T,\bar T)\setminus E(S,\bar S)) = \gamma w(E(T,\bar T)\setminus E(S,\bar S)).
$$
\fi
Therefore, $G$ is a $(\gamma, N)$-weakly stable instance according to Definition~\ref{def:weak-stability}.

Assume now that $G$ is a $(\gamma, N)$-weakly stable instance according to Definition~\ref{def:weak-stability}.
We need to show that for every $\gamma$-perturbation $G' = (V,E,w')$ and every cut $(T,\bar T)$, the following inequality holds:  $w'(E(S,\bar S)) > w'(E(T,\bar T))$, or, equivalently,
$w(E(S,\bar S)\setminus E(T,\bar T)) > w'(E(T,\bar T)\setminus E(S,\bar S))$.
We have,
\ifSODA
\begin{multline*}
w'(E(S,\bar S)\setminus E(T,\bar T)) \geq w(E(S,\bar S)\setminus E(T,\bar T)) \\ > \gamma w(E(T,\bar T)\setminus E(S,\bar S))
\geq w'(E(T,\bar T)\setminus E(S,\bar S)),
\end{multline*}
\else
$$
w'(E(S,\bar S)\setminus E(T,\bar T)) \geq w(E(S,\bar S)\setminus E(T,\bar T)) > \gamma w(E(T,\bar T)\setminus E(S,\bar S))
\geq w'(E(T,\bar T)\setminus E(S,\bar S)),
$$
\fi
as required.
\end{proof}

\ifSODA
\newpage
\fi

\begin{claim}Definitions~\ref{def:weak-stability-multiwayA} and~\ref{def:weak-stability-multiway} are equivalent.
\end{claim}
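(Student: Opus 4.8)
The plan is to follow essentially verbatim the structure of the equivalence proof for Definitions~\ref{def:weak-stabilityA} and~\ref{def:weak-stability}, adjusting only for the fact that Minimum Multiway Cut is a minimization problem. Throughout, let $G=(V,E,w)$ have minimum multiway cut ${\cal S}^*$ cutting the edge set $E^*$, and let $N\ni{\cal S}^*$ be the given neighborhood of ${\cal S}^*$.

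For the forward implication (Definition~\ref{def:weak-stability-multiwayA} $\Rightarrow$ Definition~\ref{def:weak-stability-multiway}), I would fix an arbitrary multiway cut ${\cal S}'\notin N$ cutting $E'$ and exhibit a single ``worst-case'' $\gamma$-perturbation $G'=(V,E,w')$ defined by $w'(e)=\gamma\,w(e)$ for $e\in E^*$ and $w'(e)=w(e)$ otherwise; this is a legitimate $\gamma$-perturbation. Since ${\cal S}'\notin N$, Definition~\ref{def:weak-stability-multiwayA} gives $w'(E')>w'(E^*)$, and cancelling the common edges $E'\cap E^*$ yields $w'(E'\setminus E^*)>w'(E^*\setminus E')$. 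Now every edge of $E'\setminus E^*$ is \emph{not} cut by ${\cal S}^*$, hence has $w'$-weight $w(e)$, while every edge of $E^*\setminus E'$ \emph{is} cut by ${\cal S}^*$, hence has $w'$-weight $\gamma\,w(e)$; substituting gives $w(E'\setminus E^*)>\gamma\,w(E^*\setminus E')$, which is precisely Definition~\ref{def:weak-stability-multiway}.

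For the reverse implication I would take an arbitrary $\gamma$-perturbation $G'=(V,E,w')$ and an arbitrary ${\cal S}'\notin N$ cutting $E'$, and verify $w'(E')>w'(E^*)$, equivalently $w'(E'\setminus E^*)>w'(E^*\setminus E')$, by chaining the edgewise bounds $w\le w'\le\gamma w$ together with Definition~\ref{def:weak-stability-multiway}:
\[
w'(E'\setminus E^*)\;\ge\; w(E'\setminus E^*)\;>\;\gamma\,w(E^*\setminus E')\;\ge\; w'(E^*\setminus E').
\]
Since $G'$ and ${\cal S}'$ were arbitrary, this establishes Definition~\ref{def:weak-stability-multiwayA}.

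The calculation is entirely routine; the only point that needs care is choosing the correct extremal perturbation. Because Multiway Cut is a minimization problem, the perturbation that makes ${\cal S}^*$ hardest to keep optimal inflates the edges cut by ${\cal S}^*$ --- not the edges cut by the competitor ${\cal S}'$, which is the opposite of the choice used in the Max Cut argument. Once that direction is fixed, the disjointness of $E'\setminus E^*$ and $E^*\setminus E'$ makes the weight bookkeeping immediate, so there is no real obstacle here.
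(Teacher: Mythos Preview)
Your proof is correct and follows essentially the same approach as the paper. The only cosmetic difference is that in the forward direction the paper defines the extremal perturbation by inflating the edges \emph{not} in $E'$ (rather than the edges in $E^*$), but this agrees with your choice on the symmetric difference $E^*\triangle E'$ and hence yields the identical inequality.
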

\begin{proof}
Let $G=(V,E,w)$ be a $(\gamma,N)$-stable instance according to Definition~\ref{def:weak-stabilityA}. 
Let ${\cal S}^*$ be the minimum multiway cut. Denote the set of edges cut by ${\cal S}^*$ by $E^*$.
Consider an arbitrary multiway cut ${\cal S}'$ not in $N$. Denote the set of edges cut by ${\cal S}'$ by $E'$.

Define a $\gamma$-perturbation $G'$ of $G$ by
$w'(e) = w(e)$ if $e\in E'$ and $w'(e) = \gamma w(e)$, otherwise.
Since $G$ is $\gamma$-stable, we have $w'(E^*) < w'(E')$, and therefore,
$w'(E^*\setminus E') < w'(E' \setminus E^*)$. We conclude that
$$\gamma w(E^*\setminus E') = w'(E^*\setminus E') < w'(E'\setminus E^*) = w(E'\setminus E^*).$$
Therefore, $G$ is a $(\gamma, N)$-weakly stable instance according to Definition~\ref{def:weak-stability-multiway}.

Assume now that $G$ is a $(\gamma, N)$-weakly stable instance according to Definition~\ref{def:weak-stability-multiway}.
We need to show that for every $\gamma$-perturbation $G' = (V,E,w')$ and every multicut ${\cal S}'$, the following inequality holds: $w'(E^*) < w'(E')$, or, equivalently,
$w(E^*\setminus E') < w'(E'\setminus E^*)$.
We have,
$$w'(E^*\setminus E') \leq \gamma w(E^*\setminus E') < w(E'\setminus E^*) \leq w'(E'\setminus E^*).$$
as required.
\end{proof}

\end{document}